\newcommand{\cmark}{\ding{51}}%
\newcommand{\xmark}{\ding{55}}%
\newtheorem{assumption}{Assumption}
\DeclareMathOperator*{\argmax}{argmax}
\DeclareMathOperator*{\Argmax}{Argmax}
\DeclareMathOperator*{\Argmin}{Argmin}
\DeclareSymbolFont{extraup}{U}{zavm}{m}{n}
\DeclareMathSymbol{\varheart}{\mathalpha}{extraup}{86}
\DeclareMathSymbol{\vardiamond}{\mathalpha}{extraup}{87}
\newcommand{\shortminus}{\mkern0.1mu{-}\mkern0.1mu} 
\newcommand{\todo}[1]{\textcolor{red}{TODO: #1}}
\let\svthefootnote\thefootnote
\newcommand\blankfootnote[1]{%
  \let\thefootnote\relax\footnotetext{#1}%
  \let\thefootnote\svthefootnote%
}
\newif\ifExtendedVersion
\begin{document}
\title{Multi-defender Security Games with Schedules}
%
%
\author{Zimeng Song \inst{1}
\and
Chun Kai Ling \inst{2} 
\and
Fei Fang\inst{2}}
%
%
\institute{Independent Researcher,
\email{zmsongzm@gmail.com} \and
Carnegie Mellon University,
\email{chunkail@cs.cmu.edu, feif@cs.cmu.edu}}
%
\maketitle              
\begin{abstract}
Stackelberg Security Games are often used to model strategic interactions in high-stakes security settings. The majority of existing models focus on single-defender settings where a single entity assumes command of all security assets. However, many realistic scenarios feature \textit{multiple heterogeneous defenders} with their own interests and priorities embedded in a more complex system. Furthermore, defenders rarely choose targets to protect. Instead, they have a multitude of defensive resources or \textit{schedules} at its disposal, each with different protective capabilities.
In this paper, we study security games featuring multiple defenders and schedules simultaneously. We show that unlike prior work on multi-defender security games, the introduction of schedules can cause non-existence of equilibrium even under rather restricted environments. We prove that under the mild restriction that any subset of a schedule is also a schedule, non-existence of equilibrium is not only avoided, but can be computed in polynomial time in games with two defenders. Under additional assumptions, our algorithm can be extended to games with more than two defenders and its computation scaled up in special classes of games with compactly represented schedules such as those used in patrolling applications.
Experimental results suggest that our methods scale gracefully with game size, making our algorithms amongst the few that can tackle multiple heterogeneous defenders.


\keywords{Security Games \and Stackelberg Equilibrium \and Game Theory}
\end{abstract}

\section{Introduction}
\blankfootnote{Equal contribution between authors Song and Ling.}
The past decades have seen a wave of interest in Stackelberg Security Games (SSG), with applications to infrastructure \cite{pita2009using}, wildlife poaching \cite{fang2017paws,fang2016deploying,fang2015security} and cybersecurity \cite{zarreh2018game,pawlick2019game,attiah2018game}.
SSGs are played between a single defender allocating defensive resources over a finite set of targets and an attacker who, after observing this allocation, best responds by attacking the target least defended \cite{tambe2011security}.  

Numerous variants of SSGs better reflecting the real world have been proposed.
Amongst the most well-researched extension are settings with \textit{scheduling constraints}. Instead of guarding a single target, the defender chooses a \textit{set} of targets instead, making it possible to model real world problems such as planning patrols for anti-poaching \cite{fang2017paws,fang2015security,wang2019deep,basak2016combining} and the US coast guard \cite{shieh2012protect,pita2008armor}, infrastructure protection \cite{jain2010security}, and optimal placement of police checkpoints \cite{jain2013security}.
Another less explored extension is the \textit{multi-defender setting}, where multiple defenders (e.g., city, state and federal law enforcement), each utilize distinct resources in tandem, potentially resulting in miscoordination \cite{jiang2013defender}. 
Recent work in \cite{gan2018stackelberg,gan2022defense} solved the challenging problem of finding equilibrium when defenders are heterogeneous, both in the coordinated and uncoordinated case.

Unfortunately, there is almost no literature on settings exhibiting both scheduling \textit{and} multiple defenders. 
Our work fills this non-trivial gap.
In contrast to the positive results of \cite{gan2018stackelberg,gan2022defense}, we show that equilibrium may not exist with the inclusion of scheduling constraints, even under extremely stringent constraints on other aspects of the problem. 
We then guarantee existence of equilibrium in two defender settings when restricted to schedules satisfying the \textit{subset-of-a-schedule-is-also-a-schedule} structure \cite{korzhyk2010complexity}, on top of other mild restrictions. 
We construct polynomial time algorithms for computing equilibrium in such restricted settings and propose two extensions. The first utilizes an additional assumption of Monotone Schedules to handle the general multi-defender setting.
The second scales to scenarios with a large (possibly exponential) number of schedules in structured domains such as patrolling. 
Empirically, our algorithms scale gracefully, making them viable for use in the real world.

\section{Background and Related Work}
\label{sec:background}
Our work is motivated by the security application played on the layered network in Figure~\ref{fig:network-path}, where vertices represent neighborhoods and edges represent connecting roads.
Distinct law enforcement agencies (e.g., local and federal police) patrol along a path starting from vertex $a$ to vertex $e$. Patrols provide defence, or \textit{coverage} to the neighborhoods they pass (Figure~\ref{fig:network-path}). 
By randomizing or splitting their patrols, agencies can broaden coverage at the expense of thinning them (Figure~\ref{fig:network-split}). Coverage at each neighborhood is accumulated over patrols (Figure~\ref{fig:network-combined}).
An attacker chooses a single neighborhood to attack based on this coverage, giving negative reward to law enforcement agencies. Neighborhoods and law enforcement are non-homogeneous: neighborhoods differ in density and demographics, local police value local businesses and inhabitants, while federal agencies focus on federal government assets.  
Given these competing objectives, how should law enforcement agencies plan their patrols?
\begin{figure}[t]
\centering
\begin{subfigure}[t]{0.32\textwidth}
    \centering
\begin{tikzpicture}[>=stealth, every node/.style={circle, draw, minimum size=12pt, inner sep=0pt}]
  \definecolor{C4}{RGB}{0, 0, 0}
\definecolor{C3}{RGB}{60, 60, 60}
\definecolor{C2}{RGB}{120, 120, 120}
\definecolor{C1}{RGB}{180, 180, 180}
\definecolor{C0}{RGB}{240, 240, 240}
  
  \node [fill=C2](source) at (0,0.33)     {$a$};
  \node (m1n1) at (0.8,-.66)        {};
  \node (m1n2) at (0.8,0)         {};
  \node [fill=C2](m1n3) at (0.8,.66)         {$b_2$};
  \node (m1n4) at (0.8,1.33)         {};
  \node (m2n1) at (1.6,-.66)        {};
  \node [fill=C2](m2n2) at (1.6,0)         {$c_3$};
  \node (m2n3) at (1.6,.66)         {};
  \node (m2n4) at (1.6,1.33)         {};
  \node [fill=C2](m3n1) at (2.4,-.66)        {$d_4$};
  \node (m3n2) at (2.4,0)         {};
  \node (m3n3) at (2.4,.66)         {};
  \node (m3n4) at (2.4,1.33)         {};
  \node [fill=C2](sink) at (3.2,0.33)       {$e$};

  \draw[->] (source) -- (m1n1);
  \draw[->] (source) -- (m1n2);
  \draw[->,line width=1mm, green] (source) -- (m1n3);
  \draw[->] (source) -- (m1n4);

  \draw[->,line width=1mm, green] (m3n1) -- (sink);
  \draw[->] (m3n2) -- (sink);
  \draw[->] (m3n3) -- (sink);
  \draw[->] (m3n4) -- (sink);
  
  \draw[->] (m1n1) -- (m2n1);
  \draw[->] (m1n1) -- (m2n2);
  
  \draw[->] (m1n2) -- (m2n1);
  \draw[->] (m1n2) -- (m2n2);
  \draw[->] (m1n2) -- (m2n3);
  
  \draw[->,line width=1mm, green] (m1n3) -- (m2n2);
  \draw[->] (m1n3) -- (m2n3);
  \draw[->] (m1n3) -- (m2n4);
  
  \draw[->] (m1n4) -- (m2n3);
  \draw[->] (m1n4) -- (m2n4);

  \draw[->] (m2n1) -- (m3n1);
  \draw[->] (m2n1) -- (m3n2);
  
  \draw[->,line width=1mm, green] (m2n2) -- (m3n1);
  \draw[->] (m2n2) -- (m3n2);
  \draw[->] (m2n2) -- (m3n3);
  
  \draw[->] (m2n3) -- (m3n2);
  \draw[->] (m2n3) -- (m3n3);
  \draw[->] (m2n3) -- (m3n4);
  
  \draw[->] (m2n4) -- (m3n3);
  \draw[->] (m2n4) -- (m3n4);
\end{tikzpicture}
\caption{}
\label{fig:network-path}
\end{subfigure}
\begin{subfigure}[t]{0.32\textwidth}
    \centering
\begin{tikzpicture}[>=stealth, every node/.style={circle, draw, minimum size=12pt, inner sep=0pt}]
  \definecolor{C4}{RGB}{0, 0, 0}
\definecolor{C3}{RGB}{60, 60, 60}
\definecolor{C2}{RGB}{120, 120, 120}
\definecolor{C1}{RGB}{180, 180, 180}
\definecolor{C0}{RGB}{240, 240, 240}
  
  \node [fill=C2](source) at (0,0.33)     {a};
  \node [fill=C1](m1n1) at (0.8,-.66)        {$b_4$};
  \node (m1n2) at (0.8,0)         {};
  \node [fill=C1](m1n3) at (0.8,.66)         {$b_2$};
  \node (m1n4) at (0.8,1.33)         {};
  \node (m2n1) at (1.6,-.66)        {};
  \node [fill=C2](m2n2) at (1.6,0)         {$c_3$};
  \node (m2n3) at (1.6,.66)         {};
  \node (m2n4) at (1.6,1.33)         {};
  \node [fill=C1](m3n1) at (2.4,-.66)        {$d_4$};
  \node (m3n2) at (2.4,0)         {};
  \node [fill=C1] (m3n3) at (2.4,.66)         {$d_2$};
  \node (m3n4) at (2.4,1.33)         {};
  \node [fill=C2](sink) at (3.2,0.33)       {e};

  \draw[->,line width=.5mm,green] (source) -- (m1n1);
  \draw[->] (source) -- (m1n2);
  \draw[->,line width=.5mm,green] (source) -- (m1n3);
  \draw[->] (source) -- (m1n4);

  \draw[->,line width=.5mm,green] (m3n1) -- (sink);
  \draw[->] (m3n2) -- (sink);
  \draw[->,line width=.5mm,green] (m3n3) -- (sink);
  \draw[->] (m3n4) -- (sink);
  
  \draw[->] (m1n1) -- (m2n1);
  \draw[->,line width=.5mm,green] (m1n1) -- (m2n2);
  
  \draw[->] (m1n2) -- (m2n1);
  \draw[->] (m1n2) -- (m2n2);
  \draw[->] (m1n2) -- (m2n3);
  
  \draw[->,line width=.5mm,green] (m1n3) -- (m2n2);
  \draw[->] (m1n3) -- (m2n3);
  \draw[->] (m1n3) -- (m2n4);
  
  \draw[->] (m1n4) -- (m2n3);
  \draw[->] (m1n4) -- (m2n4);

  \draw[->] (m2n1) -- (m3n1);
  \draw[->] (m2n1) -- (m3n2);
  
  \draw[->,line width=.5mm,green] (m2n2) -- (m3n1);
  \draw[->] (m2n2) -- (m3n2);
  \draw[->,line width=.5mm,green] (m2n2) -- (m3n3);
  
  \draw[->] (m2n3) -- (m3n2);
  \draw[->] (m2n3) -- (m3n3);
  \draw[->] (m2n3) -- (m3n4);
  
  \draw[->] (m2n4) -- (m3n3);
  \draw[->] (m2n4) -- (m3n4);  
\end{tikzpicture}
\caption{}
\label{fig:network-split}
\end{subfigure}
\begin{subfigure}[t]{0.32\textwidth}
    \centering
\begin{tikzpicture}[>=stealth, every node/.style={circle, draw, minimum size=12pt, inner sep=0pt}]
  \definecolor{C4}{RGB}{0, 0, 0}
\definecolor{C3}{RGB}{60, 60, 60}
\definecolor{C25}{RGB}{90, 90, 90}
\definecolor{C2}{RGB}{120, 120, 120}
\definecolor{C15}{RGB}{150, 150, 150}
\definecolor{C1}{RGB}{180, 180, 180}
\definecolor{C05}{RGB}{210, 210, 210}
\definecolor{C0}{RGB}{240, 240, 240}
  
  \node [fill=C4](source) at (0,0.33)     {a};
  \node [fill=C1](m1n1) at (0.8,-.66)        {};
  \node [fill=C1](m1n2) at (0.8,0)         {};
  \node [fill=C1](m1n3) at (0.8,.66)         {};
  \node [fill=C1](m1n4) at (0.8,1.33)         {};
  \node [fill=C05] (m2n1) at (1.6,-.66)        {};
  \node [fill=C25](m2n2) at (1.6,0)         {};
  \node [fill=C05](m2n3) at (1.6,.66)         {};
  \node [fill=C05](m2n4) at (1.6,1.33)         {};
  \node [fill=C15](m3n1) at (2.4,-.66)        {};
  \node [fill=C05](m3n2) at (2.4,0)         {};
  \node [fill=C1] (m3n3) at (2.4,.66)         {};
  \node [fill=C1](m3n4) at (2.4,1.33)         {};
  \node [fill=C4](sink) at (3.2,0.33)       {e};

  \draw[->,line width=.5mm,green] (source) -- (m1n1);
  \draw[->,line width=0.5mm,blue] (source) -- (m1n2);
  \draw[->,line width=.5mm,green] (source) -- (m1n3);
  \draw[->,line width=0.5mm,blue] (source) -- (m1n4);

  \draw[->,line width=.75mm,orange] (m3n1) -- (sink);
  \draw[->,line width=0.25mm,blue] (m3n2) -- (sink);
  \draw[->,line width=.5mm,green] (m3n3) -- (sink);
  \draw[->,line width=0.5mm,blue] (m3n4) -- (sink);
  
  \draw[->] (m1n1) -- (m2n1);
  \draw[->,line width=.5mm,green] (m1n1) -- (m2n2);
  
  \draw[->,line width=0.25mm,blue] (m1n2) -- (m2n1);
  \draw[->,line width=0.25mm,blue] (m1n2) -- (m2n2);
  \draw[->] (m1n2) -- (m2n3);
  
  \draw[->,line width=.5mm,green] (m1n3) -- (m2n2);
  \draw[->] (m1n3) -- (m2n3);
  \draw[->] (m1n3) -- (m2n4);
  
  \draw[->,line width=0.25mm,blue] (m1n4) -- (m2n3);
  \draw[->,line width=0.25mm,blue] (m1n4) -- (m2n4);

  \draw[->,line width=0.25mm,blue] (m2n1) -- (m3n1);
  \draw[->] (m2n1) -- (m3n2);
  
  \draw[->,line width=.5mm, green] (m2n2) -- (m3n1);
  \draw[->,line width=0.25mm,blue] (m2n2) -- (m3n2);
  \draw[->,line width=.5mm, green] (m2n2) -- (m3n3);
  
  \draw[->] (m2n3) -- (m3n2);
  \draw[->] (m2n3) -- (m3n3);
  \draw[->,line width=0.25mm,blue] (m2n3) -- (m3n4);
  
  \draw[->] (m2n4) -- (m3n3);
  \draw[->,line width=0.25mm,blue] (m2n4) -- (m3n4);  
\end{tikzpicture}
\caption{}
\label{fig:network-combined}
\end{subfigure}
\caption{Multi-defender Patrolling on Layered Networks. Vertices are targets and black edges are roads connecting targets. (a) The green path shows a patrol route by a single defender, passing through $b_2$, $c_3$ and $d_4$. Protected nodes are shaded, with the darkness indicating degree of protection.
(b) A coverage obtained by splitting/randomizing patrols. Thinner arrows indicate smaller patrols. Compared to Figure~\ref{fig:network-path}, more targets are covered, but with lower intensity.
(c) A joint patrol with two defenders; the first employs the route in Figure~\ref{fig:network-split} and the second follows the blue paths. The orange edge is used by both defenders.
}
\label{fig:network-all}
\end{figure}
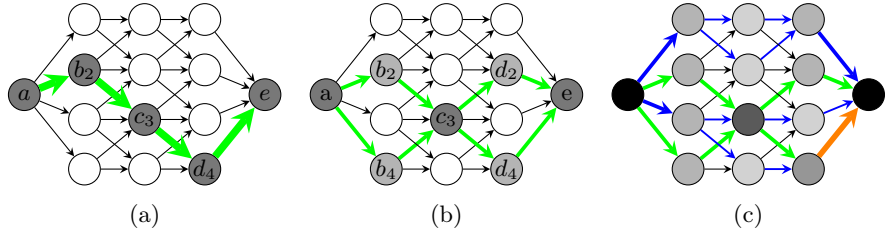

The closest pieces of work to us are by Lou et. al. \cite{lou2015equilibrium} and Gan et. al. \cite{gan2018stackelberg,gan2022defense}. The work by Gan et. al. \cite{gan2022defense} focuses on the case with multiple heterogeneous defenders without schedules
, showing that an equilibrium always exists and can be computed in polynomial time in both the coordinated and uncoordinated case by extending the classic water-filling algorithm \cite{kiekintveld2009computing}. 
However, their model is limited by the lack of schedules. This is not merely an issue of computation: our work shows that the inclusion of schedules can trigger non-existence of equilibrium.
The work by Lou et. al. \cite{lou2015equilibrium} consider multiple defenders and scheduling, analyzing equilibrium in terms of the Price of Anarchy between defenders and giving conditions for their existence.\footnote{We point out \cite{lou2015equilibrium} and \cite{gan2022defense} use different tie-breaking models for the attacker.} However, the bulk of their work concerns homogeneous defenders; 
their results for heterogeneous defenders limited (e.g., having every target completely covered). Here, they acknowledge possible non-existence of equilibrium, proposing approximate solvers based on Mixed-Integer Program with quadratic constraints. 
In contrast, our work makes additional assumptions 
but guarantees existence and polynomial time solutions. 

SSGs as a whole have a long and illustrious history. First introduced by von Stackelberg \cite{von1934marktform} to model competitive firms and the first mover’s advantage, it saw a resurgence beginning with \cite{von2010leadership} alongside a wave of applications primarily in the domain of security which modeled defenders as first movers or leaders \cite{tambe2011security,an2017stackelberg}. Since then, an enormous amount of literature has surfaced, e.g., computing equilibrium in sequential settings \cite{letchford2010computing,vcerny2018incremental,ling2021safe}, handling bounded rationality of defenders \cite{kar2017trends}, and various other structural assumptions such games on networks \cite{letchford2012networkcomputing}, each catering to different variants of security applications.

\section{Nash-Stackelberg Equilibrium with Scheduling}


Our setting involves $n$ heterogeneous defenders, $T$ heterogeneous targets and a single attacker.
Each defender allocates defensive resources which induce \textit{coverage} over targets, a quantitative measure of the degree to which each target is protected.
For example, coverage can refer to the average number of police officers patrolling at a particular neighborhood (Figure~\ref{fig:network-all}).
As is customary in security games, we employ \textit{Stackelberg leadership} models.
Each defender first independently \textit{commits} to its coverage. The attacker then chooses to attack a target $t \in [T]$ with the lowest total coverage under this commitment.


Formally, each defender $i$ has an \textit{attainable set} of coverage $V^i \subset \mathbb{R}_+^T$ (which we define concretely later), 
from which it chooses a \textit{coverage vector} $v^i \in V^i$, where
defender $i$ contributes $v^i(j)$ coverage to target $j$.
A \textit{coverage profile} $\mathbf{v} = (v^1, v^2, \cdots ,v^n) \in V^1 \times V^2 \cdots \times V^n$ is an ordered tuple of coverages from each defender.
We assume that coverage accumulates across defenders additively, such that for a given coverage profile $\mathbf{v}$, the \textit{total coverage} of all defenders is $v^{total} = \sum_{i\in [n]} v^i$. 
The total coverage on target $j$ is $v^{total}(j) = \sum_{i\in [n]} v^i(j) \geq 0$. 





We assume \textit{coverage independent payoffs}: each defender's payoff is based on the attacked target, but \textit{not} on the coverage itself. In practical terms, this means that an attack would ``always succeed'', and the purpose of coverage is to redirect the attacker elsewhere. For example, security officers may not be able to prevent a determined terrorist attack, but having more officers in crowded areas may make the cost of attack prohibitively high such that the attack occurs at a less populated area.
This assumption means there is no need to work explicitly with numerical values for defender payoffs. 
Instead, each defender $i$ has a fixed order of preference over targets given by the total order $\succ_i$. We write $j \succ_i k$ (resp. $j \prec_i k$ ) if defender $i$ prefers target $j$ over $k$ to be attacked (resp. not attacked). 
We assume that there are no ties in defender preferences, hence $j=_i k$ if and only if $j=k$. 
We write $j \succeq_i k$ if and only if $j \succ_i k$ or $j =_i k$, with $j \preceq_i k$ being defined analogously. 
Preference orders differ between defenders, hence $j \succ_i k$ does not imply $j \succ_{i'} k$ for $i \neq i'$. 
For any target $t \in [T]$, we define the set $\mathcal{T}^{\succ_i}_t = \left\{k \in [T] | k \succ_i t \right\}$, i.e., the set of targets which defender $i$ strictly prefers to be attacked over target $t$. $\mathcal{T}^{\prec_i}_t$, $\mathcal{T}^{\succeq_i}_t$, and $\mathcal{T}^{\preceq_i}_t$ are defined analogously. 

We now define formally define the Nash-Stackelberg equilibrium (NSE) given $V^i$ and $\succ_i$ for each defender. We call a tuple $(v^1, \dots, v^n, t) \in V^1 \times \dots \times V^n \times [T]$ a strategy profile, abbreviated by $(\mathbf{v}, t)$.
Strategy profiles are an NSE when (i) the attacker is attacking a least covered target and (ii) neither defender $i$ is willing to unilaterally deviate their coverage from $v^i$ to $\widehat{v}^i \in V^i$ (written as $v^i \rightarrow \widehat{v}^i$), assuming that the attacker could react to this deviation by possibly adjusting its target from $t$ to $\widehat{t}$. 
The NSE is named as such because the attacker best responds to the total coverage as if it was a Stackelberg follower, while defenders interact amongst themselves as if it were Nash.
The former condition is formalized easily.
\begin{definition}[Attacker's Best Response Set] Given a total coverage $v^{total} \in \mathbb{R}_+^T$, its attacker best response set is
\footnote{In this paper, we use superscripts $\left( \cdot \right) ^ i$ to specify a defender's index, 
and brackets $\left( \right)$ for elements in a vector. We capitalize $\Argmax$ to denote the subset of maximal elements, and lower case $\argmax$ when referring to an arbitrary one.  
}
\begin{align}
B(v^{total}) = \Argmin_{t \in [T]} v^{total}(t)
= \left\{ t \in [T] \Big| v^{total}(t) = \min_{t' \in [T]} v^{total}(t') \right\}
\end{align}
\end{definition}
\begin{definition}[Attacker Incentive Compatibility (AIC)] A strategy profile $(\mathbf{v}, t)$ is \textit{attacker incentive compatible} if and only if $t \in B(v^{total})$.
\end{definition}
Essentially, a tuple $(\mathbf{v}, t)$ is AIC when the attacker does not strictly prefer some target $\widehat{t} \neq t$ over $t$. Naturally, we desire profiles which are AIC.

As noted by \cite{gan2018stackelberg}, condition (ii) contains an important nuance: if defender $i$ deviates via $v^i \rightarrow \widehat{v}^i$ such that $\widehat{v}^{total}$ does not have a unique minimum coverage (i.e., $|B(\widehat{v}^{total})| > 1$), how should the attacker break ties? 
In single-defender scenarios, one typically breaks ties in favor of the defender, also called the \textit{Strong} Stackelberg equilibrium.
With multiple defenders, this is ambiguous since defenders are heterogeneous.
This is a non-trivial discussion, since the way the attacker breaks ties significantly affects the space of equilibrium.
There are two natural choices for breaking ties: either punish or prioritize the deviating defender. These choices mirror the concepts of weak and strong Stackelberg equilibrium in single-defender settings. 
As \cite{gan2018stackelberg} argue, the second convention of benefiting the deviating defender can lead to spurious deviations from defenders. For example, a defender may make a trivial ``identity'' deviation from $v^i \rightarrow v^i$, yet demand a nontrivial change in attacked target.
As such we adopt the former concept. This means that deviating players are pessimistic towards any change in attacked target after their deviation (this pessimism exists only for deviations).


\begin{definition}[Defender $i$-Weakly Attacker Incentive Compatibility ($i$-WAIC)] A strategy profile $(\mathbf{v}, t)$ is $i$-WAIC if and only if (i) $t \in B(v^{total})$ and (ii) $\widehat{t} \in B(v^{total}) \implies \widehat{t} \succeq_i t$. 
\end{definition}

\begin{definition}[Defender $i$-Incentive Compatibility ($i$-IC)] A strategy profile $(\mathbf{v}, t)$ is $i$-IC if and only if there does not exist $\widehat{v}^i\in V^i$ and $\widehat{t} \succ_i t$ such that $(v^1, v^2, \cdots, v^{i - 1}, \widehat{v}^i, v^{i+1}, \cdots, v^n, \widehat{t})$ is $i$-WAIC.
\end{definition}

\begin{definition}[Nash-Stackelberg Equilibrium (NSE)] A strategy profile $(\mathbf{v}, t)$ is an NSE if and only if it is AIC and $i$-IC for all $i \in [n]$.    
\label{def:nse}
\end{definition}
    Put simply, $(\mathbf{v}, t)$ is $i$-WAIC if it is AIC and the choice of $t$ is made such as to break ties \textit{against} defender $i$ (clearly, this is a strictly stronger condition than AIC). 
Consequently, $(\mathbf{v}, t)$ is $i$-IC if there is no deviation $v^i \rightarrow \widehat{v}^i$ such that defender $i$ benefits strictly when the attacker changes it's best-response $t \rightarrow \widehat{t}$ while breaking ties against defender $i$.
    Thus, any candidate NSE $(\mathbf{v}, t)$ is only required be AIC, allowing us to freely choose how the attacker tiebreaks, as though $t$ was the ``agreed upon norm'' target for the attacker. 
    However, if defender $i$ deviates, we use the stronger notion of WAIC for post-deviation tiebreaking.
\begin{remark}
    \label{rem:differences-prior}
    Our model does not completely generalize \cite{gan2018stackelberg} due to coverage independent payoffs and additive coverage. However, these assumptions are related to their cases of correlated defenders and non-overlapping payoffs respectively.
\end{remark}

\section{Analysis and Algorithms} 
Clearly, the existence and computation of NSE depends on the set $V^i$.
The simplest specification of $V^i$ is obtained by explicitly specifying schedules and requiring \textit{clearance constraints}.
Suppose each defender $i \in [n]$ has $1$ unit of divisible defensive resource to allocate across a set of $S^i > 0$ \textit{schedules} $\mathcal{S}^i = \{ s^i_1, s^i_2,\cdots s^i_{S^i} \}$. 
Each $s^i_z \in \mathbb{R}^T_+$ is the non-negative coverage over all $T$ targets when defender $i$ allocates its entire defensive resource to the $z$-th schedule. 
$s^i_z(j), j \in [T]$ is the coverage specific to target $j$. Each defender $d_i$'s \textit{strategy} is a distribution $x^i \in \mathbb{R}^{S^i}_+$ over its set of schedules, where $\sum_{s_z^i \in \mathcal{S}^i} x^i(s_z^i) = 1$. 


\begin{assumption}[Clearance]
Given schedules $\mathcal{S}^i$, the coverage vector $v^i$ is said to satisfy clearance if all defensive resources are fully utilized, i.e., 
\begin{align*}
V^i = \left \{ 
v \in \mathbb{R}_+^T \Bigg| \exists x^i \ \text{such that } \forall j \in [T] \ v(j) = \sum_{s_z^i \in \mathcal{S}^i} x^i(s_z^i) \cdot s_{z}^i (j)
\right \}.
\end{align*}
\label{ass:clearance-coverage}
\end{assumption}
The clearance constraint on $V^i$ essentially requires defenders to expend as much as they can, preventing them from ``slacking off''. 
When each schedule attacks a distinct target, i.e, $S^i = \{ e_1, e_2, \dots, e_T \} $, clearance constraints reduce to the setting of \cite{gan2018stackelberg} and equilibria will exist. 
However, our focus is on the scheduled setting. 
Indeed, we now demonstrate an instance where NSE do not exist.\footnote{An earlier version of this paper included an incorrect example.}
\begin{figure}[ht]
    \centering
    \begin{subfigure}[t]{0.15\linewidth}
    \centering
    \begin{tikzpicture}
    \centering 
    \matrix (x) [matrix of math nodes,inner xsep=-4em,
    column sep=1em, row sep =1em,
    minimum width=1.3em,
    text height=.6em,
    text depth=-0em,
    anchor=center] {%
    \textcolor{green}{11} & \textcolor{blue}{12} \\
    \textcolor{blue}{21} & \textcolor{green}{22} \\};
    \foreach \i in {1,2} 
        \draw ($(x-1-\i.north west)+(-.5em,.5em)$) -- ($(x-2-\i.south west)+(-.5em,-.5em)$);
    \foreach \i in {1,2} 
        \draw ($(x-\i-1.south west)+(-.5em,-.5em)$) -- ($(x-\i-2.south east)+(.5em,-.5em)$);
    \draw ($(x-1-1.north west)+(-.5em,.5em)$) -| ($(x-2-2.south east)+(.5em,-.5em)$);
    \end{tikzpicture}
    \caption{}
    \label{fig:counterexample-simple}
    \end{subfigure}
    \begin{subfigure}[t]{0.15 \linewidth}
    \centering 
    \begin{tikzpicture}
    \matrix (x) [matrix of math nodes,inner xsep=-4em,
    column sep=1em, row sep =1em,
    minimum width=1.3em,
    text height=.6em,
    text depth=-.0em,
    anchor=center] {%
    1 \shortminus \epsilon & 1 \\
    k \cdot \epsilon & 0 \\};
    \foreach \i in {1,2} 
        \draw ($(x-1-\i.north west)+(-.5em,.5em)$) -- ($(x-2-\i.south west)+(-.5em,-.5em)$);
    \foreach \i in {1,2} 
        \draw ($(x-\i-1.south west)+(-.5em,-.5em)$) -- ($(x-\i-2.south east)+(.5em,-.5em)$);
    \draw ($(x-1-1.north west)+(-.5em,.5em)$) -| ($(x-2-2.south east)+(.5em,-.5em)$);
    \end{tikzpicture}
    \caption{$s_1^1$}
    \label{fig:s11}
    \end{subfigure}
    \begin{subfigure}[t]{0.15 \linewidth}
    \centering 
    \begin{tikzpicture}
    \matrix (x) [matrix of math nodes,inner xsep=-4em,
    column sep=1em, row sep =1em,
    minimum width=1.3em,
    text height=.6em,
    text depth=-.0em,
    anchor=center] {%
    0 & k \cdot \epsilon \\
    1 & 1 \shortminus \epsilon \\};
    \foreach \i in {1,2} 
        \draw ($(x-1-\i.north west)+(-.5em,.5em)$) -- ($(x-2-\i.south west)+(-.5em,-.5em)$);
    \foreach \i in {1,2} 
        \draw ($(x-\i-1.south west)+(-.5em,-.5em)$) -- ($(x-\i-2.south east)+(.5em,-.5em)$);
    \draw ($(x-1-1.north west)+(-.5em,.5em)$) -| ($(x-2-2.south east)+(.5em,-.5em)$);
    \end{tikzpicture}
    \caption{$s_2^1$}
    \label{}
    \end{subfigure}
    \begin{subfigure}[t]{0.15 \linewidth}
    \centering
    \begin{tikzpicture}
    \matrix (x) [matrix of math nodes,inner xsep=-4em,
    column sep=1em, row sep =1em,
    minimum width=1.3em,
    text height=.6em,
    text depth=-.0em,
    anchor=center] {%
    1 & 0 \\
    1 \shortminus \epsilon & k \cdot \epsilon \\};
    \foreach \i in {1,2} 
        \draw ($(x-1-\i.north west)+(-.5em,.5em)$) -- ($(x-2-\i.south west)+(-.5em,-.5em)$);
    \foreach \i in {1,2} 
        \draw ($(x-\i-1.south west)+(-.5em,-.5em)$) -- ($(x-\i-2.south east)+(.5em,-.5em)$);
    \draw ($(x-1-1.north west)+(-.5em,.5em)$) -| ($(x-2-2.south east)+(.5em,-.5em)$);
    \end{tikzpicture}
    \caption{$s_1^2$}
    \label{}
    \end{subfigure}
    \begin{subfigure}[t]{0.15 \linewidth}
    \centering
    \begin{tikzpicture}
    \matrix (x) [matrix of math nodes,inner xsep=-4em,
    column sep=1em, row sep =1em,
    minimum width=1.3em,
    text height=.6em,
    text depth=-.0em,
    anchor=center] {%
    k \cdot\epsilon & 1 \shortminus \epsilon \\
    0 & 1 \\};
    \foreach \i in {1,2} 
        \draw ($(x-1-\i.north west)+(-.5em,.5em)$) -- ($(x-2-\i.south west)+(-.5em,-.5em)$);
    \foreach \i in {1,2} 
        \draw ($(x-\i-1.south west)+(-.5em,-.5em)$) -- ($(x-\i-2.south east)+(.5em,-.5em)$);
    \draw ($(x-1-1.north west)+(-.5em,.5em)$) -| ($(x-2-2.south east)+(.5em,-.5em)$);
    \end{tikzpicture}
    \caption{$s_2^2$}
    \label{}
    \end{subfigure}
    \begin{subfigure}[t]{0.15\linewidth}
    \centering
    \begin{tikzpicture}
    \centering 
    \matrix (x) [matrix of math nodes,inner xsep=-4em,
    column sep=1em, row sep =1em,
    minimum width=1.3em,
    text height=.6em,
    text depth=-.0em,
    anchor=center] {%
    11 & 12 \\
    21 & 22 \\};
    \draw[->,color=blue,line width=0.5mm] (x-2-2) -- (x-2-1);
    \draw[->,color=green,line width=0.5mm] (x-1-2) -- (x-2-2);
    \draw[->,color=blue,line width=0.5mm] (x-1-1) -- (x-1-2);
    \draw[->,color=green,line width=0.5mm] (x-2-1) -- (x-1-1);
    \foreach \i in {1,2} 
        \draw ($(x-1-\i.north west)+(-.5em,.5em)$) -- ($(x-2-\i.south west)+(-.5em,-.5em)$);
    \foreach \i in {1,2} 
        \draw ($(x-\i-1.south west)+(-.5em,-.5em)$) -- ($(x-\i-2.south east)+(.5em,-.5em)$);
    \draw ($(x-1-1.north west)+(-.5em,.5em)$) -| ($(x-2-2.south east)+(.5em,-.5em)$);
    \end{tikzpicture}
    \caption{}
    \label{fig:counterexample-all}
    \end{subfigure}
    \caption{Illustration of Example~\ref{example:counterexample}. (a) Target labels. Defender $1$ prefers diagonal (green) targets attacked; defender $2$ prefers off diagonal (blue) targets. 
    (b-e) Defender schedules $s_1^1, s_2^1, s_1^2$ and $s_2^2$. (f) Cyclic behavior. Green/blue arrows indicate changes in targets induced by defender $1$/$2$ deviating.}
    \label{fig:counterexample-existence}
\end{figure}
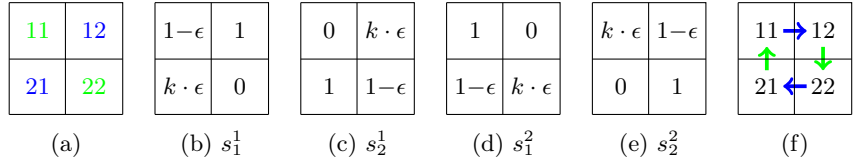
\vspace{-5mm}
\begin{example}
    The game in Figure~\ref{fig:counterexample-existence} has $n=2$ defenders and $T=4$ targets. Targets $11,12,21,22$ are organized in a $2\times 2$ matrix.
    Defender $1$ has preferences $22 \succ_1 11 \succ_1 12 \succ_1 21$ (i.e., prefers diagonal targets attacked) while defender $2$ has them in reverse, $21 \succ_2 12 \succ_2 11 \succ_2 22$ (i.e., prefers off diagonal targets attacked).
    Each has $2$ schedules, $s_1^1=(1-\epsilon, 1, k\epsilon, 0)$, $s_2^1=(0, k\epsilon, 1, 1-\epsilon)$ and $s_1^2 = (1, 0, 1-\epsilon, k\epsilon)$, $s_2^2=(k\epsilon, 1-\epsilon, 0, 1)$, where $k \geq 1$, $0 \leq \epsilon \ll 1$ and $k\epsilon < 1$.
    \label{example:counterexample}
\end{example}
Suppose (for now) that $\epsilon=0$ in Example~\ref{example:counterexample}. Then, defender $1$ decides how to split its coverage across rows, while defender $2$ the columns. 
Suppose $(\mathbf{v}, t)$ is a NSE. If $t = 11$, defender $2$ is incentivized to deviate to $\widehat{v}^2 = s_1^2$ regardless of $v^1$, since this always causes $12$ to have the ``lowest'' coverage and $12 \succ_2 11$. The same may be said for $t=22$, $t=12$ and $t=21$, where the latter two have defender $1$ deviating. This ``cyclic'' behavior (Figure~\ref{fig:counterexample-all}) implies that no equilibrium exists.

The above argument is only partially correct. When $\epsilon=0$, Example~\ref{example:counterexample} has a NSE $(\mathbf{v}, 11)$ where $v^1=v^2=(0.5, 0.5, 0.5, 0.5)$. If defender $2$ deviates to $\widehat{v}^2 = s_1^2$, 
the attacker tiebreaks, choosing target $22$ over target $12$, which hurts defender $2$. It may be verified that this is indeed a NSE. Introducing $\epsilon=10^{-3}$ and $k=100$ fixes this, ensuring target $12$ possesses the \textit{strictly} lowest coverage after deviation regardless of $v^1$. 
\ifExtendedVersion
The full derivation is deferred to the Appendix.
\else
The full derivation is deferred to the extended version. 
\fi

The non-existence is caused by the rigid enforcement of schedules. For example, even though defender $1$ prefers $11$ to be attacked over $12$, defending $11$ through schedule $s^1_1$ forces it to simultaneously defend $12$, which it rather not defend. 
Our fix expands the attainable coverage $V^i$: 
instead of clearance, defenders may provide \textit{less coverage} than what they could have. 
This assumption was used by \cite{korzhyk2010complexity} and is quite reasonable. For example, a patroller may deliberately let down their guard at areas of lower priority, \textit{encouraging} attacks there. 

\begin{assumption}[Subset-of-a-Schedule-is-Also-a-Schedule (SSAS)]
\begin{align*}
V^i = \left \{ 
v \in \mathbb{R}_+^T \Bigg| \exists x^i \text{ such that } \forall j \in [T] \ v(j) \leq \sum_{s_z^i \in \mathcal{S}^i} x^i(s_z^i) \cdot s_{z}^i (j)
\right \}.
\end{align*}
\label{ass:SSAS}
\end{assumption}

Assumption~\ref{ass:SSAS} is obtained from Assumption~\ref{ass:clearance-coverage} by replacing the equality constraint by an inequality. 
Checking if $v^i \in V^i$ is done efficiently using linear programs. 
Clearly, $V^i$ under SSAS is a superset of that under clearance. 
Under SSAS, Example~\ref{example:counterexample} with $\epsilon=0$ has a NSE $(v^1, v^2, 11)$, where $v^1=(0, 0.5, 0.5, 0)$, $v_2 = (0, 0, 0, 1.0)$. 
\ifExtendedVersion
Details, alongside the $\epsilon > 0$ case are in the Appendix.
\else
Details, alongside the $\epsilon > 0$ case are in the extended version.
\fi

\subsection{Existence and Computation of NSE Assuming SSAS}
For now, we restrict ourselves to $2$ defenders.
Under SSAS, 
any NSE can be converted into a simpler canonical form, which we exploit to guarantee existence for all $\mathcal{S}^i$, together with 
a polynomial time algorithm. 
We present these results via a series of reductions. \textbf{All proofs are deferred to the appendix.}
\begin{lemma} 
    If $(v^1, v^2, t)$ is an NSE, then there exists another NSE $(\widetilde{v}^1, v^2, t)$ 
    such that (i) $\widetilde{v}^1(t) = 0$ and (ii) $\widetilde{v}^1(j)=v^1(j)$ for all targets $j\neq t \in [T]$.
    \label{lem:step1}
\end{lemma}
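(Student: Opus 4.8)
The plan is to take $\widetilde{v}^1$ to be $v^1$ with its coordinate on the attacked target zeroed out, i.e. $\widetilde{v}^1(t)=0$ and $\widetilde{v}^1(j)=v^1(j)$ for $j\neq t$, and then verify the three requirements of an NSE (AIC, $1$-IC, $2$-IC) for $(\widetilde{v}^1, v^2, t)$. First I would check feasibility: since $0\le \widetilde{v}^1(j)\le v^1(j)$ for every $j$, the SSAS assumption (Assumption~\ref{ass:SSAS}) immediately gives $\widetilde{v}^1\in V^1$. I would also dispose of the degenerate case $v^1(t)=0$ at the outset, since there $\widetilde{v}^1=v^1$ and the claim holds trivially; so from here on I assume $v^1(t)>0$.

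The two easy conditions come next. For AIC, note that $\widetilde{v}^{total}$ agrees with $v^{total}$ on every target except $t$, where it is smaller, namely $\widetilde{v}^{total}(t)=v^{total}(t)-v^1(t)$. Since $t$ already minimized $v^{total}$, lowering only its coverage keeps it a minimizer, so $t\in B(\widetilde{v}^{total})$. For $1$-IC, I would use the structural observation that the $i$-IC condition for the \emph{deviating} defender never refers to that defender's own committed coverage: it only quantifies over candidate deviations $\widehat{v}^1\in V^1$ against the fixed opponent coverage $v^2$ and the fixed target $t$. Hence the $1$-IC requirement for $(\widetilde{v}^1, v^2, t)$ is literally identical to that for $(v^1, v^2, t)$, which holds by hypothesis.

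The substance of the lemma is $2$-IC, which I would prove by contradiction. Suppose there is a deviation $\widehat{v}^2\in V^2$ with $\widehat{t}\succ_2 t$ making $(\widetilde{v}^1, \widehat{v}^2, \widehat{t})$ $2$-WAIC; I claim the same pair $(\widehat{v}^2,\widehat{t})$ is a $2$-WAIC deviation against $v^1$, contradicting the $2$-IC of the original NSE. Write $w=\widetilde{v}^1+\widehat{v}^2$ and $w'=v^1+\widehat{v}^2$; these differ only on $t$, with $w'(t)=w(t)+v^1(t)>w(t)$. Because $\widehat{t}\succ_2 t$ forces $\widehat{t}\neq t$, and $\widehat{t}\in B(w)$ by WAIC condition (i), the minimum of $w$ is attained at a coordinate other than $t$; raising only $w(t)$ therefore leaves $\min w'=\min w=w(\widehat{t})$ unchanged, so $\widehat{t}\in B(w')$, giving condition (i) for the new profile. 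Moreover $B(w')$ is obtained from $B(w)$ by possibly deleting $t$: since $v^1(t)>0$ gives $w'(t)>w(t)\ge \min w$, in fact $t\notin B(w')$, so $B(w')=B(w)\setminus\{t\}\subseteq B(w)$. Then WAIC condition (ii) for $(\widetilde{v}^1,\widehat{v}^2,\widehat{t})$ — that every element of $B(w)$ is $\succeq_2\widehat{t}$ — transfers to the smaller set $B(w')$, establishing that $(v^1,\widehat{v}^2,\widehat{t})$ is $2$-WAIC.

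The one delicate point, and the only place the argument could break, is exactly this transfer of WAIC condition (ii): enlarging defender $1$'s coverage on $t$ could in principle \emph{add} $t$ to the best-response set, and since $t\prec_2\widehat{t}$ this would violate the requirement that all ties be weakly preferred to $\widehat{t}$. The resolution is precisely the case split made at the start — when $v^1(t)>0$ the coverage on $t$ is pushed strictly above the minimum, so $t$ is excluded from $B(w')$, while the case $v^1(t)=0$ is vacuous because then $\widetilde{v}^1=v^1$. Everything else is monotonicity bookkeeping on a single coordinate.
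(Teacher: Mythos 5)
Your proposal is correct and follows essentially the same route as the paper's proof: feasibility via SSAS, AIC by single-coordinate monotonicity, $1$-IC because the deviation condition replaces defender $1$'s coverage wholesale, and $2$-IC by transferring any successful deviation $(\widehat{v}^2,\widehat{t})$ from $\widetilde{v}^1$ back to $v^1$. The only cosmetic differences are that you phrase the $2$-IC transfer in terms of the best-response sets ($B(w')=B(w)\setminus\{t\}$) rather than the paper's direct coverage inequalities, and you add a case split on $v^1(t)=0$ that the paper's inequality-based argument absorbs without comment.
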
 

Applying Lemma~\ref{lem:step1} to each defender guarantees that each NSE $(\mathbf{v}, t)$ must correspond to an NSE with zero coverage on $t$. 
If we knew the attacked target $t$, we can reduce our search to coverage profiles with zero coverage on $t$.
\begin{figure}[t]
    \centering
    \includegraphics[width=0.32\textwidth]{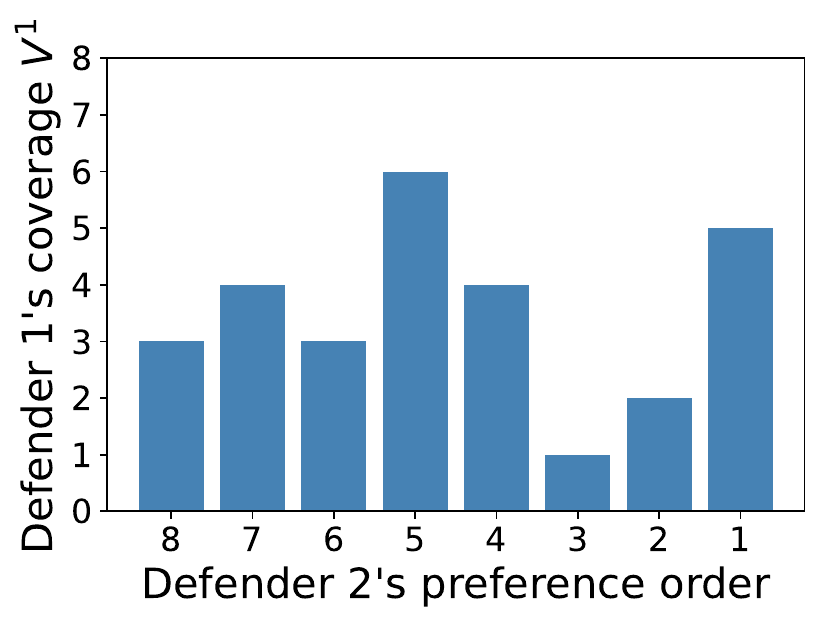}
    \includegraphics[width=0.32\textwidth]{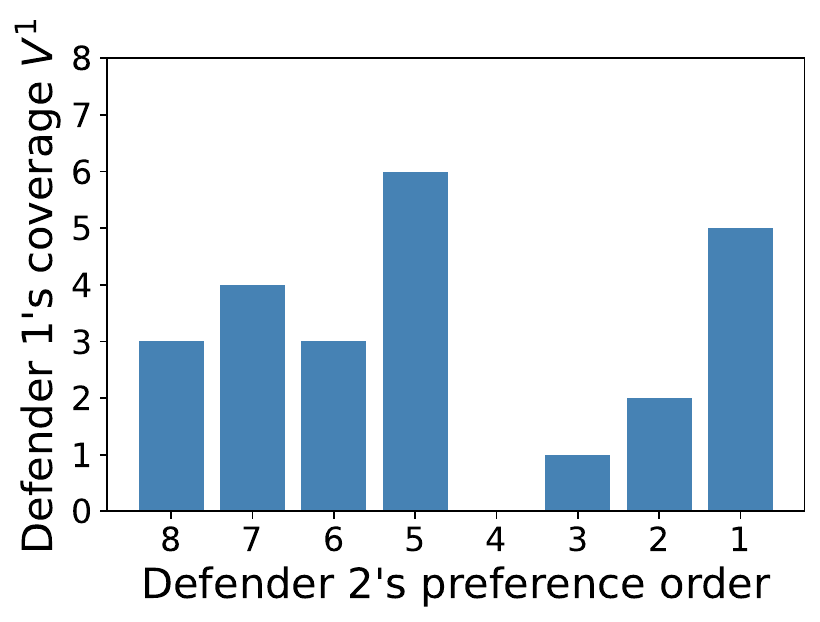}
    \includegraphics[width=0.32\textwidth]{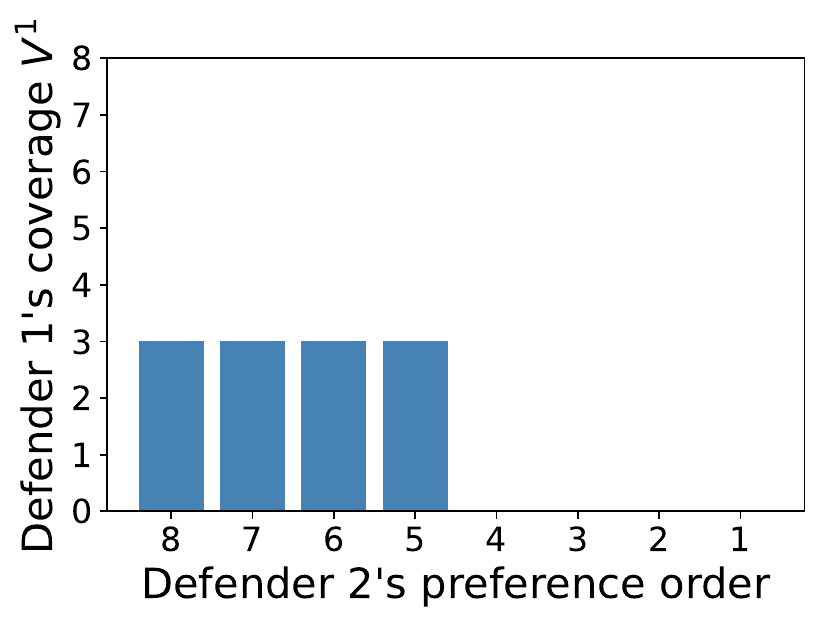}
    \caption{Left to right: reductions to obtain $t$-standard coverage for defender $1$}
    \label{fig:NSEspFigLemma}
\end{figure}
Now, for some NSE $(\mathbf{v}, t)$ consider defender $1$'s coverage profile $v^1$ based on defender $2$'s preference ordering. For illustration, assume WLOG that the targets are ordered in \textit{decreasing} order of defender $2$'s preference of attacked target. An example is shown in Figure~\ref{fig:NSEspFigLemma}(a), where $t=4$ and $T=8$, and the height of each bar corresponds to $v^1(j)$ for each target $j$. Lemma~\ref{lem:step1} simply says that the coverage profile in Figure~\ref{fig:NSEspFigLemma}(b) would also constitute an NSE. 

Suppose defender $1$'s goal is to discourage defender $2$ from deviating. Since target $4$ is attacked, defender $2$ benefits from deviating if and only if it can induce some target in $\mathcal{T}^{\succ_2}_4$ to be the new attack target. 
It does so by increasing coverage over targets in $\mathcal{T}^{\preceq_2}_4$ (potentially reducing $v^2$ elsewhere).
Conversely, defender $1$ prevents this deviation by ensuring that its minimum coverage $v^1$ over $\mathcal{T}^{\succ_2}_4$ is as high as possible.
Since $v^1$ shown in in Figure~\ref{fig:NSEspFigLemma} is part of an NSE, defender $2$ does not have a coverage $\widehat{v}^2$ such that $\widehat{v}^2(j) > 3$ for all $\mathcal{T}^{\preceq_2}_4$. If such a $\widehat{v}^2$ existed, defender $2$ could simply deviate to that (placing no coverage on targets in $\mathcal{T}^{\succ_2}_4$), This induces either target $6 \succ_2 4$ to be attacked. Hence, for defender $1$ to discourage deviations from defender $2$, it should reduce $v^1$ in all targets belonging to $\mathcal{T}^{\succ_2}_4$ to be 3, and similarly for all $\mathcal{T}^{\preceq_2}_4$, as shown in Figure~\ref{fig:NSEspFigLemma}(c). This new coverage remains a NSE.
Lemma~\ref{lem:step2} formalizes the above argument.

\begin{lemma} 
    Suppose $(v^1, v^2, t)$ is an NSE with $v^1(t) = v^2(t) = 0$. 
    Then there exists another coverage $\widetilde{v}^1$  where (i) $\widetilde{v}^1(t) = 0$, (ii) $\widetilde{v}^1(j) = 0$ for all $j \prec_2 t$, and (iii) $\widetilde{v}^1(j) = \min_{k \succ_2 t }v^1(k)$ for all $j \succ_2 t$, 
    such that $(\widetilde{v}^1, v^2, t)$ is an NSE.
    \label{lem:step2}
\end{lemma}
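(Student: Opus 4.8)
The plan is to show that the candidate $\widetilde{v}^1$ is componentwise dominated by $v^1$, so that membership in $V^1$ and attacker incentive compatibility come essentially for free, and then to verify the two incentive-compatibility conditions separately. Write $m = \min_{k \succ_2 t} v^1(k)$ for the quantity in (iii). On $\mathcal{T}^{\succ_2}_t$ we have $\widetilde{v}^1(j) = m \le v^1(j)$ by definition of $m$; on $\mathcal{T}^{\preceq_2}_t$ we have $\widetilde{v}^1(j) = 0 \le v^1(j)$ (covering both $j=t$, where $v^1(t)=0$, and $j \prec_2 t$). Hence $\widetilde{v}^1 \le v^1$ coordinatewise, so $\widetilde{v}^1 \in V^1$ by the downward-closure built into Assumption~\ref{ass:SSAS}. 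Moreover $\widetilde{v}^{total}(t) = \widetilde{v}^1(t) + v^2(t) = 0$, which is minimal among nonnegative coverages, so $t \in B(\widetilde{v}^{total})$ and $(\widetilde{v}^1, v^2, t)$ is AIC.

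For defender $1$'s incentive compatibility I would observe that $1$-IC is a property of the pair $(v^2, t)$ alone: by the definition of $1$-IC, $(\widetilde{v}^1, v^2, t)$ fails $1$-IC exactly when some $\widehat{v}^1 \in V^1$ and $\widehat{t} \succ_1 t$ make $(\widehat{v}^1, v^2, \widehat{t})$ be $1$-WAIC, a condition in which defender $1$'s current coverage is entirely overwritten and so never enters. Since we keep $v^2$ and $t$ fixed and the original $(v^1, v^2, t)$ is $1$-IC, the new profile is $1$-IC as well.

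The only real work is $2$-IC, and the delicate point is that flattening defender $1$'s coverage on $\mathcal{T}^{\succ_2}_t$ down to $m$ lowers coverage on exactly the targets defender $2$ wants attacked, so a priori it could open a new deviation. I would use the characterization that, because WAIC breaks ties against defender $2$, a profitable deviation exists iff some $\widehat{v}^2 \in V^2$ satisfies $B(\widetilde{v}^1 + \widehat{v}^2) \subseteq \mathcal{T}^{\succ_2}_t$, i.e.\ $\min_{j \in \mathcal{T}^{\preceq_2}_t}(\widetilde{v}^1(j) + \widehat{v}^2(j)) > \min_{j \in \mathcal{T}^{\succ_2}_t}(\widetilde{v}^1(j) + \widehat{v}^2(j))$. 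Suppose such a $\widehat{v}^2$ exists. Since $\widetilde{v}^1 = 0$ on $\mathcal{T}^{\preceq_2}_t$ and the right-hand side is at least $m$, this already forces $\min_{j \in \mathcal{T}^{\preceq_2}_t}\widehat{v}^2(j) > m$. Let $\widehat{v}^2_0$ coincide with $\widehat{v}^2$ on $\mathcal{T}^{\preceq_2}_t$ and vanish on $\mathcal{T}^{\succ_2}_t$; by SSAS $\widehat{v}^2_0 \in V^2$. Plugging $\widehat{v}^2_0$ into the original profile gives $\min_{j \in \mathcal{T}^{\preceq_2}_t}(v^1(j) + \widehat{v}^2_0(j)) \ge \min_{j \in \mathcal{T}^{\preceq_2}_t}\widehat{v}^2_0(j) > m = \min_{j \in \mathcal{T}^{\succ_2}_t} v^1(j) = \min_{j \in \mathcal{T}^{\succ_2}_t}(v^1(j) + \widehat{v}^2_0(j))$, exhibiting a profitable deviation against $(v^1, v^2, t)$ and contradicting its $2$-IC. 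Hence no such $\widehat{v}^2$ exists, $(\widetilde{v}^1, v^2, t)$ is $2$-IC, and combined with the above it is an NSE. The crux I would watch most closely is this passage to $\widehat{v}^2_0$: it is precisely the SSAS downward-closure that lets defender $2$ discard coverage on $\mathcal{T}^{\succ_2}_t$ at no cost, which is what makes flattening defender $1$'s coverage there harmless and pins the entire comparison to the single value $m$.
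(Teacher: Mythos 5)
Your proof is correct and follows essentially the same route as the paper's: feasibility and AIC via coordinatewise domination and SSAS, $1$-IC because a deviation by defender $1$ overwrites $\widetilde{v}^1$ entirely, and $2$-IC by restricting any profitable deviation $\widehat{v}^2$ to $\mathcal{T}^{\preceq_2}_t$ (your $\widehat{v}^2_0$ is exactly the paper's $u^2$) and showing it would already have beaten the original profile. Your reformulation of a profitable deviation as $B(\widetilde{v}^1+\widehat{v}^2)\subseteq\mathcal{T}^{\succ_2}_t$, i.e.\ a strict comparison of minima over $\mathcal{T}^{\preceq_2}_t$ and $\mathcal{T}^{\succ_2}_t$, is a clean equivalent of the paper's target-by-target WAIC inequalities.
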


Lemma~\ref{lem:step2} allows us to transform any NSE $(v^1, v^2, t)$ into a new NSE by adjusting $v^1$ appropriately. Clearly, a similar process can be done for defender $2$ and $v^2$. Applying Lemma~\ref{lem:step2} for both player yields coverages with simpler structures.


\begin{definition} [$t$-standard Coverage]
    For fixed $t \in [T]$, $v^1 \in V^1$ is a $t$-standard coverage for defender $1$ if
    (i) there exists $h^1 \ge 0$ and $v^1(j) = h^1$ for all $j \succ_2 t $, and (ii) $v^1(j) = 0$ for all $j \preceq_2 t$. The same holds for defender $2$.
    \label{def:standard coverage}
\end{definition}

Reducing a coverage in an NSE $(v^1, v^2, t)$ into one containing $t$-standard coverage is done by first applying Lemma~\ref{lem:step1} to each player, followed by Lemma~\ref{lem:step2} to each player. 
Figure~\ref{fig:NSEspFigLemma} illustrates how $v^1$ evolves according to this reduction as per the prior discussions.
    Let $\mathcal{H}$ be the (possibly empty) set of NSE $(v^1, v^2, t)$ such that both $v^1$ and $v^2$ are $t$-standard coverage. 
    For a fixed $t\in [T]$, we define $\mathcal{H}_t$ to be all NSE $(v^1, v^2, t)$ in $\mathcal{H}$ where $t$ is attacked. By definition, we have (i) $\mathcal{H}_t$ and $\mathcal{H}_{t'}$ are disjoint when $t \neq t'$, and (ii) $\mathcal{H}=\bigcup_{t \in [T]} \mathcal{H}_t$.
    The existence of NSE is equivalent to saying that $\mathcal{H}$ is non-empty.

Our algorithm for computing an NSE under Assumption~\ref{ass:SSAS} is straightforward: iterate over all targets $t \in [T]$ and search for some element in $\mathcal{H}_t$. 
Finding some element (if it exists) of $\mathcal{H}_t$ for any $t$ is done in polynomial time by solving $4$ linear programs with size polynomial in $S^i$, $n$ and $T$, as shown in Algorithm~\ref{alg:basic}. 
\textsc{MaximinCov}$(\mathcal{T}, V^i)$ is an oracle finding the coverage $v^i \in V^i$ which maximizes the minimum coverage over a given set of targets $\mathcal{T}$. \textsc{MaximinCov} returns $+ \infty$ when $\mathcal{T} = \emptyset$.
When $V^i$ is defined by a finite set of schedules $\mathcal{S}^i$ and SSAS,  \textsc{MaximinCov} can be implemented via linear programming (Algorithm~\ref{alg:maxmin-cov-simple}). 

\begin{figure}[t]
\centering
\begin{minipage}[c]{0.45 \textwidth}
\centering
\begin{algorithm}[H]
	\caption{NSE for 2 defenders}
	\label{alg:basic}
	\begin{algorithmic}
		\STATE \textbf{Input: }$n$, $T$, $V^1, V^2$
		\STATE \textbf{Output: } an NSE $(\mathbf{v}, t)$ 
		\FOR{$t = 1$ \textbf{to} $T$}
		    \STATE $h^1$ $\leftarrow$ \textsc{MaximinCov}($\mathcal{T}^{\succ_2}_t$, $V^1$)
		    \STATE $g^1$ $\leftarrow$ \textsc{MaximinCov}($\mathcal{T}^{\preceq_2}_t$, $V^2$)
		    \STATE $h^2$ $\leftarrow$ \textsc{MaximinCov}($\mathcal{T}^{\succ_1}_t$, $V^2$)
		    \STATE $g^2$ $\leftarrow$ \textsc{MaximinCov}($\mathcal{T}^{\preceq_1}_t$, $V^1$)
                \IF{$h^1 \geq g^1$ and $h^2 \geq g^2$}
                    \FOR{$j = 1$ \textbf{to} $T$}
                        \STATE $v^1(j) \leftarrow h^1$ if $j \in \mathcal{T}^{\succ_2}_t$ else $0$
                        \STATE $v^2(j) \leftarrow h^2$ if $j \in \mathcal{T}^{\succ_1}_t$ else $0$
                    \ENDFOR
                    \RETURN $(v^1, v^2, t)$
                \ENDIF
		\ENDFOR
	\end{algorithmic}
\end{algorithm}
\end{minipage}
\begin{minipage}[c]{0.54 \textwidth}

\begin{algorithm}[H]
    \caption{\textsc{MaximinCov}}
    \label{alg:maxmin-cov-simple}
    \begin{align*}
    \textbf{Input: }& \text{schedules } \mathcal{S}^i, \mathcal{T}\subseteq [T] \\
    \textbf{Output: }& \text{$i$'s Maximin coverage over } \mathcal{T} \\
       \textbf{Solve: } &\max \quad h \\
        \text{s.t. } &v^i(t) \leq \sum_{s_z^i \in \mathcal{S}^i} x^i(z) \cdot s_{z}^i (t) \quad \forall t \in [T] 
     \\
     &v^i(j) \geq  h \qquad \forall j \in \mathcal{T} \\
        &\sum_{z \in [S^i]} x^i(z) = 1, \quad x^i \in \mathbb{R}^{S^i}_+
    \end{align*}
    \end{algorithm}
\end{minipage}
\caption{Left: Solving NSE in $2$ defender games. Right: A linear program finding the Maximin coverage for defender $i$ when $V^i$ is given by schedules $\mathcal{S}^i$ and SSAS.}
\label{fig:algo-basic}
\end{figure}
\begin{theorem}
    Under the SSAS assumption (Assumption~\ref{ass:SSAS}), $\mathcal{H} \neq \emptyset$, i.e., an NSE always exists. Consequently, Algorithm~\ref{alg:basic} is guaranteed to return an NSE. 
    \label{thm:existence-2p}
\end{theorem}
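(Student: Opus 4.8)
By Lemmas~\ref{lem:step1} and~\ref{lem:step2}, any NSE can be reduced to one in which both players use $t$-standard coverage, so it suffices to reason about the $t$-standard profiles Algorithm~\ref{alg:basic} constructs. The plan is to prove two things. First (\emph{correctness}): whenever the test $h^1\ge g^1$ and $h^2\ge g^2$ passes for some $t$, the profile the algorithm outputs is an NSE, so it lies in $\mathcal{H}_t$. Second (\emph{existence}): the test passes for at least one $t\in[T]$, so $\mathcal{H}\neq\emptyset$ and the algorithm never falls through its loop. Together these give the theorem.

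\textbf{Correctness of the test.} Fix $t$ and the output $v^1,v^2$, where $v^1=h^1$ on $\mathcal{T}^{\succ_2}_t$ and $0$ elsewhere (symmetrically for $v^2$). Feasibility $v^1\in V^1$ uses SSAS: some attainable coverage dominates $h^1=\textsc{MaximinCov}(\mathcal{T}^{\succ_2}_t,V^1)$ on $\mathcal{T}^{\succ_2}_t$, and downward-closure lets us set the remaining coordinates to $0$. Since $t\notin\mathcal{T}^{\succ_2}_t$ and $t\notin\mathcal{T}^{\succ_1}_t$, we get $v^{total}(t)=0$, so $t\in B(v^{total})$ and the profile is AIC. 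For $2$-IC, note a beneficial deviation $\widehat v^2$ must make the post-deviation best-response set lie entirely in $\mathcal{T}^{\succ_2}_t$ (every tied target must be $\succ_2 t$ under WAIC). On $\mathcal{T}^{\succ_2}_t$ the coverage $v^1$ is the constant $h^1$, and on $\mathcal{T}^{\preceq_2}_t$ defender $1$ contributes $0$; hence defender $2$ can exclude $\mathcal{T}^{\preceq_2}_t$ from the minimum only by covering \emph{all} of $\mathcal{T}^{\preceq_2}_t$ strictly above $h^1$, which is attainable iff $g^1=\textsc{MaximinCov}(\mathcal{T}^{\preceq_2}_t,V^2)>h^1$. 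Thus $h^1\ge g^1$ exactly forbids a defender-$2$ deviation, and symmetrically $h^2\ge g^2$ forbids a defender-$1$ deviation; the profile is an NSE.

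\textbf{Reduction of existence to a combinatorial intersection.} Let $A=\{t: h^1(t)\ge g^1(t)\}$ and $B=\{t: h^2(t)\ge g^2(t)\}$; by the previous paragraph, some standard profile for $t$ is an NSE iff $t\in A\cap B$. As $t$ descends in $\succ_2$, the set $\mathcal{T}^{\succ_2}_t$ grows and $\mathcal{T}^{\preceq_2}_t$ shrinks, so by monotonicity of \textsc{MaximinCov} over nested target sets $h^1(t)$ is nonincreasing and $g^1(t)$ is nondecreasing along $\succ_2$; hence $A$ is a prefix of $\succ_2$, and it is nonempty because the top target of $\succ_2$ has $\mathcal{T}^{\succ_2}_t=\emptyset$, giving $h^1=+\infty$. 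Symmetrically $B$ is a nonempty prefix of $\succ_1$. It remains to show $A\cap B\neq\emptyset$.

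\textbf{The main obstacle: proving $A\cap B\neq\emptyset$.} I would argue by contrapositive: if $A\cap B=\emptyset$, then every target is ``pushable'' by some defender. When $t\notin A$ (so $g^1(t)>h^1(t)$), defender $2$ can cover $\mathcal{T}^{\preceq_2}_t$ above $h^1$, leaving all of $\mathcal{T}^{\succ_2}_t$ tied at $h^1$; the WAIC tiebreak then forces the attack onto the least-$\succ_2$-preferred target of $\mathcal{T}^{\succ_2}_t$, i.e.\ the target immediately above $t$ in $\succ_2$, a legitimate beneficial deviation. Defining the analogous one-step successor for $t\notin B$ (one step up in $\succ_1$), every target has an outgoing edge, so this finite digraph contains a cycle that necessarily alternates between the two push types. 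The crux—and the step I expect to be hardest—is to rule out such a cycle using the \textsc{MaximinCov} inequalities attached to its edges. This is precisely where SSAS must be invoked: the clearance counterexample of Figure~\ref{fig:counterexample-existence} realizes exactly such an alternating four-cycle, so any valid argument has to exploit the downward-closure of $V^i$ (the very property that both makes the constructed coverage feasible and reduces each defender's optimal deviation to a clean maximin) to contradict the chain of inequalities around the cycle. Closing this contradiction, rather than the routine verification in the earlier steps, is the heart of the proof.
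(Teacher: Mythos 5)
You have correctly reproduced the paper's skeleton: reduce to $t$-standard coverage, characterize each defender's ``half'' of an equilibrium at $t$ by the single inequality $h^i\ge g^i$ (this is the paper's Theorem on the equivalence $\mathcal{H}^1_t\neq\emptyset \iff M^1(\mathcal{T}^{\succ_2}_t)\ge M^2(\mathcal{T}^{\preceq_2}_t)$, together with the product decomposition $\mathcal{H}_t=\mathcal{H}^1_t\times\mathcal{H}^2_t\times\{t\}$), observe that $A$ is a nonempty prefix of $\succ_2$ and $B$ a nonempty prefix of $\succ_1$, and reduce everything to $A\cap B\neq\emptyset$. One small caveat on the correctness step: the claim that a beneficial deviation by defender $2$ must take the canonical ``cover all of $\mathcal{T}^{\preceq_2}_t$ uniformly'' form is not automatic and needs the paper's deviation-reduction lemma, but your iff is correct in substance. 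The real problem is that the step you yourself flag as the crux --- ruling out $A\cap B=\emptyset$ --- is not proved. A proposal whose hardest step is announced rather than closed has a genuine gap, and the digraph/cycle route you sketch is both incomplete and harder than what is needed.

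The missing argument is a four-term chain of inequalities using only the ingredients you already have. Suppose $A\cap B=\emptyset$. Since $B\neq\emptyset$ and $B\subseteq\overline{A}$, the complement $\overline{A}$ is nonempty; let $k$ be its $\succ_2$-most-preferred element. Because $A$ is a prefix of $\succ_2$, we get $\mathcal{T}^{\succ_2}_k=A$ and $\mathcal{T}^{\preceq_2}_k=\overline{A}$, and $k\notin A$ gives $M^1(A)<M^2(\overline{A})$. Symmetrically, taking the $\succ_1$-most-preferred element of $\overline{B}$ gives $M^2(B)<M^1(\overline{B})$. Disjointness gives $B\subseteq\overline{A}$ and $A\subseteq\overline{B}$, so monotonicity of \textsc{MaximinCov} under set inclusion yields $M^2(\overline{A})\le M^2(B)$ and $M^1(\overline{B})\le M^1(A)$. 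Chaining,
\begin{align*}
M^1(A) \;<\; M^2(\overline{A}) \;\le\; M^2(B) \;<\; M^1(\overline{B}) \;\le\; M^1(A),
\end{align*}
a contradiction. This is exactly how the paper concludes; no cycle construction is required, and SSAS enters only through the feasibility of the $t$-standard coverages and the validity of the maximin characterization, not through the combinatorics of the intersection step.
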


\subsection{Efficiency of NSE}
    Blindly applying Algorithm~\ref{alg:basic} can lead to a pathology where the attacked target is undesirable for both players, i.e., a NSE $(\mathbf{v}, t)$ can have $v^1(\widetilde{t}) \geq 0$ and $v^2(\widetilde{t}) \geq 0$ when $\widetilde{t} \succ_1 t$ \textit{and} $\widetilde{t} \succ_2 t$.
    For example, suppose $1 \succ_1 2 \succ_1 3$ and $1 \succ_2 2 \succ_2 3$ where
    defenders possess identity schedules, i.e., 
    $\mathcal{S}^i = \{ (1, 0, 0), (0, 1, 0), (0, 0, 1) \}$.
    Setting $\widetilde{v}^1=\widetilde{v}^2=(1,0,0)$,
    we find that $(\mathbf{\widetilde{v}}, 2)$ is an NSE as it is AIC and neither defender can give target $1$ a coverage strictly greater than $1$ by deviating.
    However, both defenders prefer target $1$ to be attacked. Indeed, a trivial NSE is $(\mathbf{v}, 1)$ where $v^1=v^2=(0, 0, 0)$, i.e., the profile $(\mathbf{v}, 1)$ ``Pareto dominates'' $(\mathbf{\widetilde{v}}, 2)$. 

\begin{definition}[Inefficiency] An NSE $(v^1, v^2, t) \in \mathcal{H}_t$ is inefficient (resp. efficient) if and only if there exists  $j$ where $j \succ_1 t$ and $j \succ_2 t$. 
Targets constituting inefficient (resp. efficient) NSE are called inefficient targets.
\label{def:ineff_NSE}
\end{definition}



Targets where $\mathcal{H}_t=\emptyset$ are neither efficient nor inefficient. 
There may exist multiple efficient NSE. However, we show that efficient NSE exists under SSAS.

\begin{lemma}
    Let $j, t \in [T]$ such that $\mathcal{H}_t\neq \emptyset$, $j \succ_1 t$ and $j \succ_2 t$. Then, $\mathcal{H}_j\neq \emptyset$.
    \label{lem: no same cover}
\end{lemma}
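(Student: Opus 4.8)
The plan is to argue entirely through the maximin characterization that underlies Algorithm~\ref{alg:basic}. For a fixed target $t$, non-emptiness of $\mathcal{H}_t$ is equivalent to the acceptance test of the algorithm, namely $h^1 \geq g^1$ and $h^2 \geq g^2$, where (making the dependence on the attacked target explicit) $h^1_t = \textsc{MaximinCov}(\mathcal{T}^{\succ_2}_t, V^1)$, $g^1_t = \textsc{MaximinCov}(\mathcal{T}^{\preceq_2}_t, V^2)$, $h^2_t = \textsc{MaximinCov}(\mathcal{T}^{\succ_1}_t, V^2)$, and $g^2_t = \textsc{MaximinCov}(\mathcal{T}^{\preceq_1}_t, V^1)$. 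This equivalence is exactly the correctness statement for Algorithm~\ref{alg:basic} that is used to prove Theorem~\ref{thm:existence-2p}, which I take as given. Hence from $\mathcal{H}_t \neq \emptyset$ I obtain $h^1_t \geq g^1_t$ and $h^2_t \geq g^2_t$, and it suffices to establish $h^1_j \geq g^1_j$ and $h^2_j \geq g^2_j$.

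The single structural fact I need is monotonicity of \textsc{MaximinCov} under set inclusion: if $\mathcal{A} \subseteq \mathcal{B}$ then $\textsc{MaximinCov}(\mathcal{A}, V) \geq \textsc{MaximinCov}(\mathcal{B}, V)$, because for every fixed $v \in V$ the minimum of $v$ over the smaller set $\mathcal{A}$ is at least its minimum over $\mathcal{B}$, and this inequality is preserved after taking the maximum over $v \in V$. The convention $\textsc{MaximinCov}(\emptyset) = +\infty$ is consistent with this and transparently handles the boundary case where $j$ is top-ranked for a defender.

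Next I would record how the preference sets nest when passing from $t$ to a target $j$ that both defenders strictly prefer. Since $j \succ_1 t$ and $\succ_1$ is a strict total order, anything ranked above $j$ is ranked above $t$, giving $\mathcal{T}^{\succ_1}_j \subseteq \mathcal{T}^{\succ_1}_t$, and dually $\mathcal{T}^{\preceq_1}_t \subseteq \mathcal{T}^{\preceq_1}_j$. Symmetrically, $j \succ_2 t$ yields $\mathcal{T}^{\succ_2}_j \subseteq \mathcal{T}^{\succ_2}_t$ and $\mathcal{T}^{\preceq_2}_t \subseteq \mathcal{T}^{\preceq_2}_j$.

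Finally I would chain monotonicity against these inclusions. For defender $1$'s incentive condition at $j$, monotonicity gives $h^2_j = \textsc{MaximinCov}(\mathcal{T}^{\succ_1}_j, V^2) \geq h^2_t$ and $g^2_j = \textsc{MaximinCov}(\mathcal{T}^{\preceq_1}_j, V^1) \leq g^2_t$, so $h^2_j \geq h^2_t \geq g^2_t \geq g^2_j$; the symmetric argument using the defender-$2$ inclusions gives $h^1_j \geq h^1_t \geq g^1_t \geq g^1_j$. Both acceptance conditions therefore hold at $j$, so $\mathcal{H}_j \neq \emptyset$. The only delicate point is the maximin characterization of $\mathcal{H}_t$ itself, i.e.\ that violating $h^i \geq g^i$ corresponds precisely to a profitable pessimistic (WAIC) deviation; but this is exactly the correctness argument already established for Algorithm~\ref{alg:basic} in Theorem~\ref{thm:existence-2p}, so I would invoke it as a black box, leaving only the monotonicity bookkeeping above as the actual content of this lemma.
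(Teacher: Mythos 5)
Your proof is correct and follows essentially the same route as the paper: the paper's own argument combines Theorem~\ref{the:prod_thm} (reducing $\mathcal{H}_t\neq\emptyset$ to non-emptiness of the partial sets $\mathcal{H}^1_t,\mathcal{H}^2_t$), Theorem~\ref{equivalent theorem} (the maximin characterization $M^1(\mathcal{T}_t^{\succ_2})\ge M^2(\mathcal{T}_t^{\preceq_2})$ etc.), and Lemma~\ref{lem: one side} together with the set-inclusion monotonicity of Lemma~\ref{lem: subset} --- which is exactly the chain $h^2_j \ge h^2_t \ge g^2_t \ge g^2_j$ you write out inline. The only difference is presentational: you unfold the cited lemmas rather than invoking them by name.
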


\subsection{Exploiting Additional Structure in $V^i$}
We now move towards characterizing and solving for equilibrium in two classes of games which contain even more structure in $V^i$. 
\subsubsection{Finding NSE under Monotone Schedules}
The first is the case of \textit{Monotone Schedules}, where each defender's schedule places less coverage on targets preferred to be attacked.
This allows us to efficiently solve for NSE when $n\geq 2$.
\begin{assumption}[Monotone schedules (MS)]
    A schedule $s^i_z \in \mathcal{S}^i$ 
    is Monotone if $j \succ_i t \implies s^i_z(j) \leq s^i_z(t)$. The game possesses Monotone Schedules (MS) if all defender schedules  are monotone.
\end{assumption}
\begin{theorem}
    Under both MS and SSAS, a NSE exists for $n \geq 2$. It can be computed in polynomial time in the number of schedules, $n$, and $T$. 
    \label{thm:multiNSEexist}
\end{theorem}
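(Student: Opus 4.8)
The plan is to reduce NSE existence under MS+SSAS to a family of linear feasibility problems, one per candidate attacked target, after drastically simplifying the deviation-blocking condition using monotonicity. First I would import the canonical-form reductions: applying (the analog of) Lemma~\ref{lem:step1} to every defender, any NSE may be assumed to have $v^{total}(t)=0$, hence $v^i(t)=0$ for all $i$. Under MS every schedule, and therefore every $v^i\in V^i$ (SSAS preserves monotonicity), is non-increasing along $\succ_i$, so $v^i(t)=0$ forces $v^i(j)=0$ for all $j\succ_i t$. Thus in canonical form defender $i$ covers only targets in $\mathcal{T}^{\prec_i}_t$, and in particular $v^{-i}(t)=0$.

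The key step is simplifying $i$-IC. As in the two-defender analysis, defender $i$ benefits iff it can force the whole minimum-coverage set into $\mathcal{T}^{\succ_i}_t$, which is optimally attempted by zeroing its coverage on $\mathcal{T}^{\succ_i}_t$ and maximizing the minimum coverage over $\mathcal{T}^{\preceq_i}_t$. Here monotonicity pays off: since any $\widehat v^i\in V^i$ is monotone and $v^{-i}(t)=0\le v^{-i}(k)$, target $t$ is the least-covered member of $\mathcal{T}^{\preceq_i}_t$, so the best achievable minimum is exactly $\max_{\widehat v^i\in V^i}\widehat v^i(t)=\max_z s^i_z(t)=:c^i_t$. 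Consequently $i$-IC collapses to $\min_{j\in\mathcal{T}^{\succ_i}_t} v^{-i}(j)\ge c^i_t$ (vacuous when $\mathcal{T}^{\succ_i}_t=\emptyset$). Because a contributor $i'$ to target $j$ must satisfy $j\prec_{i'}t$, this rewrites per target as $\sum_{i':\,j\prec_{i'}t} v^{i'}(j)\ge C^t_j$, where $C^t_j:=\max_{i:\,j\succ_i t} c^i_t$ (absent when no defender wants $j$ attacked).

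For computation, for each fixed $t\in[T]$ this is a single LP in the schedule distributions $x^{i'}$ and coverages $v^{i'}$: impose SSAS membership, $v^{i'}(j)=0$ for $j\succeq_{i'}t$, and the covering bounds $\sum_{i'} v^{i'}(j)\ge C^t_j$; AIC then holds automatically since $v^{total}(t)=0$ is minimal. Iterating over $t$ and returning the first feasible profile runs in time polynomial in $\sum_i S^i$, $n$, and $T$, giving the stated algorithm.

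Existence is the crux: I must show the LP is feasible for at least one $t$. A demanded target $j$ with no willing contributor (i.e.\ $j\succeq_{i'}t$ for all $i'$, strict for some) is precisely a target Pareto-dominating $t$, so any feasible $t$ must be Pareto-efficient, and such targets certainly exist. The obstacle is that Pareto-efficiency alone need not supply enough \emph{joint} schedule capacity to meet all covering bounds at once. I would resolve this either by (a) a terminating deviation process: starting from the full-flood profile at some $t$ and following any profitable deviation (which strictly raises the attacked target in the deviator's order $\succ_i$), then using monotonicity to construct a potential that rules out the cyclic behavior exhibited by the non-MS counterexample of Example~\ref{example:counterexample}; or (b) an extremal choice of $t$ minimizing aggregate demand, combined with the monotonicity inequality $\max_z s^{i'}_z(j)\ge c^{i'}_t$ for willing defenders and an LP-duality / Hall-type feasibility check on the induced covering system. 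Establishing that some $t$ is always feasible — equivalently, that MS precludes the deviation cycles possible without it — is the step I expect to demand the most care.
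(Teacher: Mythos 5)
Your reduction of $i$-IC is essentially right and lands on the same quantities the paper uses: under MS, $\textsc{MaximinCov}(\mathcal{T}^{\preceq_i}_t, V^i)=\max_z s^i_z(t)=c^i_t$ (the paper's $\mathcal{M}_{i,\pi_i(t)}$), and in a profile with $v^{i}(t)=0$ for all $i$ and $v^i$ supported on $\mathcal{T}^{\prec_i}_t$, defender $i$ cannot profitably deviate iff $v^{-i}(j)\ge c^i_t$ for every $j\succ_i t$. (Two small cautions: ``SSAS preserves monotonicity'' is false --- SSAS lets a defender drop coverage anywhere, so elements of $V^i$ need not be monotone --- and zeroing $v^i$ on $\mathcal{T}^{\succ_i}_t$ is not a valid transformation of an \emph{arbitrary} NSE since it can break AIC. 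Neither matters for a constructive existence proof, where you may simply restrict the search to profiles of this form.)

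The genuine gap is the one you flag yourself: you never establish that the covering system is feasible for some $t$, and that is the entire content of the existence claim. Your direction (b) is the right one and is what the paper does, but the details that make it work are exactly what is missing. Concretely: set $F(t)=\max_i c^i_t$ and pick $k^*\in\Argmin_t F(t)$; assign each $j\neq k^*$ to a \emph{single} defender $i$ attaining $\mathcal{M}_{i,\pi_i(j)}=F(j)$ and give it coverage $F(k^*)$ there, zero for everyone else. Three verifications then carry the proof, none of which appear in your sketch: (i) the assigned defender satisfies $j\prec_i k^*$, because $c^i_{(\cdot)}$ is non-decreasing along $\prec_i$ and $c^i_j=F(j)>F(k^*)\ge c^i_{k^*}$ would otherwise be contradicted (ties $F(j)=F(k^*)$ need a separate tie-breaking relation, handled in the paper's appendix via a transitive order $\triangleleft$ on $\Argmin_t F(t)$); (ii) each defender can realize its whole assignment \emph{simultaneously}, because its assigned targets all lie in $\mathcal{T}^{\preceq_i}_{j^*}$ for its most-preferred assigned target $j^*$, a single monotone schedule gives height $F(j^*)\ge F(k^*)$ on that whole set, and SSAS trims it down --- no Hall-type or LP-duality argument is needed; (iii) the resulting $v^{total}(j)=F(k^*)\ge c^i_{k^*}$ for all $j\neq k^*$ discharges every defender's covering requirement at once. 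Without (i)--(iii), or a worked-out version of your alternative (a), the proposal proves the algorithm would return an NSE \emph{if one of the LPs is feasible}, but not that an NSE exists.
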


\begin{proof}

Define $\pi_i(t) \in [T]$ such that target $t$ is the $\pi_i(t)$-th preferred target of defender $i$. 
Define a matrix $\mathcal{M}\in \mathbb{R}^{n\times T}$, where $\mathcal{M}_{i, \pi_i(j)} = \textsc{MaximinCov}(\mathcal{T}_{j}^{\preceq_i}, V^i)$ for each defender $i$ and target $j$.
\footnote{i.e., for each defender, reorder targets from most to least preferred (Figure~\ref{fig:NSEspFigLemma}) and compute the maximin coverage for targets comprising $t$ and everything less preferred.}
$\mathcal{M}$ is non-decreasing from left to right. 
Let $F(t) = \max_{i} \mathcal{M}_{i, \pi_i(t)}$. 
For convenience, we assume $F(t)\neq F(t')$ for $t\neq t'$ such that $\Argmin_{t} F(t)=\{ k^* \}$. 
We construct a NSE $(\mathbf{v}, k^*)$ where: 
\begin{enumerate}
\item for all defenders $i$, $v^i(k^*) = 0$ such that $v^{total}(k^*)=0$, and
\item for every $t \neq k^*$, we find a defender $i$ where $\mathcal{M}_{i, \pi_i(t)} = F(t)$. We set coverage $v^i(t) = F(k^*)$ and $v^{i'}(t) = 0$ for all $i' \in [n] \backslash \{i\}$.
\end{enumerate}
Clearly, the above algorithm runs polynomial time. 
In Step 2, we have by definition at least one defender $i$ satisfying $\mathcal{M}_{i, \pi_i(t)} = F(t)$. We first show that $\mathbf{v}$ is achievable, i.e., $v^i \in V^i$ for all defenders $i$. 
Fix $i \in [n]$. Let $\mathcal{T}$ be the set of targets covered by it, each with coverage $F(k^*)$. By Step 1, $k^*\notin \mathcal{T}$. Furthermore, $\mathcal{T} \subseteq \mathcal{T}_{k^*}^{\prec_i}$. If target $j\in \mathcal{T}$ and $j \succ_i k^*$, then $\mathcal{M}_{i, \pi_i(j)}\le \mathcal{M}_{i, \pi_i(k^*)}$ since $\mathcal{M}$ is non-decreasing. This contradicts $\mathcal{M}_{i, \pi_i(j)} = F(j) > F(k^*) \ge \mathcal{M}_{i, \pi_i(k^*)}$ from the definition of $F$ and $k^*$.
Now, let $j^*$ be defender $i$'s most preferred target in $\mathcal{T}$. 
Consider a coverage $\widetilde{v}^i$ with $\widetilde{v}^i(j) = F(j^*)$ for $j \preceq_i j^*$ and $\widetilde{v}^i(j) = 0$ otherwise. 
In Step 2, $v^i(j^*) > 0$ implies $F(j^*) = \mathcal{M}_{i, \pi_i(j^*)}$, because only targets $j$ such that $F(j) = \mathcal{M}_{i, \pi_i(j)}$ are covered by $i$. Hence, $F(j^*) = \mathcal{M}_{i, \pi_i(j^*)} = \textsc{MaximinCov}(\mathcal{T}_{j^*}^{\preceq_i}, V^i)$ and 
$\widetilde{v}^i \in V^i$. 
Since $F(j^*)\ge F(k^*)$, $\widetilde{v}^i(j) \geq v^i(j)$ for $j \in [T]$, i.e., coverage of $\widetilde{v}^i$ is no less than $v^i$. Because $\widetilde{v}^i \in V^i$, $v^i\in V^i$ by SSAS.


Lastly, we show that $(\mathbf{v}, k^*)$ is indeed an NSE.
It is AIC since $v^{total}(k^*) = 0$. 
Next, we prove that for any defender $i$, any $\widehat{v}^i\in V^i$ and any target $t \succ_i k^*$, $(v^1, \cdots, \widehat{v}^i, \cdots, v^n, t)$ is not $i$-WAIC. 
First, we show that $\widehat{v}^{total}(t) \ge F(k^*)$. We have
$\widehat{v}^{total}(t) = \widehat{v}^i(t) + \sum_{i'\neq i}v^{i'}(t) \ge \sum_{i'\neq i}v^{i'}(t)$. 
Since $v^i$ has no coverage on $\mathcal{T}_{k^*}^{\succ_i}$, $v^i(t) = 0$. Therefore, $\sum_{i'\neq i}v^{i'}(t) = v^{total}(t) = F(k^*)$. We have $\widehat{v}^{total}(t) \ge \sum_{i'\neq i}v^{i'}(t) = F(k^*)$. Second, $\widehat{v}^{total}(k^*)\le F(k^*)$. $\widehat{v}^{total}(k^*) = \widehat{v}^{i}(k^*)$ since $v^{i'}(k^*) = 0$ for any $i'\neq i$. We claim that $\widehat{v}^{i}(k^*) \le F(k^*)$. If $\widehat{v}^{i}(k^*) > F(k^*)$, defender $i$ can cover all targets in $\mathcal{T}_{k^*}^{\preceq_i}$ with $\widehat{v}^{i}(k^*) > F(k^*)$ by MS. Therefore, $\mathcal{M}_{i, \pi_i(k^*)} > F(k^*)$, which conflicts with $F(k^*) = \max_{i'} \mathcal{M}_{i', \pi_{i'}(k^*)}$. We have $\widehat{v}^{total}(k^*) = \widehat{v}^{i}(k^*) \le F(k^*)$.
In conclusion, $\widehat{v}^{total}(t) \ge \widehat{v}^{total}(k^*)$. If $t\in B(\widehat{v}^{total})$, $k^*\in B(\widehat{v}^{total})$ holds, so $(v^1, \cdots, \widehat{v}^i, \cdots, v^n, t)$ is not $i$-WAIC. $(\mathbf{v}, k^*)$ is $i$-IC. \hfill$\blacksquare$



\end{proof}

\subsubsection{Efficient Solvers when $V^i$ is Compactly Represented }
\label{sec:efficient-compact}

In Algorithm~\ref{alg:basic}, we were required to optimize over $t$-standard coverage for each defender. This involves solving linear programs.
Unfortunately, the number of schedules can be prohibitively large. For example, in patrolling on layered networks, the number of schedules is exponential in its depth and is computationally infeasible for large games.
Fortunately, both our proof of existence and algorithm operate in the space of attainable coverage $V^i$ and not directly on $x$ and $S^i$. 
In our example, $V^i$ can be expressed as flows in the network (and more generally, any directed acyclic graph with a source and a sink), which in turn is a polyhedron with a polynomial number of constraints (in terms of the number of edges and vertices). 

\section{Experiments}
Our experiments are conducted on an Intel(R) Core(TM) i7-7700K CPU @ 4.20GHz. We use Gurobi \cite{gurobi} to solve linear programs. We seek to answer the following.
    (i) Can NSE be practically computed for reasonable environments? How does computational time scale with parameters such as $T$, $S^i$, $V^i$, and $n$?
    (ii) How does an NSE look like qualitatively? When $n=2$, how many NSE are efficient? What proportion of targets are included in \textit{some} NSE? 
    (iii) What is the quality (in terms of the attacked target) in the multiple-defender setting as compared to single defender settings?
We explore 3 synthetically generated games, where defender preferences $\succ_i$ are generated uniformly at random.
\begin{itemize}
\item \textit{Randomly Generated Schedules (RGS)}.
We generate games with random schedules where each $s^i_j$ is a random integer from $[0, 10]$, and the number of schedules $S^i$, $T$ are specified. In some cases, we limit each schedule's support size to be smaller than $T$ (with the support again randomly selected). 
\item \textit{Public Security on Grids (PSG)}.
\begin{figure}[t]
\centering
\begin{subfigure}[t]{.20 \textwidth}
\centering
\begin{tikzpicture}[every node/.style={circle, inner sep=0pt}]
  \pgfmathsetseed{142857}%
  \foreach \x in {1,...,4} {
    \foreach \y in {1,...,4} {
      \pgfmathsetmacro{\nodesize}{1.5 + 1.5*rnd} 
      \pgfmathsetmacro{\filldarkness}{20 + 200*rnd} 
      \definecolor{nodecolor}{RGB}{\filldarkness, \filldarkness, \filldarkness}
      \node[draw, line width=1 pt, fill=nodecolor, minimum size=\nodesize mm] (\x-\y) at (0.5*\x, 0.5*\y) {};
    }
  }

  \foreach \x in {1,...,4} {
    \foreach \y in {1,...,4} {
      \ifnum\x<4
        \draw (\x-\y) -- (\the\numexpr\x+1\relax-\y);
      \fi
      
      \ifnum\y>1
        \draw (\x-\y) -- (\x-\the\numexpr\y-1\relax);
      \fi
    }
  }
\end{tikzpicture}
\caption{}
\label{fig:public-sec-population}
\end{subfigure}
\begin{subfigure}[t]{.20 \textwidth}
\centering
\begin{tikzpicture}[every node/.style={circle, inner sep=0pt}]
  \tikzset{cross/.style={cross out, draw, 
         minimum size=2*(#1-\pgflinewidth), 
         inner sep=0pt, outer sep=0pt}}
\definecolor{C4}{RGB}{0, 0, 0}
\definecolor{C3}{RGB}{60, 60, 60}
\definecolor{C2}{RGB}{120, 120, 120}
\definecolor{C1}{RGB}{180, 180, 180}
\definecolor{C0}{RGB}{240, 240, 240}

\node[draw, fill=C1, line width=1 pt, minimum size=3 mm] (0.5-0.5) at (0.5, 0.5) {};
\node[draw, fill=C1, line width=1 pt, minimum size=3 mm] (1.0-0.5) at (1.0, 0.5) {};
\node[draw, fill=C2, line width=1 pt, minimum size=3 mm] (1.5-0.5) at (1.5, 0.5) {};
\node[draw, fill=C1, line width=1 pt, minimum size=3 mm] (2.0-0.5) at (2.0, 0.5) {};

\node[draw, fill=C1, line width=1 pt, minimum size=3 mm] (0.5-1.0) at (0.5, 1.0) {};
\node[draw, fill=C2, line width=1 pt, minimum size=3 mm] (1.0-1.0) at (1.0, 1.0) {};
\node[draw, fill=C2, line width=1 pt, minimum size=3 mm] (1.5-1.0) at (1.5, 1.0) {};
\node[draw, fill=C1, line width=1 pt, minimum size=3 mm] (2.0-1.0) at (2.0, 1.0) {};

\node[draw, fill=C1, line width=1 pt, minimum size=3 mm] (0.5-1.5) at (0.5, 1.5) {};
\node[draw, fill=C2, line width=1 pt, minimum size=3 mm] (1.0-1.5) at (1.0, 1.5) {};
\node[draw, fill=C1, line width=1 pt, minimum size=3 mm] (1.5-1.5) at (1.5, 1.5) {};
\node[draw, fill=C1, line width=1 pt, minimum size=3 mm] (2.0-1.5) at (2.0, 1.5) {};

\node[draw, fill=C0, line width=1 pt, minimum size=3 mm] (0.5-2.0) at (0.5, 2.0) {};
\node[draw, fill=C1, line width=1 pt, minimum size=3 mm] (1.0-2.0) at (1.0, 2.0) {};
\node[draw, fill=C1, line width=1 pt, minimum size=3 mm] (1.5-2.0) at (1.5, 2.0) {};
\node[draw, fill=C1, line width=1 pt, minimum size=3 mm] (2.0-2.0) at (2.0, 2.0) {};

  \draw (1.5,1.5) node[cross=5pt,rotate=0,green]{};
  \draw (1.0,0.5) node[cross=5pt,rotate=0,green]{};
  \foreach \x in {1,...,4} {
    \foreach \y in {1,...,4} {
      \ifnum\x<4
        \draw (\x-\y) -- (\the\numexpr\x+1\relax-\y);
      \fi
      
      \ifnum\y>1
        \draw (\x-\y) -- (\x-\the\numexpr\y-1\relax);
      \fi
    }
  }
\end{tikzpicture}
\caption{}
\label{fig:public-sec-police-coverage}
\end{subfigure}
\begin{subfigure}[t]{.20 \textwidth}
\centering
\begin{tikzpicture}[every node/.style={circle, inner sep=0pt}]
  \tikzset{cross/.style={cross out, draw, 
         minimum size=2*(#1-\pgflinewidth), 
         inner sep=0pt, outer sep=0pt}}

\definecolor{C4}{RGB}{0, 0, 0}
\definecolor{C3}{RGB}{60, 60, 60}
\definecolor{C2}{RGB}{120, 120, 120}
\definecolor{C1}{RGB}{180, 180, 180}
\definecolor{C0}{RGB}{240, 240, 240}

\node[draw, fill=C1, line width=1 pt, minimum size=3 mm] (0.5-0.5) at (0.5, 0.5) {};
\node[draw, fill=C1, line width=1 pt, minimum size=3 mm] (1.0-0.5) at (1.0, 0.5) {};
\node[draw, fill=C2, line width=1 pt, minimum size=3 mm] (1.5-0.5) at (1.5, 0.5) {};
\node[draw, fill=C0, line width=1 pt, minimum size=3 mm] (2.0-0.5) at (2.0, 0.5) {};

\node[draw, fill=C1, line width=1 pt, minimum size=3 mm] (0.5-1.0) at (0.5, 1.0) {};
\node[draw, fill=C2, line width=1 pt, minimum size=3 mm] (1.0-1.0) at (1.0, 1.0) {};
\node[draw, fill=C2, line width=1 pt, minimum size=3 mm] (1.5-1.0) at (1.5, 1.0) {};
\node[draw, fill=C2, line width=1 pt, minimum size=3 mm] (2.0-1.0) at (2.0, 1.0) {};

\node[draw, fill=C2, line width=1 pt, minimum size=3 mm] (0.5-1.5) at (0.5, 1.5) {};
\node[draw, fill=C2, line width=1 pt, minimum size=3 mm] (1.0-1.5) at (1.0, 1.5) {};
\node[draw, fill=C2, line width=1 pt, minimum size=3 mm] (1.5-1.5) at (1.5, 1.5) {};
\node[draw, fill=C1, line width=1 pt, minimum size=3 mm] (2.0-1.5) at (2.0, 1.5) {};

\node[draw, fill=C0, line width=1 pt, minimum size=3 mm] (0.5-2.0) at (0.5, 2.0) {};
\node[draw, fill=C2, line width=1 pt, minimum size=3 mm] (1.0-2.0) at (1.0, 2.0) {};
\node[draw, fill=C1, line width=1 pt, minimum size=3 mm] (1.5-2.0) at (1.5, 2.0) {};
\node[draw, fill=C1, line width=1 pt, minimum size=3 mm] (2.0-2.0) at (2.0, 2.0) {};

  \draw (1.5,1.5) node[cross=5pt,rotate=0,blue]{};
  \draw (1.0,1.0) node[cross=5pt,rotate=0,blue]{};
  \foreach \x in {1,...,4} {
    \foreach \y in {1,...,4} {
      \ifnum\x<4
        \draw (\x-\y) -- (\the\numexpr\x+1\relax-\y);
      \fi
      
      \ifnum\y>1
        \draw (\x-\y) -- (\x-\the\numexpr\y-1\relax);
      \fi
    }
  }
\end{tikzpicture}
\caption{}
\label{fig:public-sec-VIP-coverage}
\end{subfigure}
\begin{subfigure}[t]{.20 \textwidth}
\centering
\begin{tikzpicture}[every node/.style={circle, inner sep=0pt}]
  \tikzset{cross/.style={cross out, draw, 
         minimum size=2*(#1-\pgflinewidth), 
         inner sep=0pt, outer sep=0pt}}

\definecolor{C4}{RGB}{0, 0, 0}
\definecolor{C3}{RGB}{60, 60, 60}
\definecolor{C2}{RGB}{120, 120, 120}
\definecolor{C1}{RGB}{180, 180, 180}
\definecolor{C0}{RGB}{240, 240, 240}

\node[draw, fill=C2, line width=1 pt, minimum size=3 mm] (0.5-0.5) at (0.5, 0.5) {};
\node[draw, fill=C3, line width=1 pt, minimum size=3 mm] (1.0-0.5) at (1.0, 0.5) {};
\node[draw, fill=C4, line width=1 pt, minimum size=3 mm] (1.5-0.5) at (1.5, 0.5) {};
\node[draw, fill=C1, line width=1 pt, minimum size=3 mm] (2.0-0.5) at (2.0, 0.5) {};

\node[draw, fill=C2, line width=1 pt, minimum size=3 mm] (0.5-1.0) at (0.5, 1.0) {};
\node[draw, fill=C4, line width=1 pt, minimum size=3 mm] (1.0-1.0) at (1.0, 1.0) {};
\node[draw, fill=C4, line width=1 pt, minimum size=3 mm] (1.5-1.0) at (1.5, 1.0) {};
\node[draw, fill=C3, line width=1 pt, minimum size=3 mm] (2.0-1.0) at (2.0, 1.0) {};

\node[draw, fill=C3, line width=1 pt, minimum size=3 mm] (0.5-1.5) at (0.5, 1.5) {};
\node[draw, fill=C4, line width=1 pt, minimum size=3 mm] (1.0-1.5) at (1.0, 1.5) {};
\node[draw, fill=C3, line width=1 pt, minimum size=3 mm] (1.5-1.5) at (1.5, 1.5) {};
\node[draw, fill=C3, line width=1 pt, minimum size=3 mm] (2.0-1.5) at (2.0, 1.5) {};

\node[draw, fill=C0, line width=1 pt, minimum size=3 mm] (0.5-2.0) at (0.5, 2.0) {};
\node[draw, fill=C3, line width=1 pt, minimum size=3 mm] (1.0-2.0) at (1.0, 2.0) {};
\node[draw, fill=C2, line width=1 pt, minimum size=3 mm] (1.5-2.0) at (1.5, 2.0) {};
\node[draw, fill=C2, line width=1 pt, minimum size=3 mm] (2.0-2.0) at (2.0, 2.0) {};

  \foreach \x in {1,...,4} {
    \foreach \y in {1,...,4} {
      \ifnum\x<4
        \draw (\x-\y) -- (\the\numexpr\x+1\relax-\y);
      \fi
      
      \ifnum\y>1
        \draw (\x-\y) -- (\x-\the\numexpr\y-1\relax);
      \fi
    }
  }
\end{tikzpicture}
\caption{}
\label{fig:public-sec-total-coverage}
\end{subfigure}
\caption{Public Security Game with $m=4$, $r=2$, using the L1-distance.
(a) Preference profiles for defenders. Size of nodes represent relative values of targets for each defender.
(b) Coverage by the defender $1$, $v^1$ by evenly distributing resource between  buildings marked by in green. Darker targets enjoy a higher coverage. Note that overlapping regions get double the coverage.
(c) Same as (b), but for defender $2$.
(d) Total coverage $v^{total}$ accrued over both entities. 
}
\label{fig:public-sec}
\end{figure}
An event is held in the streets of Manhattan, which we abstract as a $m \times m$ grid (Figure~\ref{fig:public-sec}), where each vertex represents a target (building).
The security for the event is managed by two entities: the city police and VIP security detail, each with different priorities. 
Each places checkpoints distributed across buildings in the city, which provide coverage in a radius $r$. The level of coverage is dependent on the average number of officers allocated to it. 
An example with $m=4, r=2$ is shown in Figure~\ref{fig:public-sec}. 

\item \textit{Patrolling on Layered Networks (PLN)}.
We follow the motivating example in Section~\ref{sec:background}, varying the width and number of layers. Each patrol can only change its ``level'' (position on the y-axis) by at most one step between layers. Unlike the public security game, there are now an exponential number of paths, hence computational costs become an important consideration as well.
\end{itemize}

\subsection{Computational Costs of Computing NSE}
We first restrict ourselves to $2$-defender settings under SASS. We evaluate computational efficiency of the algorithms of Figure~\ref{fig:algo-basic}, with results for RGS, PSG and PLN summarized in Figures~\ref{fig:running_time_two_defender} and \ref{fig:grid_time-and-network_time}. For PLN, we utilized the efficient method in Section~\ref{sec:efficient-compact}.
We ran 100 trials for each parameter setting and report the mean computation time and their standard errors (which were mostly negligible).

In RGS, we varied $T$, $S^i$ and the support size of each schedule. As expected, the average running time increases superlinearly with $T$ (Figure~\ref{fig:expt-runtime-rgs-T}), since the loop in Algorithm~\ref{alg:basic} is repeated more times, and calls to \textsc{MaximinCov} also incur a higher cost. However, Figure~\ref{fig:expt-runtime-rgs-S} shows that as $S^i$ increases, the required running time increases linearly. This is unexpected, since $S^i$ is involved in the \textsc{MaximinCov} subroutine, whose constraint matrix grows linearly with $S^i$. This suggests that the runtime of \textsc{MaximinCov} grows linearly with $S^i$, atypical of linear programs. This could be because (i) our problems are small by standards of modern solvers, or (ii) the solver exploits additional structure under the hood. Lastly, Figure~\ref{fig:expt-runtime-rgs-Supp} shows that adjusting support size does not impact running time. This is unsurprising since the solver is not explicitly told to exploit sparsity.

\begin{figure}[t]
	\centering

    \begin{subfigure}[t]{0.32 \textwidth}
            \centering
    	  \includegraphics[width=\linewidth]{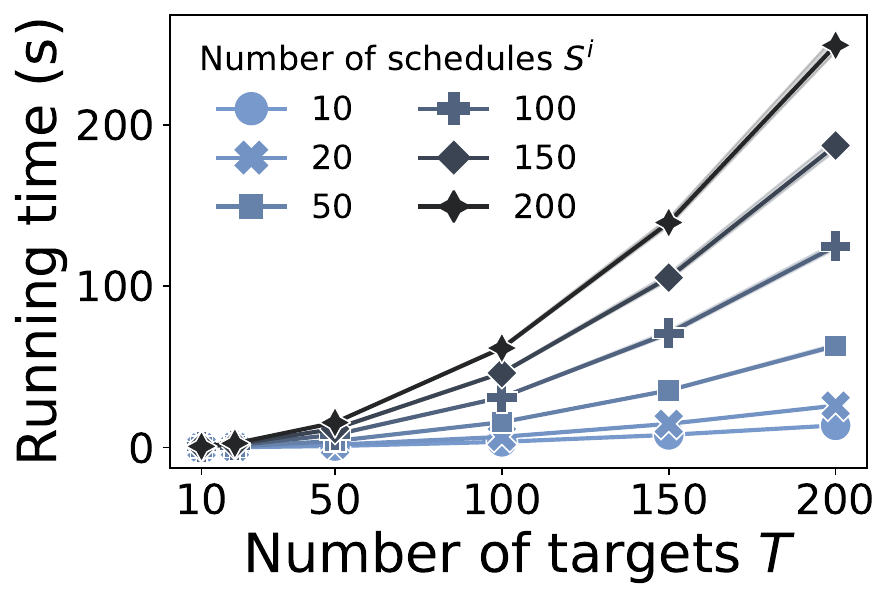}
            \caption{Runtime as $T$ varies}
            \label{fig:expt-runtime-rgs-T}
    \end{subfigure}
    \begin{subfigure}[t]{0.32 \textwidth}
    \centering
    	  \includegraphics[width=\linewidth]{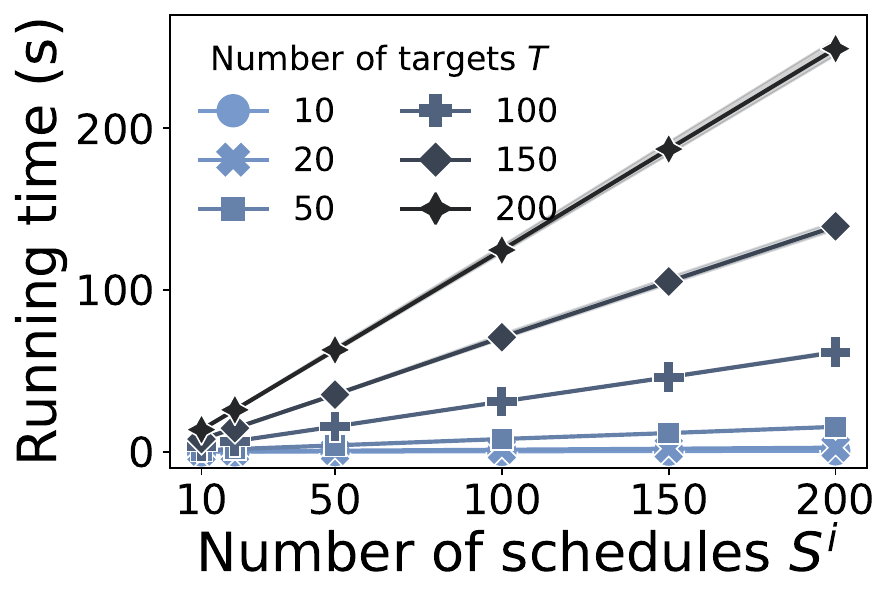}
            \caption{Runtime as $S^i$ varies}
            \label{fig:expt-runtime-rgs-S}
    \end{subfigure}
    \begin{subfigure}[t]{0.32 \textwidth}
    \centering
        \includegraphics[width=\linewidth]{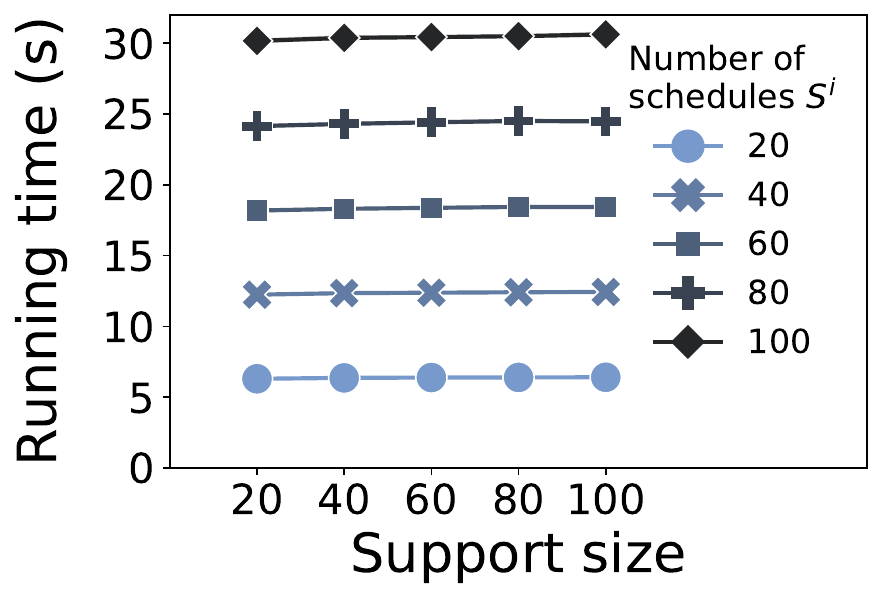}
        \caption{Runtime as support size varies, $T=100$}
            \label{fig:expt-runtime-rgs-Supp}
    \end{subfigure}
	\caption{Wallclock time to find one efficient NSE using the algorithm in Figure~\ref{fig:algo-basic}.}
	\label{fig:running_time_two_defender}
\end{figure}

In PSG, we varied $T$ by increasing the grid size $m$ from $4$ to $10$. As with RGS, Figure~\ref{fig:expt-grid-T} running time superlinearly with $T$. We also indirectly adjusted the support size by adjusting $r$, the radius of security coverage (Figure~\ref{fig:expt-grid-r}). Once again, we did not notice any appreciable difference in running times. Similarly for PLN, we note a superlinear growth in running time as the network enlarges, be it from increasing layers or width of the network (Figure~\ref{fig:expt-network-L} and \ref{fig:expt-network-w}).

\begin{figure}[t]
	\centering
 \begin{subfigure}[t]{0.32 \textwidth}
 \centering
    	  \includegraphics[width=\linewidth]{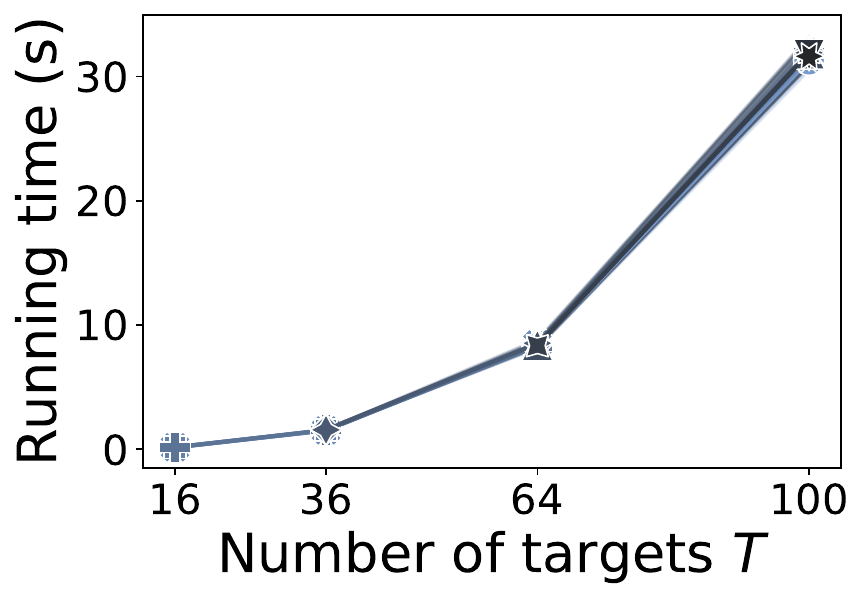}
       \caption{Runtime as $T$ varies}
       \label{fig:expt-grid-T}
       \end{subfigure}
       \begin{subfigure}[t]{0.32 \textwidth}
       \centering
    	  \includegraphics[width=\linewidth]{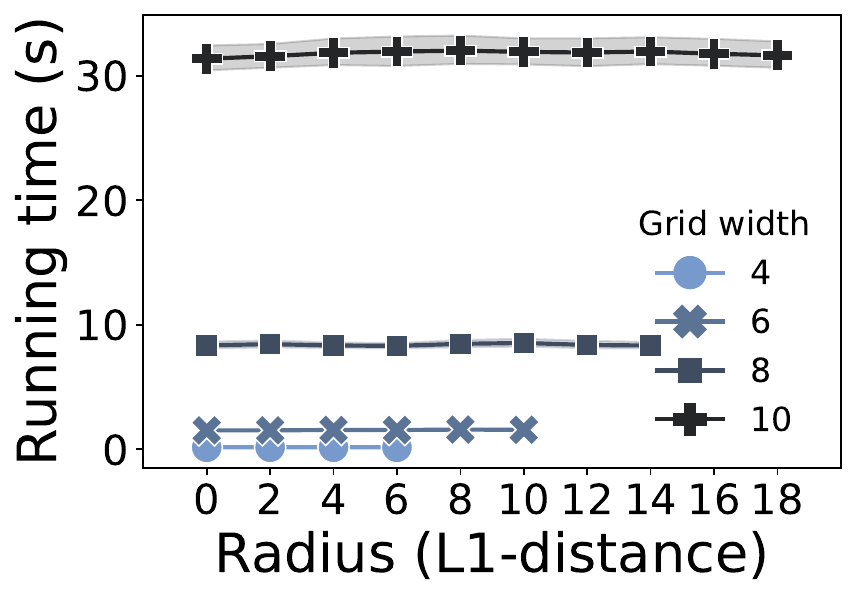}
       \caption{Runtime as $r$ varies}
       \label{fig:expt-grid-r}
       \end{subfigure}
\\
\begin{subfigure}[t]{0.32 \textwidth}
 \centering
     	  \includegraphics[width=\linewidth]{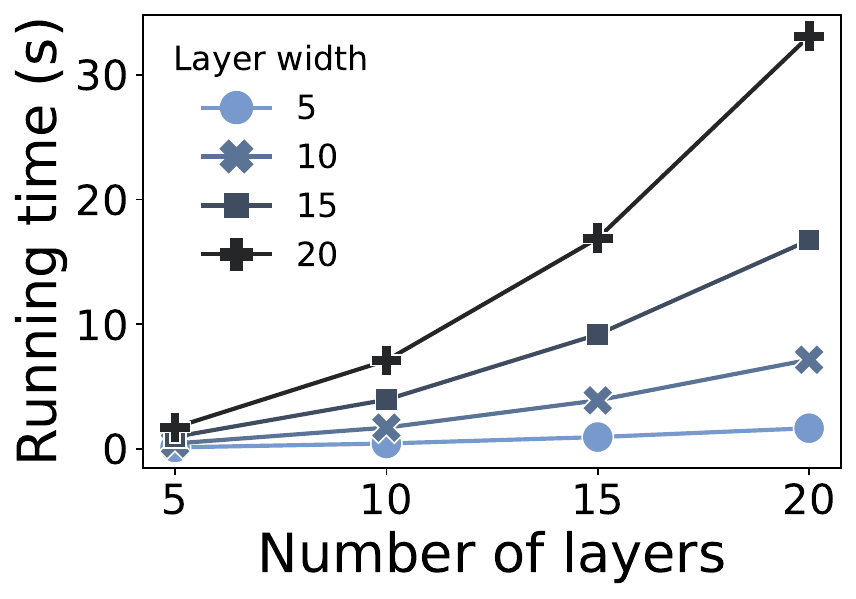}
\caption{Runtime as layers vary}
       \label{fig:expt-network-L}
\end{subfigure}
\begin{subfigure}[t]{0.32 \textwidth}
    	  \includegraphics[width=\linewidth]{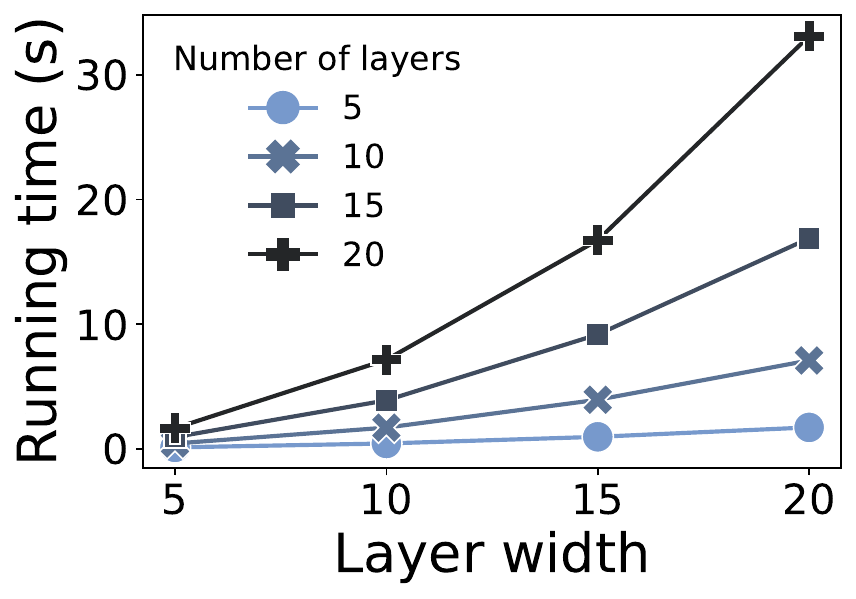}
    \caption{Runtime as width varies}
       \label{fig:expt-network-w}
\end{subfigure}
	\caption{Top: Wallclock time to compute NSE for PSG using the algorithm in Figure~\ref{fig:algo-basic}. Bottom: Computing NSE for PLN using the method in Section~\ref{sec:efficient-compact}.}
	\label{fig:grid_time-and-network_time}
\end{figure}





We now examine multiple defenders in RGS under SSAS and MSS. Again, we ran $100$ rounds for each scenario and report the means (standard errors were negligible) in Figure~\ref{fig:multi_defender}. We observe running times increasing linearly with $n$ and schedules (omitted due to space constraints), but superlinearly with $T$. 


\begin{figure}[t]
	\centering
 \begin{subfigure}[t]{0.32 \textwidth}
    	  \includegraphics[width=\linewidth]{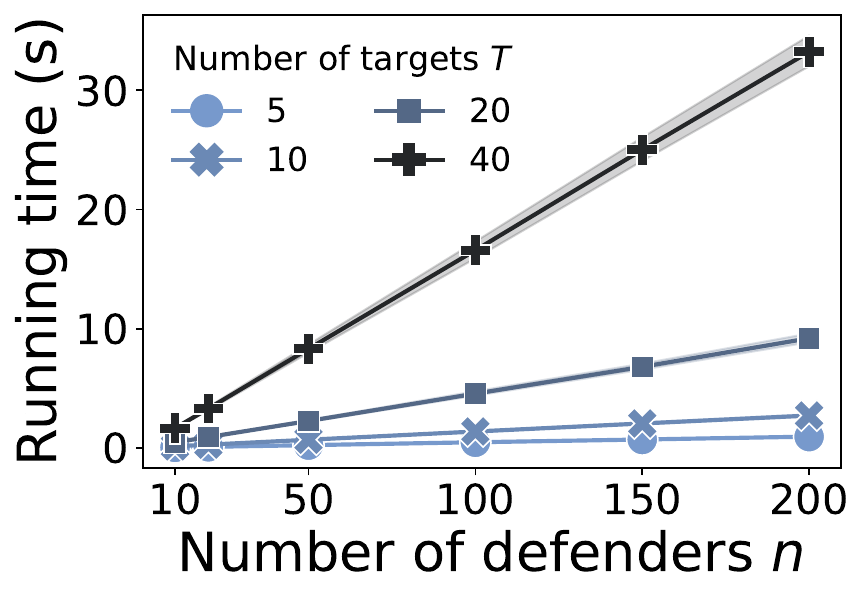}
\end{subfigure}
\begin{subfigure}[t]{0.32 \textwidth}
\includegraphics[width=\linewidth]{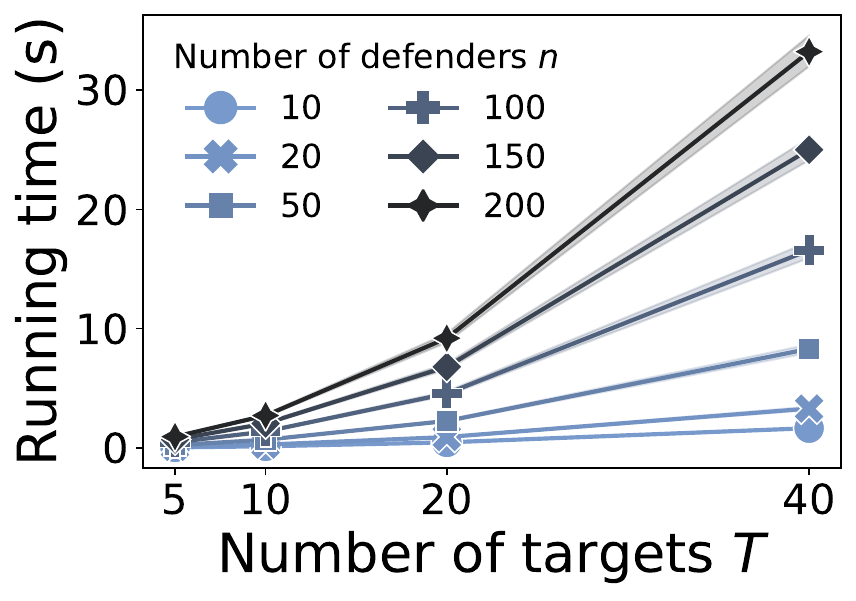}
\end{subfigure}
\begin{subfigure}[t]{0.32 \textwidth}
\includegraphics[width=\linewidth]{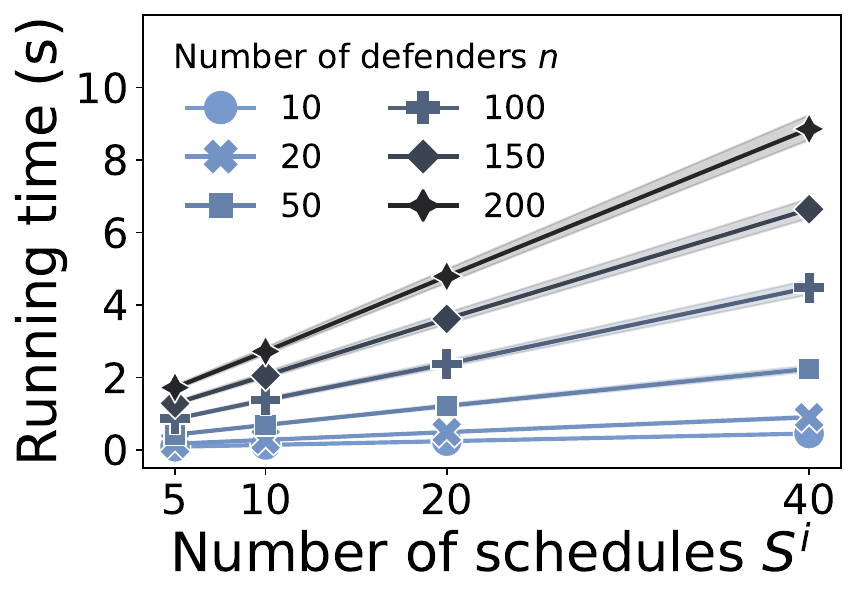}
\end{subfigure}
	\caption{Top: Running time for RGS under MSS assumptions. Top: Wallclock time as $n$ increases. Bottom: Running time as $T$ and $S^i$ vary.}
	\label{fig:multi_defender}
\end{figure}


\begin{figure}[t]
	\centering
\begin{subfigure}[t]{0.32 \textwidth}
    	  \includegraphics[width=\linewidth]{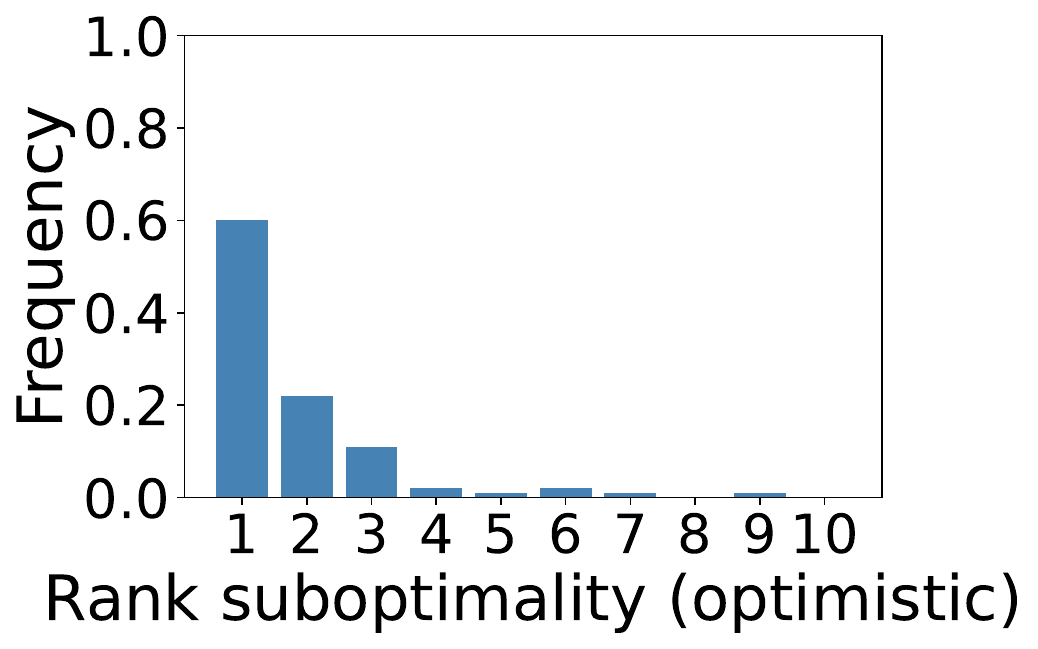}
\end{subfigure}
\begin{subfigure}[t]{0.32 \textwidth}
    	  \includegraphics[width=\linewidth]{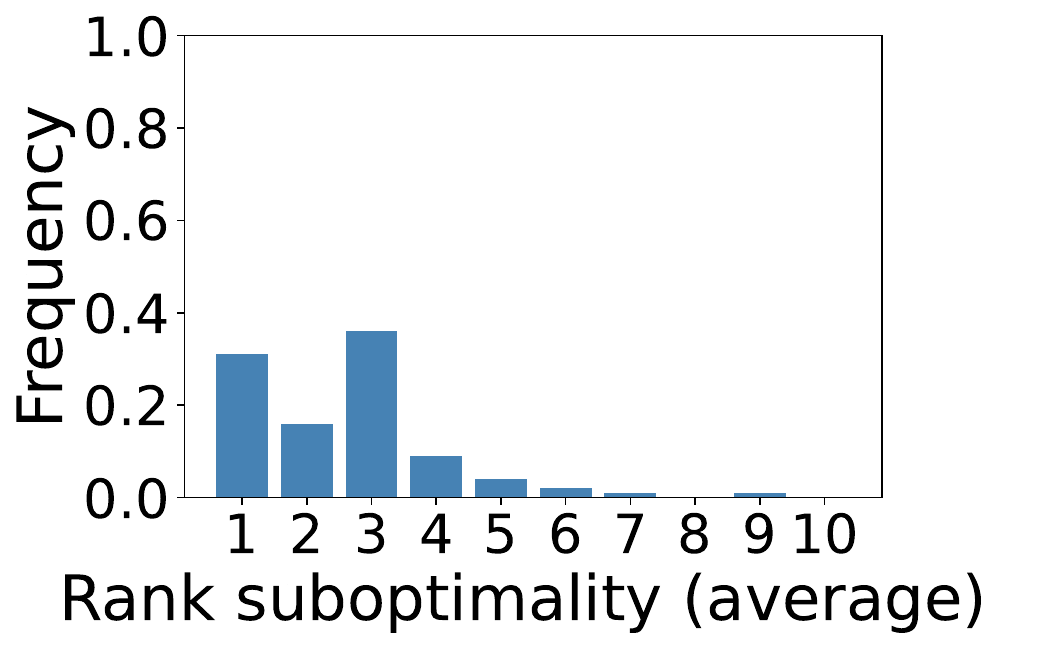}
\end{subfigure}
\begin{subfigure}[t]{0.32 \textwidth}
       \includegraphics[width=\linewidth]{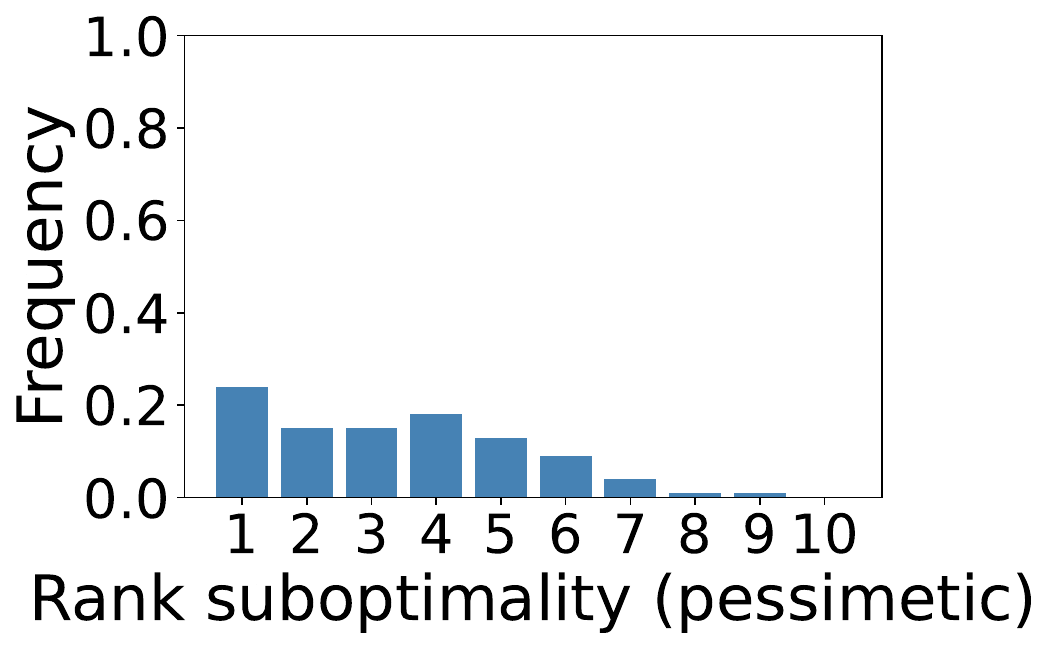}
\end{subfigure}\\
\begin{subfigure}[t]{0.32 \textwidth}
   \includegraphics[width=\linewidth]{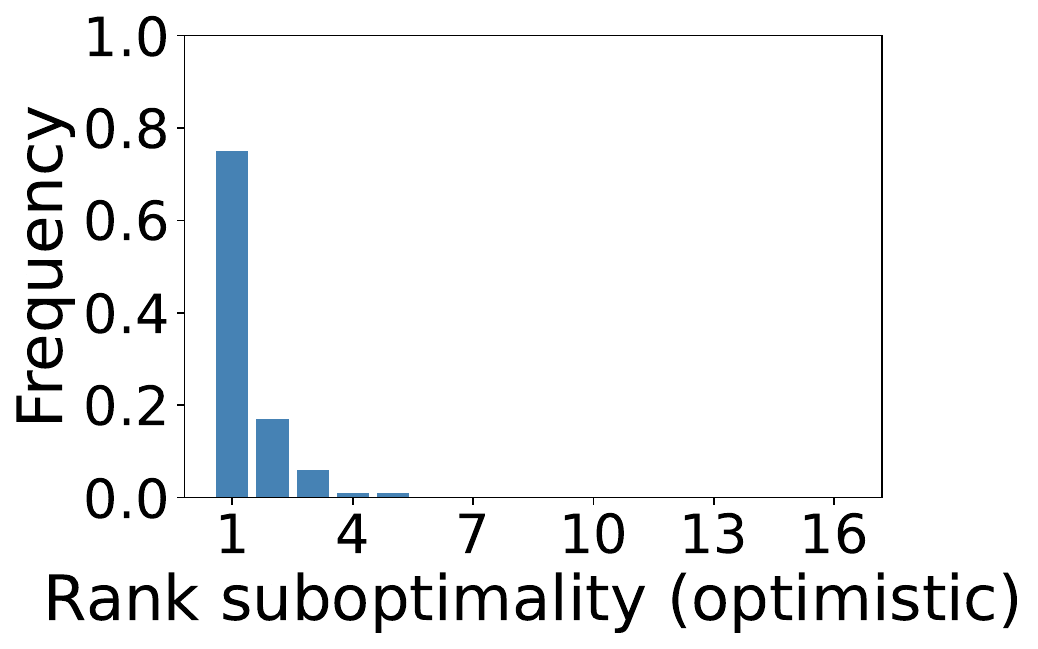}
\end{subfigure}
\begin{subfigure}[t]{0.32 \textwidth}
\includegraphics[width=\linewidth]{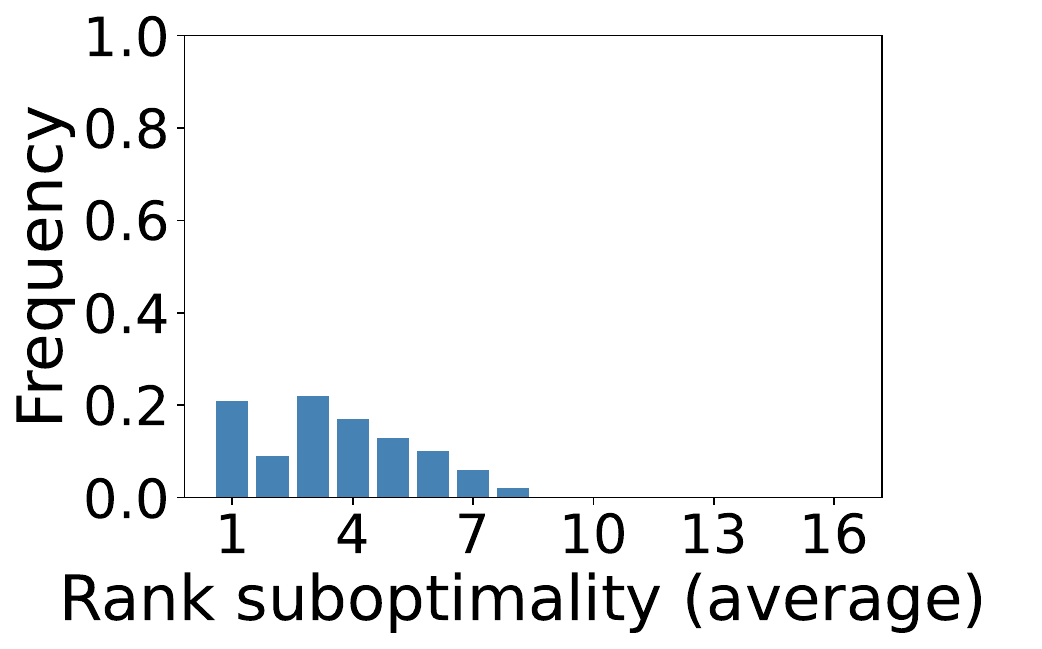}
\end{subfigure}
\begin{subfigure}[t]{0.32 \textwidth}
       \includegraphics[width=\linewidth]{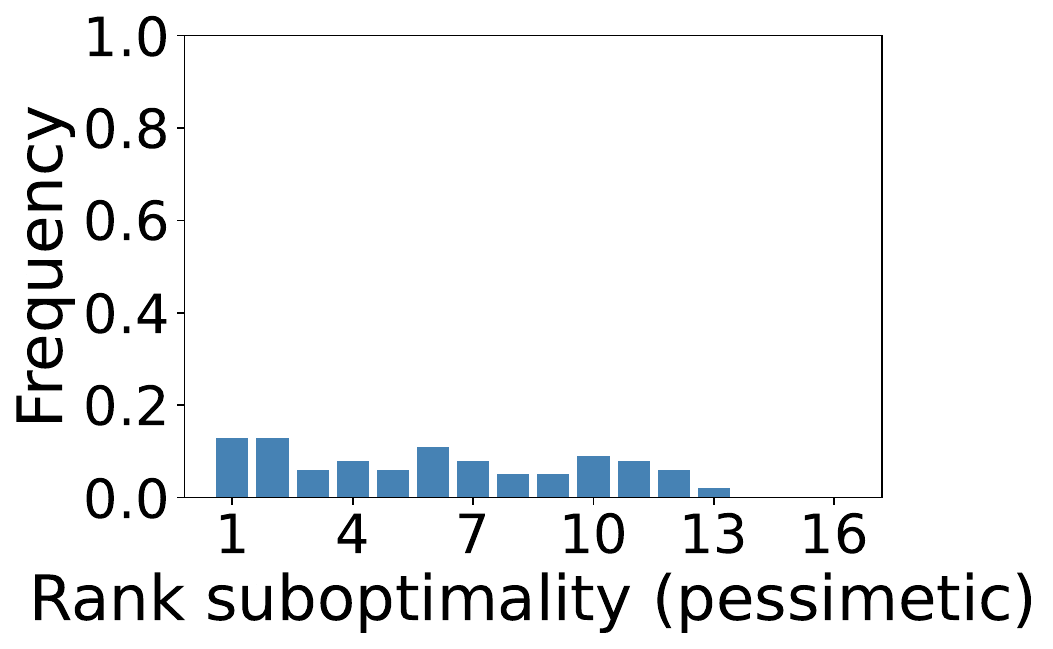}
\end{subfigure} \\
\begin{subfigure}[t]{0.32 \textwidth}
   \includegraphics[width=\linewidth]{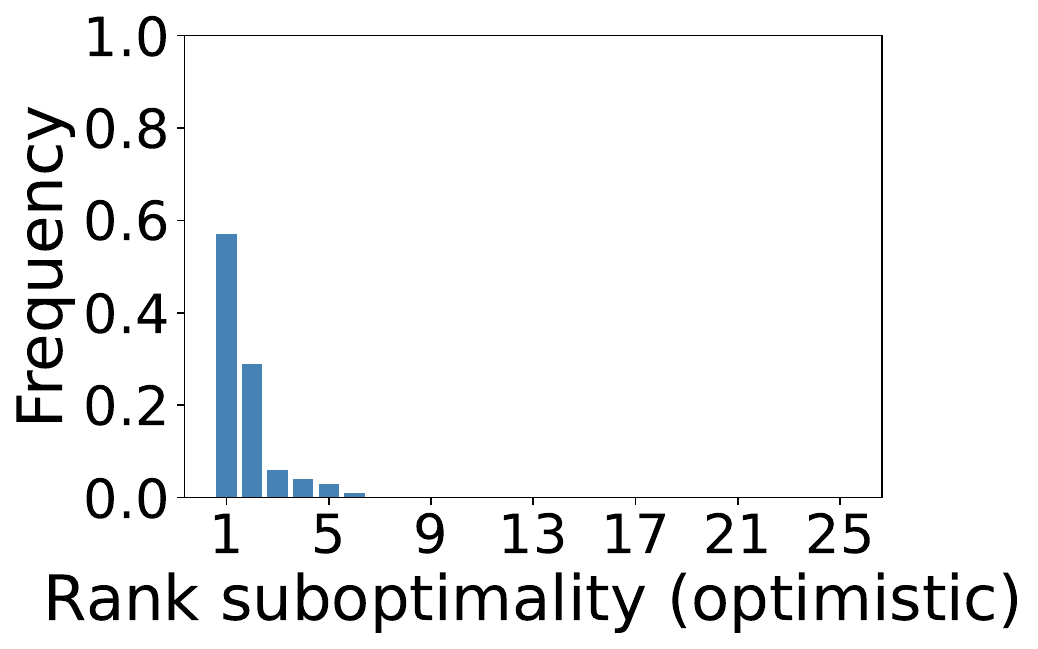}
\end{subfigure}
\begin{subfigure}[t]{0.32 \textwidth}
\includegraphics[width=\linewidth]{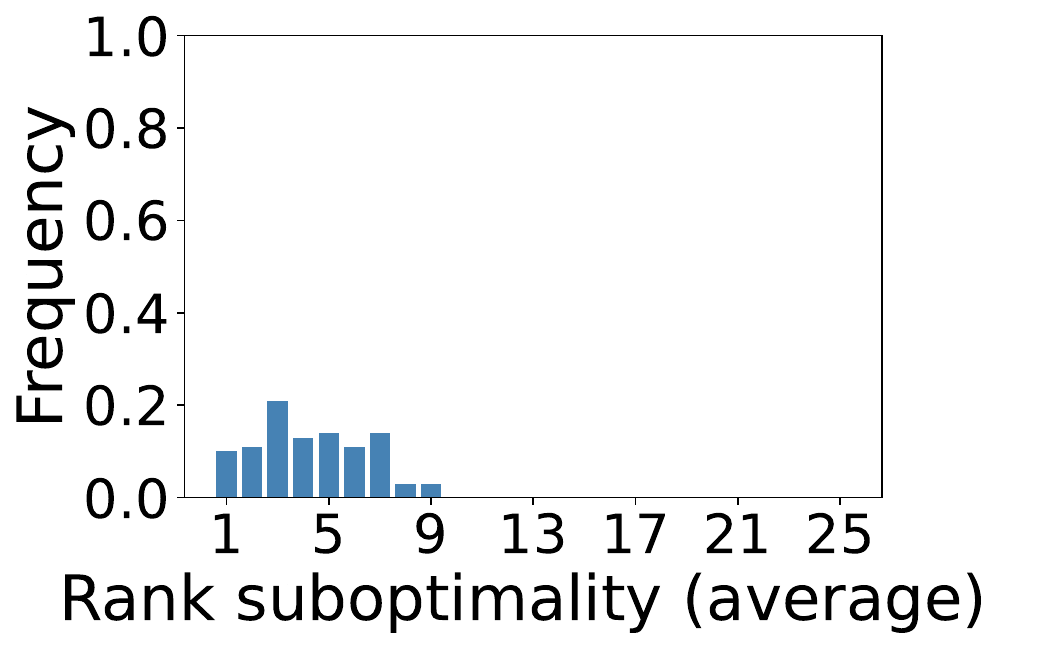}
\end{subfigure}
\begin{subfigure}[t]{0.32 \textwidth}
       \includegraphics[width=\linewidth]{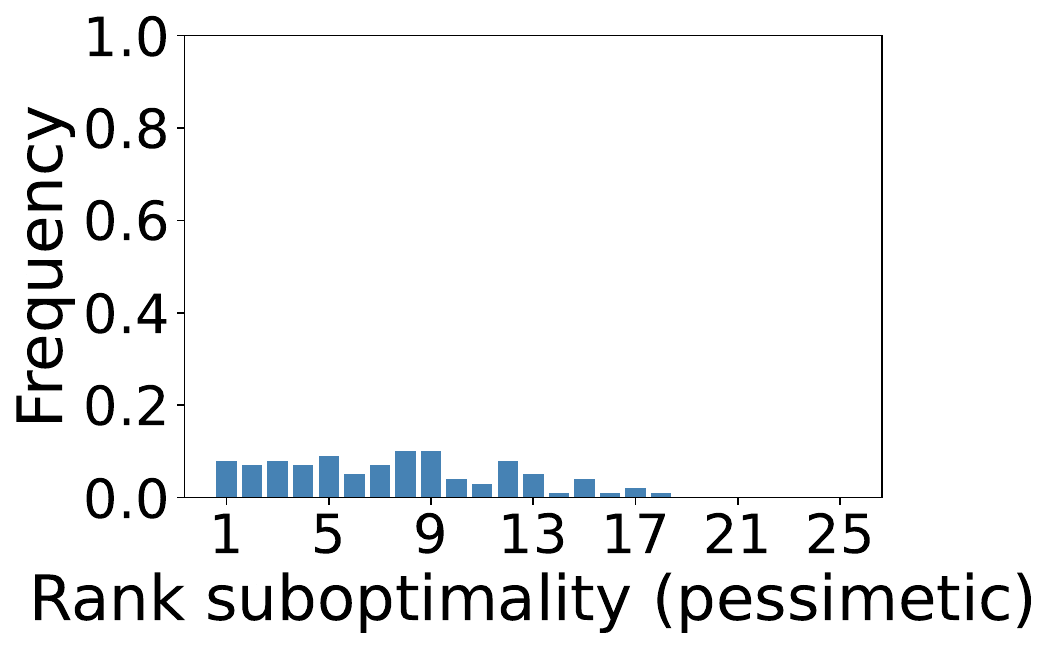}
\end{subfigure}
	\caption{Rank suboptimality. From top to bottom: RGS with 10 targets, schedules, and full suport, PSG with $m=4$, $r=2$, PLN with 5 layers each of width 5.}
	\label{fig:random_rank}

\end{figure}

\subsection{Quality of NSE computed}
We now investigate the quality of NSE that are computed with 2 defenders, under the SSAS assumption.
If there only defender $1$ existed (e.g., $V^2 = \{ 0 \})$), then defender $1$ simply chooses $v^1 = 0$ and $t$ to be its most desired target to be attacked. The existence of defender $2$ makes it such that both defenders must compromise to reach an NSE, with the target to be attacked worsened from defender $1$'s perspective. In essence, this is the ``cost of partnership''. 

We investigate this degradation in quality of attacked target (in ordinal terms) from the perspective of a single defender.
In each run, we compute all the \textit{efficient} NSE and their attacked target's rank in terms of the preference order $\succ_i$. This \textit{rank suboptimality} is a measure of the degradation of policy. 
Since there are multiple efficient NSE, we consider 3 cases: (i) the optimistic case where we tiebreak to benefit defender $1$, (ii) tiebreaking by averaging and (iii) the pessimistic case we tiebreak against defender $1$. For each of these settings, we run the experiment 100 times and report the frequency of every rank suboptimality. 
\begin{figure}[t]
\centering
\begin{subfigure}[t]{0.32 \textwidth}
    \includegraphics[width=\linewidth]{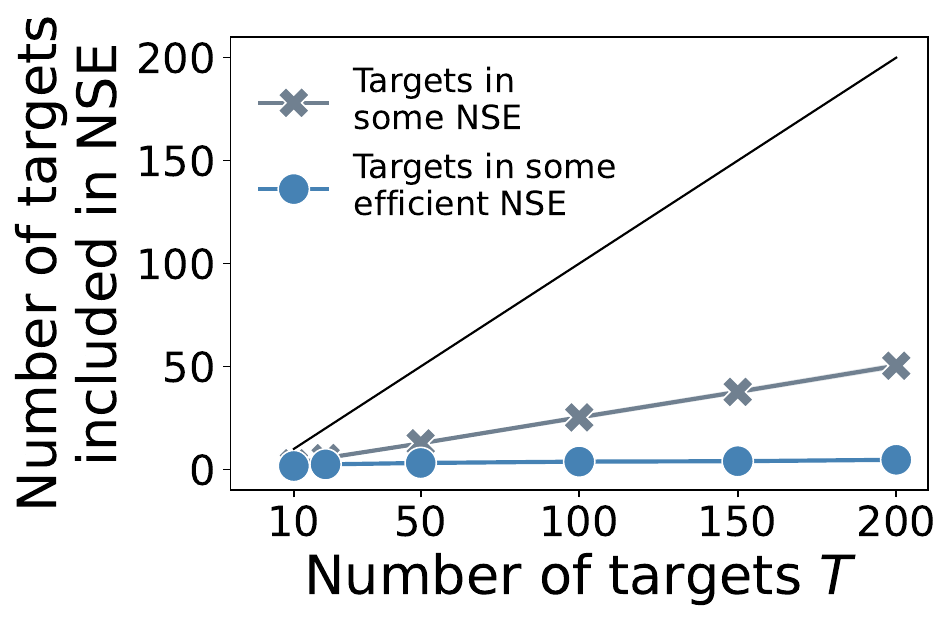}
\end{subfigure}
\begin{subfigure}[t]{0.32 \textwidth}
    \includegraphics[width=\linewidth]{figures/two_defender/NSE_t_eff_target.pdf}
\end{subfigure}
\begin{subfigure}[t]{0.32 \textwidth}
    \includegraphics[width=\linewidth]{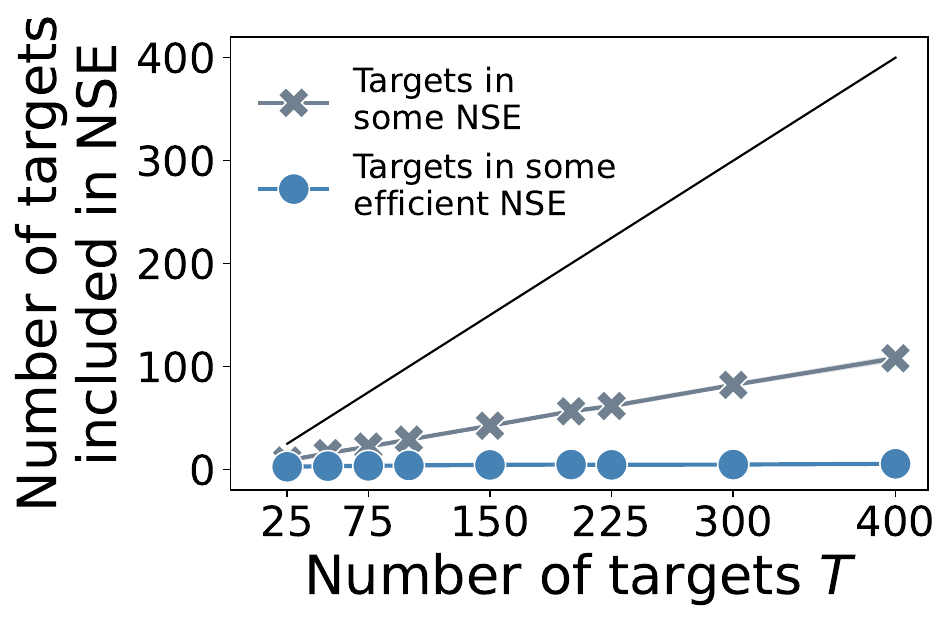}
\end{subfigure}\\

\begin{subfigure}[t]{0.32 \textwidth}
    \includegraphics[width=\linewidth]{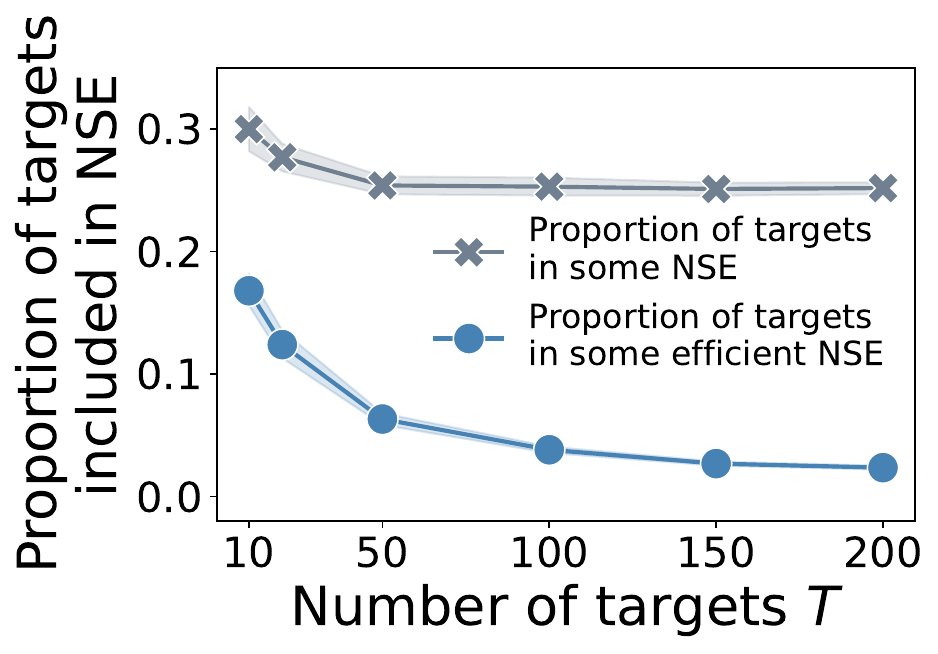}
\end{subfigure}
\begin{subfigure}[t]{0.32 \textwidth}
    \includegraphics[width=\linewidth]{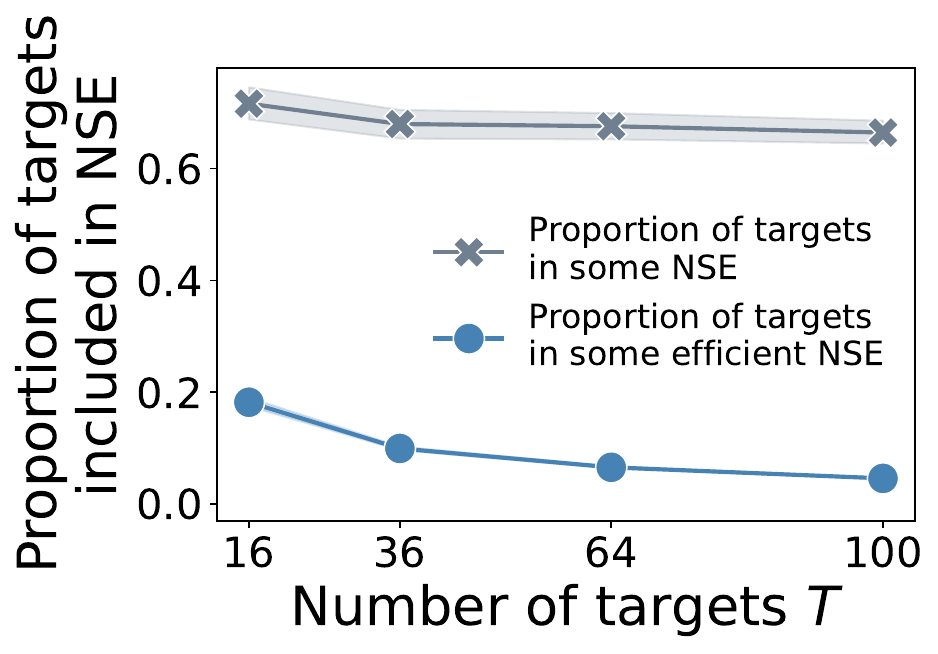}
\end{subfigure}
\begin{subfigure}[t]{0.32 \textwidth}
    \includegraphics[width=\linewidth]{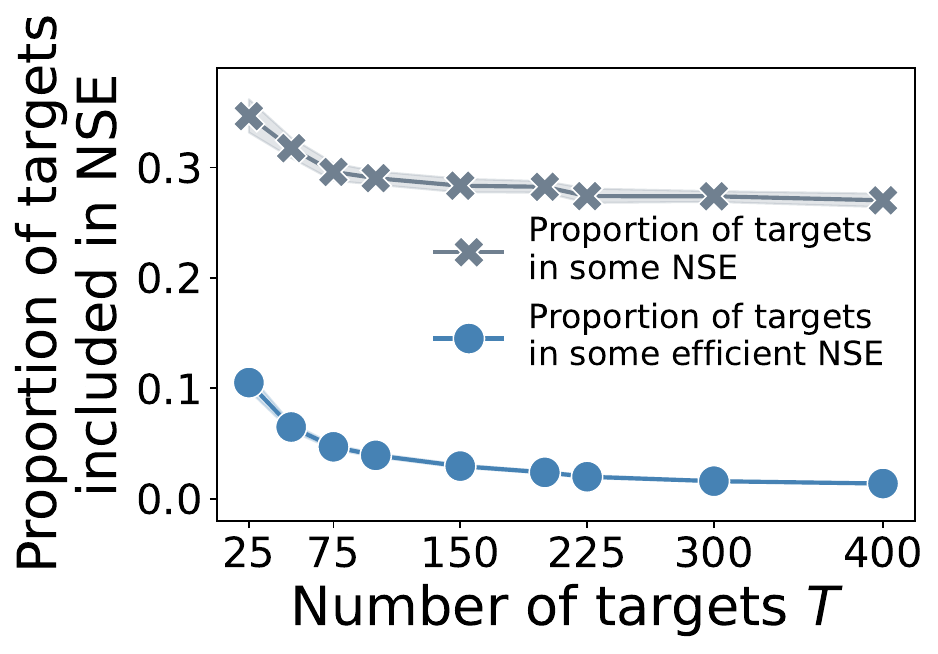}
\end{subfigure}\\
	\caption{Top (resp. bottom): \#targets (resp. ratio over $T$) that are efficient NSE as $T$ increases. Left to right: Results for RGS, PSG and PLN respectively.}
	\label{fig:types-of-targets}
\end{figure}

\subsection{Number of targets included in NSE}
Recall that each of the $T$ targets may be efficient, inefficient or not part of any NSE. 
We investigate for randomly generated $2$ defender games in RGS, PSG and PLN the number and proportion of these targets as $T$ varies. For RGS, we fix $S^i = 200$. Our results are reported in Figure~\ref{fig:types-of-targets}. We can see that in all our experiments, the number of efficient targets (or NSE) increase linearly with $T$, the \textit{proportion} of such targets decreases and tapers off at around $100$ targets.

\section{Conclusion}
In this paper, we explored the problem of multidefender security games in the presence of schedules in the restricted setting of coverage dependant utilities. 
We show that even in this restricted case, equilibrium may not exist under clearance constraints in contrast to prior work. We show that equilibrium is guaranteed under SSAS and present polynomial time solvers, as well as several extensions. 
Future work include removing the restriction on coverage dependant utilities as well as extensions to the non-additive or uncoordinated setting.
\section{Acknowledgements}
This research was sponsored by the U.S. Army Combat Capabilities Development Command Army Research Laboratory under CRA W911NF-13-2-0045. Co-author Fang is supported in part by NSF grant IIS-2046640 (CAREER) and Sloan Research Fellowship.

\section{Appendix}
\ifExtendedVersion
\subsection{Nonexistence of NSE in Example~\ref{example:counterexample}}
Let $\alpha=x^1_1$ and $1-\alpha=x^1_2$, i.e., the probability that defender $1$ plays schedule $1$ and $2$. Similarly, $\beta=x^2_1$ and $1-\beta=x^2_2$. Then, the total coverage $v^{total}$ (formatted as a $2\times 2$ matrix) is given by
\begin{align}
    v^{total} = \underbrace{
    \alpha \begin{bmatrix}
        1\shortminus \epsilon \ & 1 \\
        k\cdot \epsilon \ & 0
    \end{bmatrix} + 
    (1\shortminus\alpha) \begin{bmatrix}
        0 \ & k\cdot \epsilon \\
        1 \ & 1\shortminus \epsilon 
    \end{bmatrix}}_{v^1}
    + 
    \underbrace{
    \beta \begin{bmatrix}
        1 \ & 0 \\
        1 \shortminus \epsilon \ & k\cdot \epsilon 
    \end{bmatrix}
    + (1 \shortminus \beta) \begin{bmatrix}
        k\cdot \epsilon \ & 1 \shortminus \epsilon \\
        0 \ & 1
    \end{bmatrix}
    }_{v^2}
    \label{eq:alpha-beta}
\end{align}
Suppose $(\mathbf{v}, t)$ is a NSE. Without loss of generality, we assume $t=11$: as it turns out, symmetry in schedules implies the same argument will apply for other targets. A necessary condition for $(\mathbf{v}, 11)$ to be NSE is that it is $2$-IC, which in turn requires (but is not sufficient) that for deviation strategy $\widehat{v}^2 = s^2_1$ (i.e., player $2$ deviates by placing all their coverage using $s^2_1$) and potential post-deviation targets $12 \succ_2 11$ and $21 \succ_2 11$, we have
(i) either $12 \in B(\widehat{v}^{total})$ or $21 \in B(\widehat{v}^{total})$ and 
(ii) $11, 22 \not \in B(\widehat{v}^{total})$\footnote{Technically, only condition (ii) is required, since $B(\widehat{v}^{total})$ is nonempty.}
The same is also required of deviation strategy $\widehat{v}^2=s^2_2$.

More explicitly, a \textit{necessary} condition for $(\mathbf{v}, 11)$ to be $2$-IC is that $v^1$ is obtained from some $\alpha \in [0, 1]$ such that the following condition holds.
\begin{itemize}
    \item Defender $2$ deviating to $\widehat{v}^2$ by setting $\beta=1$ (i.e., playing $s^2_1$) does \textit{not} induce target $12$ (top right) to be attacked because either\footnote{Strictly speaking, for defender $2$ to induce target $12$ to be attacked, it needs ensure that target $21$ is not attacked as well. However, that instance would fall under the AND condition in the latter half and does not affect the correctness of our example.}
    \subitem 
    $\heartsuit$: $\widehat{v}^{total}(11) \leq \widehat{v}^{total}(12)$, implying target $11 \prec_2 12$ would be attacked. \footnote{The inequality is not strict due to tiebreaking rules after deviation.} 
    \\ OR
    \subitem $\varheart$: 
    $\widehat{v}^{total}(22) \leq \widehat{v}^{total}(12)$, implying target $22 \prec_2 12$ would be attacked.
    \item AND, Defender $2$ deviating to $\widehat{v}^2$ by setting $\beta=0$ (i.e., playing $s^2_2$) does \textit{not} induce target $21$ (bottom left) to be attacked because either
    \subitem $\diamondsuit$: 
    $\widehat{v}^{total}(22) \leq \widehat{v}^{total}(21)$, implying target $22 \prec_2 21$ would be attacked.\\
    OR
    \subitem 
    $\vardiamond$: $\widehat{v}^{total}(11) \leq \widehat{v}^{total}(21)$, implying target $11 \prec_2 21$ would be attacked.
\end{itemize}
It turns out this condition ($\heartsuit$ OR $\varheart$) AND ($\diamondsuit$ OR $\vardiamond$) is impossible to fulfill for suitable values of $\epsilon$ and $k$. To do this, we write down the expressions for $\heartsuit, \varheart, \diamondsuit, \vardiamond$. This is done by plugging in the values of $\beta$ (used for $\widehat{v}^{2}$) into \eqref{eq:alpha-beta} and expressing the relevant entry in the matrix used in the inequality. This yields the following inequalities:
\begin{align}
    \alpha \cdot (1-\epsilon) + 1 & \leq \alpha + (1-\alpha)\cdot k \cdot \epsilon \label{eq:heart} \tag{$\heartsuit$} \\
    (1-\alpha)\cdot(1-\epsilon) + k \cdot \epsilon &\leq \alpha + (1-\alpha)\cdot k \cdot \epsilon \label{eq:varheart} \tag{$\varheart$} \\
    (1-\alpha)\cdot (1-\epsilon) + 1 &\leq \alpha \cdot k \cdot \epsilon + (1-\alpha) \label{eq:vardiamond} \tag{$\diamondsuit$}\\
    \alpha \cdot (1-\epsilon) + k \cdot \epsilon &\leq \alpha \cdot k \cdot \epsilon + (1-\alpha) \label{eq:diamond} \tag{$\vardiamond$} 
\end{align}
These inequalities are \textit{linear} in $\alpha$. We now substitute $\epsilon = 10^{-3}$ and $k=100$, which after some algebra gives
\begin{align}
    \alpha &\leq -9.0909 \tag{$\heartsuit$} \\
    \alpha &\geq 0.5261 \tag{$\varheart$} \\
    \alpha &\geq 10.0909 \tag{$\diamondsuit$}\\
    \alpha &\leq 0.4739\tag{$\vardiamond$} 
\end{align}
Since $\heartsuit$ and $\diamondsuit$ require $\alpha$ to be out of the range $[0,1]$, condition ($\heartsuit$ OR $\varheart$) AND ($\diamondsuit$ OR $\vardiamond$) reduces to $\varheart$ AND $\vardiamond$. Clearly, this is impossible. Since $\varheart$ AND $\vardiamond$ is a necessary condition for $v^1$ to be part of any NSE given by $(\mathbf{v}, 11)$, we conclude that no NSE of such form exists.

It remains to repeat this process for the other possible equilibrium targets $t=22,t=12$ and $t=21$. In the former, we will once again show that no such $\alpha$ exists, while the latter 2 requires us to show that no such $\beta$ exists. The symmetry in our choice of schedules allows us to \textit{exactly} reuse the above arguments (same choice of $\epsilon$ and $k$) up to some rearrangements (e.g., swapping $\alpha$ with $\beta$). This concludes the proof that no NSE exists.

\subsection{NSE of Example~\ref{example:counterexample} under SSAS when $\epsilon=0$}
We now revisit Example~\ref{example:counterexample} when $\epsilon=0$. We claim that $(v^1, v^2, 11)$ where $v^1=(0, 0.5, 0.5, 0)$, $v_2 = (0, 0, 0, 1.0)$ is an NSE. Clearly $v^1 \in V^1$ and $v^2 \in V^2$ under SSAS. It is AIC since target $11$ has $0$ total coverage. It remains to verify that it is $1$ and $2$-IC. For defender $1$ to do better, it has to induce target $22$ to be attacked, by deviating to $\widehat{v}^1$ such that coverage on $11,12$ and $21$ is \textit{strictly} greater than $22$. This is impossible since $v^2(22)$ is already the highest at $1.0$. For defender $2$ to do better, it has to induce either $12$ or $21$ to get attacked. For that to happen, it has to ensure that $\widehat{v}^2(11)$ and $\widehat{v}^2(22)$ are \textit{both strictly greater} than $1/2$. 
This is once again impossible, since neither $s^2_1$ or $s^2_2$ covers both $11$ and $22$ simultaneously. It can place equal coverage over both targets, but this is insufficient to induce a change in attacker target since tiebreaks after deviation are against it. 

\subsection{NSE of Example~\ref{example:counterexample} under SSAS when $\epsilon=10^{-3}$, $k=100$}
The equilibrium is analogous to the case where $\epsilon=0$. Specifically, let $v^1=(0, 0.55, 0.55, 0)$ and $v^2=(0, 0, 0, 1.0)$, where coverages are for targets $11,12,21$ and $22$ respectively. We claim that $(\mathbf{v}, 11)=(v^1, v^2, 11)$ is a NSE. We first observe that $v^1 \in V^1$ and $v^2 \in V^2$ under SSAS.
Defender $1$ simply plays $s^1_1$ and $s^1_2$ with equal probability, but chooses to place no coverage on targets $11$ and $22$ (which it prefers to be attacked). On the other hand, defender $2$ plays $s^2_2$ exclusively and chooses to not place coverage on target $12$ (which defender $2$ prefers to be attacked). Lastly, we also verify that this is indeed AIC: target $11$ has $0$ coverage and thus certainly in $B(v^{total})$.

We can verify that $1$ and $2$-IC hold in the same manner as before. For defender $1$ to do any better that target $11$ being attacked, it has to induce an attack to target $22$. This can only be done by deviating to some $\widehat{v}^1 \in V^1$ that places \textit{strictly} more than $1$ coverage in \textit{each} target in $\{11,12,21\}$. However, this is impossible, since the sum of coverage (over targets) in either $s^1_1$ or $s^1_2$ does not exceed $3$, which in turn means that however defender $1$ splits coverage amongst targets $11,12,21$, at least one of them has coverage $\leq 1$. That is, $\widehat{v}^{total}(11) \leq 1$ or $\widehat{v}^{total}(12) \leq 1$ or $\widehat{v}^{total}(21) \leq 1$, while $\widehat{v}^{total} \geq 1$.
We have thus verified $1$-IC. To verify $2$-IC, note that defender $2$ has to find some $\widehat{v}^2 \in V^2$ such that \textit{both} target $11$ and $22$ have strictly higher coverage than $0.55$ (which was how much $v^1$ covered each of the other targets). However, the sum of coverage in targets $11$ and $22$ from either $s^2_1$ or $s^2_2$ is no greater than $1+k\cdot \epsilon = 1.1$. Again, this means that either $\widehat{v}^2(11) \leq 0.55$ or $\widehat{v}^2(22) \leq 0.55$. This implies that there is no way to induce an attack on $12$ or $21$ (and inducing an attack on $22$ is even less preferable for defender $2$). Thus $2$-IC holds, and this is indeed a NSE.
\fi
\subsection{Proof of Lemma~\ref{lem:step1}}
\begin{proof}
Since $v^1\in V^1$ and $\widetilde{v}^1(j) \leq v^1(j)$ for all $j \in [T]$, we have $\widetilde{v}^1\in V^1$ by Assumption~\ref{ass:SSAS}. 
We now show that $(\widetilde{v}^1, v^2, t)$ is AIC, $1$-IC, and $2$-IC.

\begin{enumerate}
    \item $(\widetilde{v}^1, v^2, t)$ is AIC. For all $j\in [T]$ and $j\neq t$, we have 
    \begin{align*}
        \widetilde{v}^1(j) + v^2(j) 
        \underbrace{= v^1(j) + v^2(j)}_{\text{definition of }\widetilde{v}^1}
        \ge \underbrace{v^1(t) + v^2(t)}_{(v^1, v^2, t) \text{ is AIC}}
        \ge \underbrace{\widetilde{v}^1(t) + v^2(t)}_{\text{by definition of }\widetilde{v}^1}. 
    \end{align*}

    \item $(\widetilde{v}^1, v^2, t)$ is 1-IC. If 
    not, there exists 
    $\widehat{v}^1\in \mathcal{V}^1, j\succ_1 t$ where $(\widehat{v}^1, v^2, j)$ is 1-WAIC, implying $(v^1, v^2, t)$ is not 1-IC and not an NSE. 

    \item $(\widetilde{v}^1, v^2, t)$ is 2-IC. 
    Suppose otherwise. 
    Then there exists $\widehat{v}^2\in \mathcal{V}^2, j \succ_2 t$ where $(\widetilde{v}^1, \widehat{v}^2, j)$ is 2-WAIC, implying that $\widetilde{v}^1(k) +\widehat{v}^2(k) > \widetilde{v}^1(j) +\widehat{v}^2(j)$ for $k \prec_2 j$, and $\widetilde{v}^1(k) +\widehat{v}^2(k) \ge \widetilde{v}^1(j) +\widehat{v}^2(j)$ for $k \succ_2 j$.
    Then for any $k \succ_2 j (\succ_2 t)$,  $\widetilde{v}^1(k) +\widehat{v}^2(k) \ge \widetilde{v}^1(j) +\widehat{v}^2(j)$ indicates that  $v^1(k) +\widehat{v}^2(k) \ge v^1(j) +\widehat{v}^2(j)$ by definition of $\widetilde{v}^1$ that $\widetilde{v}^1(k) = v^1(k), \widetilde{v}^1(j) = v^1(j)$. Besides, for any $k\prec_2 j$, 
    \begin{align*}
        v^1(k) + \widehat{v}^2(k) 
        \underbrace{\ge \widetilde{v}^1(k) + \widehat{v}^2(k)}_{\text{by definition of }\widetilde{v}^1}
        > \underbrace{\widetilde{v}^1(j) + \widehat{v}^2(j)}_{(\widetilde{v}^1, \widehat{v}^2, j)\text{ is 2-WAIC}}
        = \underbrace{v^1(j) + \widehat{v}^2(j)}_{\text{by definition of }\widetilde{v}^1}. 
    \end{align*}

    Since $(v^1, \widehat{v}^2, j)$ is 2-WAIC, $(v^1, v^2, t)$ is not 2-IC and not an NSE. 
\end{enumerate}
\end{proof}

\subsection{Proof of Lemma~\ref{lem:step2}}
\begin{proof}
$v^1\in V^1$ and $\widetilde{v}^1$ has no more coverage than $v^1$, so $\widetilde{v}^1\in V^1$ by the SSAS assumption. By the definition of NSE, it is sufficient to show that $(\widetilde{v}^1, v^2, t)$ is AIC, $1$-IC, and $2$-IC to prove the Lemma.
\begin{enumerate}
    \item $(\widetilde{v}^1, v^2, t)$ is AIC. $t\in B(\widetilde{v}^1, v^2)$ because $\widetilde{v}^1(t) + v^2(t) = 0$.
    \item $(\widetilde{v}^1, v^2, t)$ is 1-IC. If not, then there exists $\widehat{v}^1\in V^1, j\succ_1 t$ such that $(\widehat{v}^1, v^2, j)$ is 1-WAIC. Therefore, $(v^1, v^2, t)$ is also not 1-IC, contradicting the assumption that $(v^1, v^2, t)$ is an NSE.
    \item $(\widetilde{v}^1, v^2, t)$ is 2-IC. We prove this by contradiction. Suppose $(\widetilde{v}^1, v^2, t)$ is not 2-IC, then there exists $\widehat{v}^2\in V^2, j \succ_2 t$ where $(\widetilde{v}^1, \widehat{v}^2, j)$ is 2-WAIC. By definition of 2-WAIC, we have that $\widetilde{v}^1(k') +\widehat{v}^2(k') > \widetilde{v}^1(j) +\widehat{v}^2(j)$ for any target $k' \prec_2 j$, and $\widetilde{v}^1(k') +\widehat{v}^2(k') \ge \widetilde{v}^1(j) +\widehat{v}^2(j)$ for any target $k' \succ_2 j$.
    Consider $u^2\in V^2$ such that $u^2(k') = \widehat{v}^2(k')$ for any target $k' \preceq_2 t$ and $u^2(k') = 0$ for any target $k' \succ_2 t$. Notice that there always exist a unique target $k \in [T]$ such that $(v^1, u^2, k)$ is 2-WAIC. We claim that if $k$ is the target that $(v^1, u^2, k)$ is 2-WAIC, then $k \succ_2 t$. We prove this by excluding other targets $e \preceq_2 t$. Notice that there is a target $m \succ_2 t$ that $v^1(m) = \min_{{k'}\succ_2 t }v^1({k'})$. Consider a target $e \preceq_2 t$, then 
    \begin{align*}
        v^1(e) + u^2(e) 
        &\underbrace{\ge \widetilde{v}^1(e) + u^2(e)}_{\text{by definition of }\widetilde{v}^1}
        \underbrace{=\widetilde{v}^1(e) + \widehat{v}^2(e)}_{\text{by definition of }u^2}
        \underbrace{> \min_{{k'}\succ_2 t}\{\widetilde{v}^1({k'}) + \widehat{v}^2({k'})\}}_{(\widetilde{v}^1, \widehat{v}^2, j)\text{ is 2-WAIC}} \\
        & \underbrace{\ge \min_{{k'}\succ_2 t}\widetilde{v}^1({k'})}_{\widehat{v}^2({k'})\ge 0}
        \underbrace{ = v^1(m)}_{\text{by definition of }m}
        \underbrace{ = v^1(m) + u^2(m)}_{u^2(m) = 0}. 
    \end{align*}
    Thus, $e$ has more coverage than $m$, so $(v^1, u^2, e)$ is not 2-WAIC. Then $k \preceq_2 t$ does not hold. There exists a target $k \succ_2 t$ such that $(v^1, u^2, k)$ is 2-WAIC, so $(v^1, v^2, t)$ is not 2-IC, which contradicts that $(v^1, v^2, t)$ is an NSE. \hfill$\blacksquare$
\end{enumerate}
\end{proof}
\subsection{Proof of Theorem~\ref{thm:existence-2p}}
To better understand the structure of $\mathcal{H}$, we introduce \textit{partial} sets $\mathcal{H}^1_t\subset V^1, \mathcal{H}^2_t\subset V^2$ for $\mathcal{H}_t$ and show how they compose the set $\mathcal{H}_t$.
\begin{definition}
    $\mathcal{H}_t^1$ is the set of $v^1\in V^1$ such that there exists $h^1\ge 0$, $v^1(j) = h^1$ for any target $j \in \mathcal{T}_t^{\succ_2}$, $v^1(j) = 0$ for any target $j \in \mathcal{T}_t^{\preceq_2}$,and there does not exist $j\succ_2 t$ and $\widehat{v}^2\in V^2$, $(v^1, \widehat{v}^2, j)$ is 2-WAIC. Similar for $\mathcal{H}_t^2$.
    \label{def:H_partial}
\end{definition}

\begin{theorem}
    $\mathcal{H}_t = \mathcal{H}^1_t \times \mathcal{H}^2_t \times \{t\}$.
    \label{the:prod_thm}
\end{theorem}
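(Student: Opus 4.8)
The plan is to prove the set equality by establishing both inclusions, exploiting the crucial observation that the incentive-compatibility conditions \emph{decouple} across the two defenders once the attacked target $t$ is fixed. The key point is that whether defender $2$ has a profitable deviation depends only on defender $1$'s coverage $v^1$, not on $v^2$: when defender $2$ deviates to some $\widehat{v}^2 \in V^2$, its old coverage is entirely discarded, so the post-deviation total coverage is $v^1 + \widehat{v}^2$. Hence the $2$-IC clause of $(v^1, v^2, t)$ in Definition~\ref{def:nse} is \emph{literally} the non-existence condition (no $j \succ_2 t$ and $\widehat{v}^2$ with $(v^1, \widehat{v}^2, j)$ being $2$-WAIC) appearing in Definition~\ref{def:H_partial} for $\mathcal{H}^1_t$. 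Symmetrically, $1$-IC depends only on $v^2$ and matches the deviation clause defining $\mathcal{H}^2_t$. I would state and justify this decoupling first.

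Next I would dispatch AIC. If $v^1$ is $t$-standard and $v^2$ is $t$-standard, then by definition $v^1(t) = 0$ (since $t \preceq_2 t$) and $v^2(t) = 0$ (since $t \preceq_1 t$), so $v^{total}(t) = 0$. As all coverages are non-negative, $t \in B(v^{total}) = \Argmin_{t' \in [T]} v^{total}(t')$, i.e., $(v^1, v^2, t)$ is AIC. Thus AIC holds automatically whenever both coverages are $t$-standard, and crucially imposes no coupling between $v^1$ and $v^2$.

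With these two facts the inclusions are short. For $\mathcal{H}_t \subseteq \mathcal{H}^1_t \times \mathcal{H}^2_t \times \{t\}$, take $(v^1, v^2, t) \in \mathcal{H}_t$; by definition both coverages are $t$-standard and the profile is an NSE, hence $2$-IC and $1$-IC both hold. The decoupling identifies $2$-IC with the deviation clause of $\mathcal{H}^1_t$, so $v^1 \in \mathcal{H}^1_t$; symmetrically $v^2 \in \mathcal{H}^2_t$. For the reverse inclusion, take $v^1 \in \mathcal{H}^1_t$ and $v^2 \in \mathcal{H}^2_t$; both are $t$-standard by definition, AIC holds by the previous paragraph, $2$-IC holds because membership $v^1 \in \mathcal{H}^1_t$ supplies exactly the required absence of a defender-$2$ deviation, and $1$-IC holds symmetrically from $v^2 \in \mathcal{H}^2_t$. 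Therefore $(v^1, v^2, t)$ is an NSE with both coverages $t$-standard and $t$ attacked, i.e., lies in $\mathcal{H}_t$.

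The only step requiring genuine care—and the one I expect to be the crux—is verifying the decoupling rigorously: one must check that the $2$-WAIC predicate on $(v^1, \widehat{v}^2, j)$ used in Definition~\ref{def:H_partial} references $v^1$ only through the sum $v^1 + \widehat{v}^2$ and never the original $v^2$, and conversely that the $2$-IC clause of Definition~\ref{def:nse} quantifies over all $\widehat{v}^2 \in V^2$ with $v^2$ wholly replaced. Once this is pinned down, every remaining verification is purely definitional, and no further appeal to SSAS is needed beyond what is already embedded in the definitions of $\mathcal{H}^1_t$ and $\mathcal{H}^2_t$.
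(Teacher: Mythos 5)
Your proof is correct and follows essentially the same route as the paper's: both directions are handled by observing that AIC is automatic for $t$-standard coverages (since $v^1(t)=v^2(t)=0$) and that the $2$-IC (resp.\ $1$-IC) clause depends only on $v^1$ (resp.\ $v^2$), coinciding with the deviation clause in Definition~\ref{def:H_partial}. Your explicit justification of this decoupling is a point the paper's proof leaves implicit, but the argument is the same.
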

\begin{proof}
    For any $(v^1, v^2, t)\in \mathcal{H}_t$, $v^1$ and $v^2$ are $t$-standard coverage. Since $\mathcal{H}_t$ only contains NSEs, $(v^1, v^2, t)$ is 1-IC and 2-IC. Thus, $v^1\in \mathcal{H}^1_t, v^2\in \mathcal{H}^2_t$. We now show that for any coverage $v^1\in \mathcal{H}_t^1$ and $v^2\in \mathcal{H}_t^2$, $(v^1, v^2, t)\in \mathcal{H}_t$. First, $(v^1, v^2, t)$ is an NSE. $(v^1, v^2, t)$ is AIC because $v^1(t) = v^2(t) = 0$. Also, $(v^1, v^2, t)$ is 1-IC and 2-IC by definiton of $\mathcal{H}^1_t$ and $\mathcal{H}^2_t$. Second, $v^1$ and $v^2$ are $t$-standard coverage. Thus, $(v^1, v^2, t) \in \mathcal{H}_t$. \hfill$\blacksquare$
\end{proof}

Theorem~\ref{the:prod_thm} decomposes the space of $\mathcal{H}_t$. Next we consider how to compute $\mathcal{H}^1_t$ and $\mathcal{H}^2_t$, which provides us an NSE in $\mathcal{H}_t$. We first consider the reduction of the set containing deviation strategies.
\begin{lemma}
    For a target $t$ and a coverage $v^1 \in \mathcal{H}_t^1$, if there is a $\widehat{v}^2\in V^2$ and $j \succ_2 t$ such that $(v^1, \widehat{v}^2, j)$ is 2-WAIC, we construct $u^2\in V^2$ such that $u^2(k) = 0$ for $k \in \mathcal{T}_t^{\succ_2}$ and $u^2(k) = \min_{k' \preceq_2 t} \widehat{v}^2(k')$ for $k \in \mathcal{T}_t^{\preceq_2}$, then there exists a target $m\succ_2 t$ such that $(v^1, u^2, m)$ is 2-WAIC.
    \label{lem:deviation_reduction}
\end{lemma}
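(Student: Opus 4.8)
The plan is to work directly with the $t$-standard structure of $v^1$ that the membership $v^1 \in \mathcal{H}^1_t$ supplies via Definition~\ref{def:H_partial}: there is some $h^1 \ge 0$ with $v^1(k) = h^1$ for every $k \in \mathcal{T}_t^{\succ_2}$ and $v^1(k) = 0$ for every $k \in \mathcal{T}_t^{\preceq_2}$. Throughout I would write $w = v^1 + \widehat{v}^2$ and $w' = v^1 + u^2$ for the two total coverages, and abbreviate $\mu = \min_{k' \preceq_2 t}\widehat{v}^2(k')$, so that by construction $u^2 = 0$ on $\mathcal{T}_t^{\succ_2}$ and $u^2 = \mu$ on $\mathcal{T}_t^{\preceq_2}$. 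First I would dispose of feasibility: since $u^2(k) = 0 \le \widehat{v}^2(k)$ for $k \succ_2 t$ and $u^2(k) = \mu \le \widehat{v}^2(k)$ for $k \preceq_2 t$ (a minimum is at most each member), we have $u^2 \le \widehat{v}^2$ coordinatewise, whence $u^2 \in V^2$ by the downward closure of SSAS (Assumption~\ref{ass:SSAS}).

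Next I would compute $w'$ explicitly from the $t$-standard form of $v^1$. On $\mathcal{T}_t^{\succ_2}$ we get $w'(k) = h^1 + 0 = h^1$, while on $\mathcal{T}_t^{\preceq_2}$ we get $w'(k) = 0 + \mu = \mu$. Thus $w'$ is a two-valued function, equal to $h^1$ strictly above $t$ (in $\succ_2$) and equal to $\mu$ at or below $t$, so the entire structure of $B(w')$ is governed by the comparison between $h^1$ and $\mu$.

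The crux is to show $\mu > h^1$, which forces the minimum of $w'$ onto $\mathcal{T}_t^{\succ_2}$. Here I would invoke the 2-WAIC of $(v^1, \widehat{v}^2, j)$ with $j \succ_2 t$. Condition (ii) of 2-WAIC states that every $\widehat{t} \in B(w)$ satisfies $\widehat{t} \succeq_2 j \succ_2 t$; hence no target of $\mathcal{T}_t^{\preceq_2}$ lies in $B(w)$, i.e.\ $w(k) > w(j)$ for all $k \preceq_2 t$. Since $v^1(k) = 0$ on $\mathcal{T}_t^{\preceq_2}$, this reads $\widehat{v}^2(k) > w(j)$ for every such $k$, and taking the minimum gives $\mu > w(j)$. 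Finally, because $j \succ_2 t$ and $v^1$ is $t$-standard, $w(j) = h^1 + \widehat{v}^2(j) \ge h^1$, so $\mu > w(j) \ge h^1$. This inequality is the only place the two hypotheses genuinely interact, and I expect it to be the main obstacle: it is exactly the step converting ``the original deviation steered the attack strictly above $t$'' into ``the flattened deviation lifts every target at-or-below $t$ above the common level $h^1$ of the targets above $t$.''

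With $\mu > h^1$ established, the minimum value of $w'$ is $h^1$, attained precisely on $\mathcal{T}_t^{\succ_2}$, which is nonempty since it contains $j$; hence $B(w') = \mathcal{T}_t^{\succ_2}$. Taking $m$ to be the $\succ_2$-least-preferred target of $B(w')$ yields $m \succ_2 t$, and by the choice of $m$ every $\widehat{t} \in B(w')$ obeys $\widehat{t} \succeq_2 m$; together with $m \in B(w')$ this is exactly the assertion that $(v^1, u^2, m)$ is 2-WAIC, completing the argument.
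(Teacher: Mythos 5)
Your proposal is correct and follows essentially the same route as the paper's proof: both derive the key inequality $\min_{k'\preceq_2 t}\widehat{v}^2(k') > v^1(j)+\widehat{v}^2(j)$ from condition (ii) of 2-WAIC together with $v^1$ vanishing on $\mathcal{T}_t^{\preceq_2}$, and then conclude that under $(v^1,u^2)$ the least-covered targets all lie in $\mathcal{T}_t^{\succ_2}$. Your version is marginally more explicit (you verify $u^2\in V^2$ via SSAS and pin down $B(v^1+u^2)=\mathcal{T}_t^{\succ_2}$ exactly using the constant level $h^1$), but these are presentational refinements rather than a different argument.
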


\begin{proof}
    In a 2-WAIC strategy profile $(v^1, \widehat{v}^2, j)$ and $j\succ_2 t$, $v^1(k) + \widehat{v}^2(k) > v^1(j) + \widehat{v}^2(j)$ for any target $k\preceq_2 t$. Since $v^1\in \mathcal{H}_t^1$, $v^1(k) = 0$ for any target $k\preceq_2 t$. So, $\min_{k\preceq_2 t}\widehat{v}^2(k) > v^1(j) + \widehat{v}^2(j)$. We have $u^2(k) > v^1(j) + \widehat{v}^2(j)$ by definition of $u$, so any target $k\in \mathcal{T}_t^{\preceq_2}$ is not the attacked target. There always exists a target $m$ such that $(v^1, u^2, m)$ is 2-WAIC, and here $m \succ_2 t$. \hfill$\blacksquare$
\end{proof}


Lemma \ref{lem:deviation_reduction} reduces the deviation $\widehat{v}^2$ to $u^2$ with a canonical structure that $u^2(k)$ are equal for targets $k \in \mathcal{T}_t^{\preceq_2}$, but $u^2(k) = 0$ elsewhere. The computation of $u^2$ is related to the oracle $\textsc{MaximinCov}$. For convenience, we use $M^i(\mathcal{T})$ instead of $\textsc{MaximinCov}(\mathcal{T}, V^i)$ when the set of coverage is default to be $V^i$. Formally, $M^i(\mathcal{T}) = \max_{v^i \in V^i}\min_{j \in \mathcal{T}} v^i(j)$ for $\mathcal{T} \neq \emptyset$, and $M^i(\mathcal{T}) = + \infty$ for $\mathcal{T} = \emptyset$. With this notation, we give a sufficient and necessary condition for $\mathcal{H}_t^i \neq \emptyset$.

\begin{theorem}
    $\mathcal{H}_t^1 \neq \emptyset$ if and only if $M^1(\mathcal{T}_t^{\succ_2}) \ge M^2(\mathcal{T}_t^{\preceq_2})$. Similarly, $\mathcal{H}_t^2 \neq \emptyset$ if and only if $M^2(\mathcal{T}_t^{\succ_1}) \ge M^1(\mathcal{T}_t^{\preceq_1})$.
    \label{equivalent theorem}
\end{theorem}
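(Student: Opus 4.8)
The plan is to prove the first equivalence; the second follows verbatim by swapping the roles of the two defenders. Throughout I will exploit the fact that any candidate $v^1 \in \mathcal{H}_t^1$ has the rigid structure $v^1(j) = h^1$ on $\mathcal{T}_t^{\succ_2}$ and $v^1(j) = 0$ on $\mathcal{T}_t^{\preceq_2}$ (in particular $v^1(t)=0$), so that checking membership in $\mathcal{H}_t^1$ reduces to deciding whether defender $2$ has a profitable WAIC deviation against this single scalar $h^1$. The engine of the argument is Lemma~\ref{lem:deviation_reduction}: any profitable deviation $\widehat{v}^2$ can be replaced by a canonical one $u^2$ placing a uniform amount $c = \min_{k \preceq_2 t}\widehat{v}^2(k)$ on every target of $\mathcal{T}_t^{\preceq_2}$ and zero on $\mathcal{T}_t^{\succ_2}$, with $u^2 \in V^2$ by SSAS. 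Against such $u^2$ the total coverage is exactly $h^1$ on every target of $\mathcal{T}_t^{\succ_2}$ and exactly $c$ on every target of $\mathcal{T}_t^{\preceq_2}$.

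The key lemma I would establish first is: defender $2$ has a profitable deviation against a level-$h^1$ coverage if and only if $M^2(\mathcal{T}_t^{\preceq_2}) > h^1$. For a canonical $u^2$ to force the realized attack strictly into $\mathcal{T}_t^{\succ_2}$, recall that post-deviation ties are broken \emph{against} defender $2$; hence if any $\preceq_2 t$ target ties for the minimum, the realized target is $\preceq_2 t$ and defender $2$ does not benefit. Thus defender $2$ needs $c > h^1$ strictly. Since the largest uniform value $c$ attainable on all of $\mathcal{T}_t^{\preceq_2}$ is precisely $M^2(\mathcal{T}_t^{\preceq_2})$ (reduce the maximin coverage to a uniform one using SSAS), a successful deviation exists iff $M^2(\mathcal{T}_t^{\preceq_2}) > h^1$. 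Equivalently, a level-$h^1$ coverage lies in $\mathcal{H}_t^1$ iff $h^1 \ge M^2(\mathcal{T}_t^{\preceq_2})$.

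Both directions are then short. For the forward direction, given $v^1 \in \mathcal{H}_t^1$ with level $h^1$, attainability forces $\min_{j \in \mathcal{T}_t^{\succ_2}} v^1(j) = h^1 \le M^1(\mathcal{T}_t^{\succ_2})$ (when $\mathcal{T}_t^{\succ_2} = \emptyset$ this bound is $+\infty$ and the conclusion is immediate), while the absence of a profitable deviation forces $h^1 \ge M^2(\mathcal{T}_t^{\preceq_2})$ by the key lemma; chaining yields $M^1(\mathcal{T}_t^{\succ_2}) \ge M^2(\mathcal{T}_t^{\preceq_2})$. For the reverse direction, assuming $M^1(\mathcal{T}_t^{\succ_2}) \ge M^2(\mathcal{T}_t^{\preceq_2})$, I would construct $v^1$ explicitly: if $\mathcal{T}_t^{\succ_2}=\emptyset$ then $v^1=0$ lies in $\mathcal{H}_t^1$ vacuously (no $j \succ_2 t$ exists to deviate toward); otherwise set $h^1 = M^1(\mathcal{T}_t^{\succ_2})$, placing $h^1$ on $\mathcal{T}_t^{\succ_2}$ and $0$ elsewhere. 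By the definition of $M^1$ and SSAS this $v^1 \in V^1$, and since $h^1 \ge M^2(\mathcal{T}_t^{\preceq_2})$, the key lemma rules out any deviation, so $v^1 \in \mathcal{H}_t^1$.

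I expect the main obstacle to be the strict-versus-weak inequality bookkeeping around the tie-breaking rule. One must verify carefully that at the boundary $c = h^1$ defender $2$ \emph{fails} to benefit, because $t$ itself is then a minimizer and WAIC breaks the tie toward a target $\preceq_2 t \prec_2 m$, so the realized target cannot be any $m \succ_2 t$. This is exactly what turns the threshold into the weak inequality $h^1 \ge M^2(\mathcal{T}_t^{\preceq_2})$ rather than a strict one, and it is where Lemma~\ref{lem:deviation_reduction} earns its keep by letting me reason solely about the canonical uniform deviation instead of an arbitrary $\widehat{v}^2$.
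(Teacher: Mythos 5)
Your proposal is correct and follows essentially the same route as the paper: both directions hinge on comparing the uniform level $h^1$ achievable on $\mathcal{T}_t^{\succ_2}$ (bounded above by $M^1(\mathcal{T}_t^{\succ_2})$ via attainability) against the uniform counter-level $M^2(\mathcal{T}_t^{\preceq_2})$ achievable by defender $2$, with SSAS flattening both to uniform coverages and the against-the-deviator tie-breaking turning the threshold into a weak inequality. Your only departure is organizational --- you package the deviation analysis as a standalone characterization lemma and route it through Lemma~\ref{lem:deviation_reduction}, whereas the paper argues the two directions directly with the same extremal constructions --- so the content is the same.
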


\begin{proof}
    We prove the first claim for $\mathcal{H}_t^1\neq \emptyset$, and it is same for $\mathcal{H}_t^2\neq \emptyset$.
    \begin{enumerate}
        \item When $M^1(\mathcal{T}_t^{\succ_2}) \ge M^2(\mathcal{T}_t^{\preceq_2})$, we consider a coverage $v^1\in V^1$ of defender 1 such that $v^1(k) = M^1(\mathcal{T}_t^{\succ_2})$ for target $k \succ_2 t$, and $v^1(k) = 0$ elsewhere. For any coverage $\widehat{v}^2 \in V^2$, there is a target $m\in \mathcal{T}_t^{\preceq_2}$ such that $v^1(m) + \widehat{v}^2(m) \le M^2(\mathcal{T}_t^{\preceq_2}) \le v^1(k)$ for any target $k\in \mathcal{T}_t^{\succ_2}$. Any target $k\in \mathcal{T}_t^{\succ_2}$ has no less coverage than target $m$ given coverage $(v^1, \widehat{v}^2)$, so there does not exist target $j\succ_2 t$ such that $(v^1, \widehat{v}^2, j)$ is 2-WAIC. Thus, $v^1\in \mathcal{H}_t^1$ by definition.
        \item When $M^1(\mathcal{T}_t^{\succ_2}) < M^2(\mathcal{T}_t^{\preceq_2})$, we want to show that $\mathcal{H}_t^1 = \emptyset$. $M^1(\mathcal{T}_t^{\succ_2}) \neq +\infty$, so for any $v^1\in V^1$, there exists target $j\succ_2 t$ such that $v^1(j) \le M^1(\mathcal{T}_t^{\succ_2})$. Then consider a coverage $\widehat{v}^2\in V^2$ of defender 2 such that $\widehat{v}^2(k) = M^2(\mathcal{T}_t^{\preceq_2})$ for targets $k\in \mathcal{T}_t^{\preceq_2}$ and $\widehat{v}^2(k) = 0$ for other targets. In the coverage profile $(v^1, \widehat{v}^2)$, target $k\in \mathcal{T}_t^{\preceq_2}$ has strictly more coverage than target $j$. Notice that there always exists a target $j'$ such that $(v^1, \widehat{v}^2, j')$ is 2-WAIC. Thus, for any $v^1\in V^1$, there exists $\widehat{v}^2\in V^2$, such that there exists $j'\succ_2 t$ and $(v^1, \widehat{v}^2, j')$ is 2-WAIC. Therefore, $\mathcal{H}_t^1 = \emptyset$. \hfill$\blacksquare$
    \end{enumerate}
\end{proof}

\begin{corollary} 
    For any defender $i\in [2]$, there exists $t\in [T]$ such that $\mathcal{H}_t^i \neq \emptyset$.
    \label{corollary of equivalent}
\end{corollary}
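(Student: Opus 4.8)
The plan is to read the Corollary directly off Theorem~\ref{equivalent theorem}, which already reduces nonemptiness of $\mathcal{H}_t^i$ to a single scalar inequality between maximin coverages. The key observation I would exploit is the boundary convention $M^i(\emptyset) = +\infty$: if I can choose a target $t$ for which the set appearing on the left-hand side of the relevant inequality is empty, then that inequality holds automatically and the corresponding partial set is nonempty.

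Concretely, for defender $1$ I would let $t$ be the target that defender $2$ \emph{most} prefers to be attacked, i.e.\ the unique maximal element of the total order $\succ_2$ (which exists since $\succ_2$ is a total order with no ties on the finite set $[T]$). Then no target is strictly preferred by defender $2$ over $t$, so $\mathcal{T}_t^{\succ_2} = \emptyset$ and hence $M^1(\mathcal{T}_t^{\succ_2}) = +\infty$ by convention. Since $t \in \mathcal{T}_t^{\preceq_2}$, the set $\mathcal{T}_t^{\preceq_2}$ is nonempty and, as coverage is bounded under SSAS, $M^2(\mathcal{T}_t^{\preceq_2})$ is finite. Thus $M^1(\mathcal{T}_t^{\succ_2}) = +\infty \ge M^2(\mathcal{T}_t^{\preceq_2})$, and Theorem~\ref{equivalent theorem} yields $\mathcal{H}_t^1 \neq \emptyset$. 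The case $i=2$ is symmetric: I would take $t$ to be the maximal element of $\succ_1$, so that $\mathcal{T}_t^{\succ_1} = \emptyset$, $M^2(\mathcal{T}_t^{\succ_1}) = +\infty$, and the inequality $M^2(\mathcal{T}_t^{\succ_1}) \ge M^1(\mathcal{T}_t^{\preceq_1})$ of Theorem~\ref{equivalent theorem} holds trivially, giving $\mathcal{H}_t^2 \neq \emptyset$.

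There is essentially no obstacle once Theorem~\ref{equivalent theorem} is in hand; the only point requiring care is that the $+\infty$ convention for $M^i(\emptyset)$ is exactly what makes the chosen target work, and that totality of the preference orders guarantees a unique maximal element so the empty-set case is attained by a single well-defined $t$. Intuitively, the corollary reflects the fact that each defender can always fall back on its partner's single most-desired target: defender $1$ has no incentive to redirect the attack away from a target that defender $2$ already ranks first, so that target trivially admits a $t$-standard coverage for defender $1$.
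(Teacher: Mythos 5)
Your proposal is correct and follows essentially the same route as the paper: pick $t$ to be the other defender's most preferred target so that $\mathcal{T}_t^{\succ_{i'}} = \emptyset$, invoke the convention $M^i(\emptyset) = +\infty$ against the finite value $M^{i'}(\mathcal{T}_t^{\preceq_{i'}}) = M^{i'}([T])$, and conclude via Theorem~\ref{equivalent theorem}. No gaps.
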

\begin{proof}
    Let $i'\neq i$ be another defender.
    There exists a target $t$ such that for any target $j \in [T]$, $t \succeq_{i'} j$. Then $M^{i}(\mathcal{T}_t^{\succ_{i'}}) = M^{i}(\emptyset) = +\infty$. However, $M^{i'}(\mathcal{T}_k^{\preceq{i'}}) = M^{i'}([T])$ is finite. Therefore, $\mathcal{H}_t^i \neq \emptyset$. \hfill$\blacksquare$
\end{proof}

\begin{lemma}
    For any set of targets $\mathcal{T}' \subset \mathcal{T}$, $M_i(\mathcal{T}')\ge M_i(\mathcal{T})$ holds.
    \label{lem: subset}
\end{lemma}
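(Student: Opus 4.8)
The plan is to reduce the claim to the elementary observation that restricting a minimization to a smaller index set can only raise (or preserve) its value, and then to verify that this monotonicity survives the outer maximization defining $M^i$. Recall from the text preceding the lemma that $M^i(\mathcal{T}) = \max_{v^i \in V^i}\min_{j \in \mathcal{T}} v^i(j)$ when $\mathcal{T} \neq \emptyset$, and $M^i(\mathcal{T}) = +\infty$ when $\mathcal{T} = \emptyset$. I interpret the hypothesis $\mathcal{T}' \subset \mathcal{T}$ as $\mathcal{T}' \subseteq \mathcal{T}$, which is what the downstream applications require. I would first dispose of the degenerate cases dictated by the infinity convention: if $\mathcal{T}' = \emptyset$ then $M^i(\mathcal{T}') = +\infty \geq M^i(\mathcal{T})$ trivially, and if $\mathcal{T} = \emptyset$ then $\mathcal{T}' \subseteq \mathcal{T}$ forces $\mathcal{T}' = \emptyset$ as well, so both sides equal $+\infty$.

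For the substantive case, assume $\emptyset \neq \mathcal{T}' \subseteq \mathcal{T}$ and fix an arbitrary coverage $v^i \in V^i$. Because the minimum over $\mathcal{T}'$ ranges over a subset of the targets appearing in the minimum over $\mathcal{T}$, enlarging the index set from $\mathcal{T}'$ to $\mathcal{T}$ can only introduce additional (possibly smaller) values and therefore cannot increase the minimum. This yields the pointwise-in-$v^i$ inequality
\[
\min_{j \in \mathcal{T}'} v^i(j) \;\geq\; \min_{j \in \mathcal{T}} v^i(j).
\]
Since this holds for every $v^i \in V^i$, and the maximization on each side is carried out over the \emph{same} feasible set $V^i$, I would take the maximum over $v^i \in V^i$ on both sides; monotonicity of the supremum under a pointwise inequality then preserves the direction, giving $M^i(\mathcal{T}') \geq M^i(\mathcal{T})$.

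I do not expect any genuine obstacle in this argument, as it is a direct monotonicity fact; the only point demanding care is the empty-set convention, which is precisely why I handle it explicitly at the outset rather than folding it into the general step. One could alternatively phrase the main step via feasibility: any $v^i$ witnessing the value $M^i(\mathcal{T})$ also satisfies $\min_{j \in \mathcal{T}'} v^i(j) \geq \min_{j \in \mathcal{T}} v^i(j) = M^i(\mathcal{T})$, so the optimal value of the $\mathcal{T}'$-problem is at least $M^i(\mathcal{T})$; either formulation closes the proof.
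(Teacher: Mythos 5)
Your proposal is correct and follows essentially the same route as the paper's proof: the pointwise inequality $\min_{j\in\mathcal{T}'}v^i(j)\ge\min_{j\in\mathcal{T}}v^i(j)$ followed by maximizing over $V^i$ (your closing ``feasibility'' phrasing is verbatim the paper's argument, which evaluates the maximizer of $M^i(\mathcal{T})$ on the $\mathcal{T}'$-problem). Your explicit treatment of the empty-set convention is a small point of added care not present in the paper, but the substance is identical.
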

\begin{proof}
    Let $v^{i*}$ be the coverage when $M_i(\mathcal{T})$ achieves the maximum. Let $v^i = v^{i *}$, then we have $\min_{t_j \in \mathcal{T}'} v^i(j) \ge M_{i}(\mathcal{T}).$ Therefore $M_i(\mathcal{T}') \ge M_i(\mathcal{T})$. \hfill$\blacksquare$
\end{proof}

\begin{lemma}
    For two defenders $i, i'$ and targets $t \succ_{i'} j$, if $\mathcal{H}_j^i \neq \emptyset$, then $\mathcal{H}_t^i \neq \emptyset$.
    \label{lem: one side}
\end{lemma}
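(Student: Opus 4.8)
The plan is to reduce the statement to the algebraic characterization of nonemptiness provided by Theorem~\ref{equivalent theorem} and then exploit the monotonicity of the maximin oracle from Lemma~\ref{lem: subset}. By Theorem~\ref{equivalent theorem} applied to defender $i$ (with $i'$ the other defender), $\mathcal{H}_j^i \neq \emptyset$ is equivalent to $M^i(\mathcal{T}_j^{\succ_{i'}}) \ge M^{i'}(\mathcal{T}_j^{\preceq_{i'}})$, and the goal $\mathcal{H}_t^i \neq \emptyset$ is equivalent to $M^i(\mathcal{T}_t^{\succ_{i'}}) \ge M^{i'}(\mathcal{T}_t^{\preceq_{i'}})$. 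Thus it suffices to show that the latter inequality follows from the former whenever $t \succ_{i'} j$.

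First I would record the set inclusions forced by $t \succ_{i'} j$. By transitivity of $\succ_{i'}$, every target strictly preferred to $t$ is strictly preferred to $j$, so $\mathcal{T}_t^{\succ_{i'}} \subseteq \mathcal{T}_j^{\succ_{i'}}$; dually, every target weakly less preferred than $j$ is weakly less preferred than $t$, giving $\mathcal{T}_j^{\preceq_{i'}} \subseteq \mathcal{T}_t^{\preceq_{i'}}$. (Both inclusions are in fact strict, with $t$ itself witnessing the difference, but only the non-strict versions are needed.)

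Next I would invoke Lemma~\ref{lem: subset}, which states that shrinking a target set can only increase the maximin coverage. The first inclusion yields $M^i(\mathcal{T}_t^{\succ_{i'}}) \ge M^i(\mathcal{T}_j^{\succ_{i'}})$, and the second yields $M^{i'}(\mathcal{T}_j^{\preceq_{i'}}) \ge M^{i'}(\mathcal{T}_t^{\preceq_{i'}})$. Chaining these bounds with the hypothesis gives
\[
M^i(\mathcal{T}_t^{\succ_{i'}}) \ge M^i(\mathcal{T}_j^{\succ_{i'}}) \ge M^{i'}(\mathcal{T}_j^{\preceq_{i'}}) \ge M^{i'}(\mathcal{T}_t^{\preceq_{i'}}),
\]
where the middle inequality is exactly the hypothesis $\mathcal{H}_j^i \neq \emptyset$ rewritten via Theorem~\ref{equivalent theorem}. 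The resulting inequality $M^i(\mathcal{T}_t^{\succ_{i'}}) \ge M^{i'}(\mathcal{T}_t^{\preceq_{i'}})$ is precisely the condition that Theorem~\ref{equivalent theorem} equates with $\mathcal{H}_t^i \neq \emptyset$, which closes the argument.

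I do not expect a genuine obstacle here; the entire content is packaged in the two auxiliary results, and the only thing to get right is the orientation of the set inclusions and hence the direction in which Lemma~\ref{lem: subset} is applied. The $\succ_{i'}$ set shrinks as the reference target improves, while the $\preceq_{i'}$ set grows, so the two monotonicity bounds must be oriented so that both push toward the desired inequality. The care required is therefore purely the bookkeeping of which maximin term sits on which side of the chain.
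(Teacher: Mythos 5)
Your proof is correct and follows essentially the same route as the paper's: both reduce the claim to the maximin inequality of Theorem~\ref{equivalent theorem}, observe the two set inclusions induced by $t \succ_{i'} j$, and chain the hypothesis with the monotonicity of Lemma~\ref{lem: subset}. The chained inequality you display is verbatim the one in the paper's proof, so there is nothing to add.
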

\begin{proof}
    Since $t \succ_{i'} j$, we have $\mathcal{T}_t^{\succ_{i'}} \subset \mathcal{T}_j^{\succ_{i'}}$ and $\mathcal{T}_j^{\preceq_{i'}} \subset \mathcal{T}_t^{\preceq_{i'}}$. 
    By $\mathcal{H}_j^i \neq \emptyset$, we have $M^i(\mathcal{T}_j^{\succ_{i'}}) \ge M^{i'}(\mathcal{T}_j^{\preceq_{i'}})$. 
    Then using Lemma \ref{lem: subset}, we get $M^i(\mathcal{T}_t^{\succ_{i'}}) \ge M^i(\mathcal{T}_j^{\succ_{i'}}) \ge M^{i'}(\mathcal{T}_j^{\preceq_{i'}}) \ge M^{i'}(\mathcal{T}_t^{\preceq_{i'}})$. By Theorem \ref{equivalent theorem}, $\mathcal{H}_k^i \neq \emptyset$.  \hfill$\blacksquare$
\end{proof}

\begin{theorem}
$\mathcal{H}\neq \emptyset$.
\end{theorem}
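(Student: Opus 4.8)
The plan is to prove $\mathcal{H}\neq\emptyset$ by producing a single target $k^*$ with $\mathcal{H}_{k^*}\neq\emptyset$. By Theorem~\ref{the:prod_thm} together with $\mathcal{H}=\bigcup_{t}\mathcal{H}_t$, it suffices to find a target $t$ for which \emph{both} $\mathcal{H}^1_t\neq\emptyset$ and $\mathcal{H}^2_t\neq\emptyset$; by Theorem~\ref{equivalent theorem} these two requirements are exactly $M^1(\mathcal{T}^{\succ_2}_t)\ge M^2(\mathcal{T}^{\preceq_2}_t)$ and $M^2(\mathcal{T}^{\succ_1}_t)\ge M^1(\mathcal{T}^{\preceq_1}_t)$. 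Writing $c^i(t)=M^i(\mathcal{T}^{\preceq_i}_t)$ and $F(t)=\max\{c^1(t),c^2(t)\}$, I would take $k^*=\argmin_{t}F(t)$, mirroring the construction in the proof of Theorem~\ref{thm:multiNSEexist}, and (as there) first assume the minimizer is unique.

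To verify the inequalities at $k^*$, assume w.l.o.g.\ that $F(k^*)=c^1(k^*)\ge c^2(k^*)$. For $\mathcal{H}^1_{k^*}\neq\emptyset$ I must show $M^1(\mathcal{T}^{\succ_2}_{k^*})\ge M^2(\mathcal{T}^{\preceq_2}_{k^*})=c^2(k^*)$; if $\mathcal{T}^{\succ_2}_{k^*}=\emptyset$ this is trivial since $M^1(\emptyset)=+\infty$, so let $q$ be the target of $\mathcal{T}^{\succ_2}_{k^*}$ that defender $1$ most prefers to be attacked. Every target of $\mathcal{T}^{\succ_2}_{k^*}$ is $\preceq_1 q$, so $\mathcal{T}^{\succ_2}_{k^*}\subseteq\mathcal{T}^{\preceq_1}_q$ and Lemma~\ref{lem: subset} gives $M^1(\mathcal{T}^{\succ_2}_{k^*})\ge M^1(\mathcal{T}^{\preceq_1}_q)=c^1(q)$. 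Since $q\succ_2 k^*$ we have $\mathcal{T}^{\preceq_2}_{k^*}\subseteq\mathcal{T}^{\preceq_2}_q$, whence $c^2(q)\le c^2(k^*)\le F(k^*)$; as $q\neq k^*$ and $k^*$ is the unique minimizer, $F(q)>F(k^*)$, which forces $c^1(q)=F(q)>F(k^*)\ge c^2(k^*)$. Chaining, $M^1(\mathcal{T}^{\succ_2}_{k^*})\ge c^1(q)>c^2(k^*)$, as required. The symmetric argument, choosing $r\in\mathcal{T}^{\succ_1}_{k^*}$ to be defender $2$'s most preferred target there, uses $\mathcal{T}^{\succ_1}_{k^*}\subseteq\mathcal{T}^{\preceq_2}_r$ and $c^1(r)\le c^1(k^*)=F(k^*)$ to conclude $M^2(\mathcal{T}^{\succ_1}_{k^*})\ge c^2(r)=F(r)>F(k^*)=M^1(\mathcal{T}^{\preceq_1}_{k^*})$, giving $\mathcal{H}^2_{k^*}\neq\emptyset$. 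Theorem~\ref{the:prod_thm} then yields $\mathcal{H}_{k^*}=\mathcal{H}^1_{k^*}\times\mathcal{H}^2_{k^*}\times\{k^*\}\neq\emptyset$.

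The main obstacle, and the only nontrivial gap in the above, is the uniqueness/tie-breaking assumption on $\argmin_t F(t)$: the step that the \emph{weak} bound $c^1(q)\le F(k^*)$ combines with $F(q)>F(k^*)$ to give the \emph{strict} $c^1(q)>c^2(k^*)$ relies essentially on $k^*$ being a \emph{strict} minimizer, and fails under ties. I would handle this exactly as in the proof of Theorem~\ref{thm:multiNSEexist} (its Step~1), selecting among $\Argmin_t F(t)$ a target that is minimal for the transitive tie-breaking relation introduced there, or equivalently via an infinitesimal lexicographic perturbation of $F$ that leaves the maximin values $M^i(\cdot)$ used in the verification untouched. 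I note that a purely combinatorial alternative is available through Lemma~\ref{lem: one side} and Corollary~\ref{corollary of equivalent}, which show that $\{t:\mathcal{H}^1_t\neq\emptyset\}$ and $\{t:\mathcal{H}^2_t\neq\emptyset\}$ are nonempty up-sets for $\succ_2$ and $\succ_1$ respectively; however, forcing these up-sets to intersect still requires the same monotonicity-of-$c^i$ reasoning (the two orders can be ``separated'' using only the up-set structure), so I prefer the direct $\argmin_t F(t)$ construction above.
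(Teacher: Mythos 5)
Your proof is correct, but it takes a genuinely different route from the paper's. The paper argues non-constructively: it sets $\mathcal{T}^1=\{t:\mathcal{H}^2_t\neq\emptyset\}$ and $\mathcal{T}^2=\{t:\mathcal{H}^1_t\neq\emptyset\}$, shows via Corollary~\ref{corollary of equivalent} and Lemma~\ref{lem: one side} that these are nonempty up-sets for $\succ_1$ and $\succ_2$ respectively, and then assumes they are disjoint to derive the contradictory chain $M^2(\mathcal{T}^1)<M^1(\overline{\mathcal{T}^1})\le M^1(\mathcal{T}^2)<M^2(\overline{\mathcal{T}^2})\le M^2(\mathcal{T}^1)$ from Theorem~\ref{equivalent theorem} and Lemma~\ref{lem: subset}. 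You instead port the potential-function construction of Theorem~\ref{thm:multiNSEexist} to the $n=2$ case without MS, exhibiting an explicit equilibrium target $k^*=\argmin_t F(t)$; both arguments rest on the same scaffolding (Theorems~\ref{the:prod_thm} and \ref{equivalent theorem}, Lemma~\ref{lem: subset}), but yours is constructive and makes visible why the MS assumption is dispensable for two defenders. The tie-breaking gap you flag is real, and your proposed fix does close it: the transitivity argument for $\triangleleft$ in the appendix proof of Theorem~\ref{thm:multiNSEexist} nowhere uses MS, and if $F(q)=F(k^*)=\underline{F}$ for your witness $q\in\mathcal{T}^{\succ_2}_{k^*}$, then $\triangleleft$-minimality of $k^*$ yields a defender $i$ with $c^i(q)=\underline{F}$ and $k^*\succ_i q$; since $q\succ_2 k^*$ forces $i=1$, you get $c^1(q)=\underline{F}=F(k^*)\ge c^2(k^*)$, which is exactly the weak inequality Theorem~\ref{equivalent theorem} requires (and symmetrically for $r$). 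One small remark: your concluding \textsc{wlog} normalization $F(k^*)=c^1(k^*)$ is never actually needed, since the two verifications are already symmetric in form.
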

\begin{proof} 
    Suppose $\mathcal{T}^1 = \{t\in [T] | \mathcal{H}_t^2 \neq \emptyset\}$ and $\mathcal{T}^2 = \{t\in [T] | \mathcal{H}_t^1 \neq \emptyset\}$. 
    By Corollary \ref{corollary of equivalent}, $\mathcal{T}^1, \mathcal{T}^2 \neq \emptyset$ and $\mathcal{T}^2 \neq \emptyset$. 
    There is a unique ${j_1} \in \mathcal{T}^1$ (${j_2} \in \mathcal{T}^2$) such that for any $t \in \mathcal{T}^1$, $t \succeq_1 {j_1}$ (for any $t \in \mathcal{T}^2$, $t \succeq_2 {j_2}$).
    It is sufficient to show that $\mathcal{T}^1 \cap \mathcal{T}^2 \neq \emptyset$. If we show this, then there is a target $j\in \mathcal{T}^1\cap \mathcal{T}^2$. We have $\mathcal{H}_j^1\neq \emptyset, \mathcal{H}_j^2\neq\emptyset$, and thus $\mathcal{H}\neq \emptyset$. Next we prove $\mathcal{T}^1 \cap \mathcal{T}^2 \neq \emptyset$ by contradiction.
    
    Suppose $\mathcal{T}^1 \cap \mathcal{T}^2 = \emptyset$. Since $\mathcal{T}^1, \mathcal{T}^2 \neq \emptyset$ and $\mathcal{T}^1 \cap \mathcal{T}^2 = \emptyset$, we have $\overline{\mathcal{T}^1} = [T]\setminus \mathcal{T}^1\neq \emptyset$, and $\overline{\mathcal{T}^2} = [T]\setminus \mathcal{T}^2\neq \emptyset$. 
    There is a unique ${k_1}\in \overline{\mathcal{T}^1}$ (${k_2}\in \overline{\mathcal{T}^2}$) such that for any $t\in \overline{\mathcal{T}^1}$, ${k_1} \succeq_1 t$ (for any $t\in \overline{\mathcal{T}^2}$, ${k_2} \succeq_2 t$). 
    By Lemma \ref{lem: one side}, for any target $t \succ_1 {j_1}$, $t \in \mathcal{T}^1$, and for any target $t \succ_2 {j_2}$, $t \in \mathcal{T}^2$. Therefore, ${j_1} \succ_1 {k_1}$ and ${j_2} \succ_2 {k_2}$. Then we have $\mathcal{T}_{k_1}^{\succ_1} = \mathcal{T}^1, \mathcal{T}_{k_1}^{\preceq_1} = \overline{\mathcal{T}^1}$, and $\mathcal{T}_{k_2}^{\succ_2} = \mathcal{T}^2, \mathcal{T}_{k_2}^{\preceq_2} = \overline{\mathcal{T}^2}$. By Theorem \ref{equivalent theorem} and $\mathcal{H}^2_{k_1} = \emptyset, \mathcal{H}^1_{k_2} = \emptyset$, we have $M^2(\mathcal{T}^1) < M^1(\overline{\mathcal{T}^1})$ and $M^1(\mathcal{T}^2) < M^2(\overline{\mathcal{T}^2})$. By $\mathcal{T}^1\cap \mathcal{T}^2 = \emptyset$, $\mathcal{T}^2 \subset \overline{\mathcal{T}^1}$ and $\mathcal{T}^1 \subset \overline{\mathcal{T}^2}$. By Lemma \ref{lem: subset}, $M^1(\overline{\mathcal{T}^1})< M^1(\mathcal{T}^2)$ and $M^2(\overline{\mathcal{T}^2})< M^2(\mathcal{T}^1)$. Combining this with previous inequality, we get $M^2(\mathcal{T}^1)< M^1(\mathcal{T}^2)$ and $M^1(\mathcal{T}^2) < M^2(\mathcal{T}^1)$, which are contradictory. 
    
    Therefore, $\mathcal{T}^1\cap \mathcal{T}^2 \neq \emptyset$. There is a target $j\in \mathcal{T}^1\cap \mathcal{T}^2$. By definition of $\mathcal{T}^1, \mathcal{T}^2$, we have $\mathcal{H}_j^1, \mathcal{H}_j^2\neq\emptyset$. Thus, $\mathcal{H}_j\neq \emptyset$ by Theorem \ref{the:prod_thm}. $\mathcal{H}_j\subset \mathcal{H}$, so $\mathcal{H}\neq \emptyset$. \hfill$\blacksquare$
\end{proof}

\subsection{Proof of Lemma~\ref{lem: no same cover}}
\begin{proof}
    By Theorem \ref{the:prod_thm} in the appendix, $\mathcal{H}_t = \mathcal{H}^1_t \times \mathcal{H}^2_t \times \{t\}$. $\mathcal{H}_t \neq \emptyset$, so $\mathcal{H}^1_t, \mathcal{H}^2_t \neq \emptyset$. Since $j \succ_1 t, j \succ_2 t$, further by Lemma \ref{lem: one side} in the appendix, we have $\mathcal{H}^1_j, \mathcal{H}^2_j \neq \emptyset$. Use Theorem \ref{the:prod_thm} again, $\mathcal{H}_j = \mathcal{H}^1_j \times \mathcal{H}^2_j \times \{j\}$. Therefore, $\mathcal{H}_j \neq \emptyset$. \hfill$\blacksquare$
\end{proof}

\subsection{Proof of Theorem~\ref{thm:multiNSEexist} where $F(t)\neq F(t')$}
\begin{proof}


Here, $\Argmin_{t} F(t)$ may not be a singleton, and we
handle the edge case of selecting a specific $k^*$ from $\Argmin_{t} F(t)$.
Consider the set $\mathcal{F} = \Argmin_{t} F(t)$ and let $\underline{F} = \min_{t'\in [T]} F(t')$. For any $t, j\in \mathcal{F}, t\neq j$, we say $t \triangleleft j$ if for any defender $i$, $\mathcal{M}_{i, \pi_i(t)} = \underline{F} \implies t \succ_i j$. 
We show that $\triangleleft$ is transitive. Suppose $t\triangleleft j$ and $j\triangleleft j'$. 
Hence, $t \succ_i j$ for any defender $i$ where $\mathcal{M}_{i, \pi_i(t)} = \underline{F}$. 
On one hand, $\mathcal{M}_{i, \pi_i(j)}\ge \mathcal{M}_{i, \pi_i(t)} \ge \underline{F}$ since $t\succ_i j$. On the other hand, $j\in \mathcal{F}$, so $\mathcal{M}_{i, \pi_i(j)} \le \underline{F}$. Therefore, $\mathcal{M}_{i, \pi_i(j)} = \underline{F}$. 
Using a similar argument on $j\triangleleft j'$ 
gives us 
$\mathcal{M}_{i, \pi_i(t)} = \mathcal{M}_{i, \pi_i(j')} = \underline{F}$ and $t\succ_i j\succ_i j'$. 
So, $t\triangleleft j'$ and $\triangleleft$ is transitive.

We claim there exists a target $k^*\in \mathcal{F}$ such that $\nexists t\in \mathcal{F}$, $t \triangleleft k^*$. 
If not, then for any $k_q\in \mathcal{F}$, there exists $k_{q+1}\in \mathcal{F}$ where $k_{q+1} \triangleleft k_q$. 
Thus, we can construct an infinite sequence $\cdots \triangleleft k_2 \triangleleft k_1$. Since $\triangleleft$ is transitive, each $k_q$ must be distinct, which is not possible with a finite number of targets. Now we select a target $k^*$.


Once $k^*$ is selected, we set $v^i(k^*) = 0$ for all defenders $i$. For any $t\neq k^*$, we have $F(t) \ge F(k^*)$. For a fixed $t$, there are two cases: 
(i) if $F(t) > F(k^*)$, for any defender $i$ such that $\mathcal{M}_{i, \pi_i(t)} = F(t)$, we have $k^* \succ_i t$, since $\mathcal{M}_{i, \pi_i(t)} > F(k^*) \ge \mathcal{M}_{i, \pi_i(k^*)}$. 
(ii) If $F(t) = F(k^*)$, there is at least one defender $i$ satisfying $\mathcal{M}_{i, \pi_i(t)} = F(t)$ such that $k^* \succ_i t$, otherwise $t\triangleleft k^*$, contradicting the choice of $k^*$. Thus, for any $t\neq k^*$, there always exists a defender $i$ where $\mathcal{M}_{i, \pi_i(t)} = F(t)$ and $k^* \succ_i t$. 
Our construction sets $v^i(t) = F(k^*)$ and $v^{i'}(t) = 0$ for $i'\neq i$. 
Showing $\mathbf{v}$ is feasible and $(\mathbf{v}, k^*)$ is an NSE is same with the main proof of Theorem~\ref{thm:multiNSEexist}.\hfill$\blacksquare$

\end{proof}

%
%
%
\bibliographystyle{splncs04}
\bibliography{main}
%




\end{document}
